\documentclass[11pt,oneside,reqno,titlepage, hyperfootnotes=false]{amsart}
\usepackage{amsmath}
\usepackage{lmodern}
\usepackage{lineno}
\usepackage[T1]{fontenc}
\usepackage[utf8]{inputenc}
\usepackage[a4paper]{geometry}
\usepackage{color}
\usepackage{mathtools}
\usepackage{dsfont}
\usepackage{amstext}
\usepackage{booktabs}
\usepackage{enumerate} 
\usepackage{amsthm}
\usepackage{amssymb}
\usepackage{bm}
\usepackage{setspace}
\usepackage{appendix}
\usepackage[authoryear]{natbib}
\usepackage[unicode=true,pdfusetitle,
 bookmarks=true,bookmarksnumbered=false,bookmarksopen=false,
 breaklinks=false,pdfborder={0 0 0},pdfborderstyle={},backref=false,colorlinks=true]
 {hyperref}
\usepackage{ragged2e}
\makeatletter
\numberwithin{equation}{section}
\numberwithin{figure}{section}

\usepackage{amsfonts}
\usepackage{amstext}
\usepackage{amsthm}

\usepackage{hyperref}
\hypersetup{
    colorlinks = true,
    citecolor = black
}

\usepackage{threeparttable}
\usepackage{graphicx}
\usepackage{enumerate}
\usepackage{listings}
\usepackage{subcaption}

\DeclareMathOperator*{\argmax}{arg\,max}

\usepackage{float}

\newtheorem{thm}{Theorem}

\newtheorem{asm}{Assumption}
\newtheorem{lem}{Lemma}
\newtheorem*{lem*}{Lemma}
\newtheorem*{thm*}{Theorem}

\theoremstyle{definition}

\author{Karun Adusumilli$^\dagger$}
\author{Abhi Vemulapati$^\dagger$}

\makeatother
\begin{document}
\title{Designing persuasive experiments}

\begin{abstract}
     Incentives in experimental design are often misaligned: experimenters design and finance experiments to seek regulatory approval, while regulators seek to maximize social-welfare. We propose a framework to resolve this conflict, wherein regulators set a minimum welfare threshold, and experimenters optimize designs subject to this constraint. It requires no knowledge of experimenters' private preferences or costs and mitigates strategic Bayesian persuasion. Under normal priors, Neyman-allocation is always the optimal-sampling strategy, regardless of specific objectives. We also characterize the optimal stopping-rule. A numerical study calibrated to clinical-trial data shows sample-size reductions of over 48\% relative to classical designs attaining the same social-welfare. 
\end{abstract}

\thanks{\textit{This version}: \today{}\\
We would like to thank Frank Diebold, Kevin He, Frank Schorfheide and seminar participants at UC San Diego and University of Pennsylvania for helpful comments.\\
$^\dagger$Department of Economics, University of Pennsylvania}

\maketitle

\section{Introduction}

In many scientific domains, researchers design experiments with the
goal of persuading a regulator to take a specific action. For
instance, in clinical trials, pharmaceutical companies conduct studies
to convince regulators---such as the FDA---to approve a new drug.
Similarly, in development economics, researchers run experiments to
influence policymakers to adopt particular interventions. In these
settings, the preferences of the experimenter can often diverge significantly from those of the regulators. A pharmaceutical company, for example,
may have a strong incentive to secure drug approval regardless of
the drug's actual efficacy.

At the same time, in such strategic environments, the researcher typically bears
the full cost of experimentation. The experimental design is proposed
to the regulator before the experiment begins, and the researcher
is able to fully commit to this design. The regulator, in turn, can
either approve the proposed design or request a new one. This article
studies which experiments a researcher should suggest and a regulator should approve 
when the interests of researchers
and regulators may not be aligned, yet the cost and control over the
design lie entirely with the researcher.

In practice, experiments in such settings are typically framed within
the hypothesis testing paradigm. For example, in clinical trials,
pharmaceutical companies commit to a specific test statistic and agree
to report its value upon completion of the study. Regulatory approval
is then granted if the test statistic exceeds a predetermined threshold.
In designing such experiments, firms must navigate a trade-off: minimizing
the cost of experimentation while maximizing the power of the test---which indirectly
affects the probability of regulatory approval---subject to maintaining
control over the test's size.

However, the hypothesis testing paradigm is often criticized
for not explicitly aligning with social welfare objectives. For instance, it is standard practice to apply a 5\% significance level across clinical
trials, regardless of the specific context or stakes involved. Such
a uniform threshold can be far from socially optimal. Over the past 15 years, the US Food and Drug Administration (FDA) has approved several drugs, such as Makena (Hydroxyprogesterone Caproate), Exondys 51 (Eteplirsen), and, perhaps more controversially, Aduhelm (Aducanumab), even though the supporting evidence of clinical benefit under conventional (i.e., 5\% significance level) standards was limited. In each of these cases, the justification emphasized the urgent demand for new therapies, highlighting how welfare considerations are frequently used to justify the departure from strict adherence to traditional statistical thresholds. Moreover, even if there were
consensus on the appropriate size of a test, designing experiments
solely to maximize power may still fall short of achieving optimal
outcomes from a societal perspective.

For these and related reasons, the FDA has recently promoted a transition toward Bayesian clinical trials. This stance builds on the commitments to Bayesian designs established by the Prescription Drug User Fee Act and the 21st Century Cures Act, both enacted by the US Congress in 2016, and has been codified in new draft guidance (\citealp{us2026guidance}) intended “to encourage the use of Bayesian statistics in clinical trial design and the readout of results” (\citealp{doshi2026fda}).\footnote{In an accompanying 1.09-minute video  (\citealp{Makary2026}) shared on X on 12 January 2026, FDA commissioner Marty Makary stated: “I'm pleased to announce the FDA is now open to Bayesian statistics. We are putting out new guidance to encourage the use of Bayesian statistics in clinical trial design and readout of results... it is a leap forward beyond the frequentist model of analyzing data".} In parallel, the European Medicines Agency has also committed to releasing a draft `reflection paper' outlining its position on Bayesian clinical trials by 2027.

Within this move toward Bayesian inference, the “statistical threshold” for regulatory approval is usually defined as the posterior probability that the treatment effect exceeds a pre-specified value. For example, the Pfizer-BioNTech COVID-19 Vaccine (Comirnaty) was authorized because the posterior probability that its vaccine efficacy was greater than 30\% surpassed a 99.5\% threshold. Nevertheless, much like the conventional 5\% significance level, it remains unclear how such thresholds ought to be selected, and it is also unlikely that a universal threshold of this kind would be socially optimal. 

The choice of decision thresholds and associated stopping rules is particularly crucial in the context of adaptive clinical trials. In fact, encouraging the use of adaptive trials has been a key objective of the FDA since 2004, when it launched the `Critical Path Initiative' to increase the efficiency of drug development (see, e.g., the draft guidance on adaptive trials in \citealp{us2019guidance}). Since pharmaceutical companies propose the trial design, the combination of Bayesian and adaptive elements expands the scope for Bayesian persuasion---an inherent advantage experimenters enjoy in strategic settings due to their control over the information structure. If there were no restrictions on the kinds of designs that could be used, experimenters would always choose to end the trial exactly when regulators are just indifferent between approving and rejecting it, i.e., as soon as the estimated treatment effect becomes (barely) positive. As a result, in the absence of such limits, the overall social value the FDA derives from these trials could actually be lower than what is already achieved under traditional, non-adaptive Randomized Controlled Trial (RCT) designs.

In this article, we advocate using welfare
maximization as the primary criterion for approving experimental designs.
Concretely, we propose that the regulator impose a minimum requirement
on the expected social welfare that a proposed design must generate
in order to be approved. The experimenter, in turn, selects a design
that maximizes their own objective, subject to this welfare constraint.
We characterize the form of optimal experimental designs
within this framework, explicitly allowing for misaligned preferences between experimenters
and regulators.

Our proposal has several key features that make it particularly suitable
for real-world application. First, imposing a welfare threshold curtails the
scope for Bayesian persuasion by bounding the experimenter's ability to bias outcomes through strategic evidence design. We further recommend a natural benchmark for calibrating this threshold: the expected welfare that a regulator would obtain under the Bayesian prior if the experimenter were limited to running a conventional RCT.

Moreover, our proposal does not require the regulator to know the experimenter's preferences or the costs of running the experiment. This is crucial, because such information is typically unverifiable and experimenters have strong incentives
to misrepresent it. Indeed, we suspect that the widespread adoption
of the 5\% significance level in hypothesis testing may arise precisely from
 the difficulty of eliciting true preferences
and cost structures. While some recent decision-theoretic rationales (e.g.,
\citealp{tetenov2016}; \citealp{viviano2021model}) for hypothesis
testing tie the choice of test size directly to experimental
costs, such links are not usually observed in real-world applications.

In our framework, we allow the experimenter to jointly optimize the sampling allocation and the stopping rule; the latter endogenously determines the total sample size. Utilizing a continuous-time framework under Gaussian priors, we characterize an optimal experimentation policy that exhibits several notable properties.

First, the optimal sampling rule is always the Neyman allocation, just as in classical designs, regardless of the specific utility functions of the experimenter or the regulator. This result is based on an extension of \cite{liang-mu-syrgkanis-ecta-2022} to a multi-agent setting. The optimality of Neyman allocation arises from a fundamental alignment of interests regarding informational efficiency: despite their divergent ultimate objectives, both the experimenter and the regulator seek to minimize the posterior variance of the treatment effect relative to the cost of experimentation. Consequently, the allocation of subjects across treatment arms is governed solely by variance-minimization.

Second, the optimal stopping strategy diverges from the single-agent benchmark if and only if the experimenter derives an additive benefit from approval that is independent of the treatment effect's magnitude. In the presence of this benefit, the optimal stopping boundaries become asymmetric: the threshold for approval is shifted downward by a constant factor relative to the rejection threshold, reflecting the experimenter's systematic bias toward treatment adoption. 

Third, the optimal stopping rule---and by extension, the entire experimental strategy---is invariant to the specific weighting of Type I (approving the treatment when the true treatment effect is negative) and Type II (rejecting the treatment when the true treatment effect is positive) errors. The structural form of the optimal experimentation strategy thus remains unchanged whether the regulator and experimenter are expected utility maximizers or regret-minimizers. This robustness stems from the fact that while error-weighting shifts the relative costs of misclassification, it does not alter the fundamental trade-off between the marginal cost of sampling and the marginal informational gain.

In a numerical analysis with priors calibrated using clinical trial data, we show that our proposed strategies reduce expected sample sizes by more than 48\% compared with traditional RCT designs that achieve the same welfare level. For pharmaceutical firms, this implies nearly \$6 million in savings on experimentation costs. 

While the preceding results are derived within a continuous-time framework, we demonstrate that the finite-sample counterparts of these strategies remain asymptotically optimal under `small-cost asymptotics' (\citealp{adusumilli2025}).

\subsection{Related literature}
If we remove the welfare constraint, the framework presented here collapses to the classic model of costly Bayesian persuasion (\citealp{kamenica-gentzkow}). It is precisely to emphasize this link that we refer to it as persuasive experimental design.

Our results add to the broad literature on information acquisition and optimal stopping with costly sampling. The seminal contributions of \cite{wald} and \cite{arrow} derive optimal stopping rules for welfare maximization in binary state spaces. \cite{Bather_Walker_1962} study the more general problem of selecting an optimal stopping time for a Brownian motion, where the cost of any path depends solely on its terminal time and position. More recent work has shifted toward richer distributional frameworks and more flexible experimentation.
\cite{fudenberg-strack-strzalecki2018} characterize the welfare-maximizing stopping rule for choosing between two treatments under Gaussian priors. \cite{liang-mu-syrgkanis-ecta-2022} extend these insights by endogenizing the sampling strategy and demonstrate that Neyman allocation is optimal under symmetric Gaussian priors, regardless of the experimenter's utility function. A common element across these contributions, however, is the focus on a single decision-maker. Our work adds to this literature by determining the optimal experimentation policy in a multi-agent environment in which the experimenter and the regulator have conflicting objectives. 

This article thus also contributes to a growing line of economic research that examines how incentives and the presence of multiple agents shape empirical work. \cite{andrews2021model} study how statistical findings are communicated to an audience with diverse priors. \cite{tetenov2016}, \cite{viviano2021model}, and \cite{bates2023incentive} investigate hypothesis testing when incentives are misaligned and emphasize how controlling size can help `screen out' experimenters who aim to put forward low-quality treatments. In our setting, the imposition of a prior and a welfare constraint serves a similar purpose, indirectly functioning as a screening mechanism for undesirable participants. In addition, we study the full experimental design problem---including sampling and stopping rules---rather than focusing exclusively on approval thresholds. 

In more closely related work, \cite{yoder2022designing} studies settings in which a principal hires an experimenter to perform research but is unable to contract on the experimenter's private type (such as their utility function, research costs, and so forth). This differs from our setup, where the experimenter undertakes research with the goal of persuading a regulator; however, we do share the same contracting constraints. \cite{henry2019research} examine a related environment in which the experimentation strategy is determined as the Nash equilibrium of a game between the experimenter and the regulator. This differs from our framework, where the regulator is not permitted to directly optimize against any particular experimenter: the same constraints must apply to all experimenters, and the regulator cannot write contracts contingent on the experimenter's type. Consequently, in our setting, the regulator is only able to impose a minimum welfare threshold. Moreover, our state space is substantially richer than in \cite{henry2019research}, featuring two continuous treatments rather than binary states, and the decision space encompasses sampling strategies in addition to optimal stopping.

Our analysis of the continuous-time experimental design problem makes use of the formalism introduced in \cite{adusumilli-continuous-arts}. We then rely on continuous-time asymptotic representations, also developed in that work, to translate the results to discrete-time environments. The particular asymptotic regime we employ is based on `small-cost asymptotics', introduced in \cite{adusumilli2025}. In this setting, one assumes that the marginal cost of each observation is negligible relative to the benefit of approval, with the cost-benefit ratio converging to zero at a specified rate. We demonstrate that this assumption is supported by empirical estimates from the clinical trial literature. More broadly, small-cost asymptotics fall under the umbrella of local asymptotics, which analyze local perturbations around reference parameter values. Other recent studies that use local asymptotic techniques to analyze adaptive experiments include \cite{wagerxu}, \cite{fanglynn}, and \cite{hirano2025asymptotic}.

\section{General Setup}\label{Sec:General_setup}

We start by presenting a general framework for studying experimental design when the experimenter and the regulator have divergent objectives. Since the canonical application of this framework is the design of Phase 3 clinical trials, we use this as a running example throughout this article.

The model involves two agents: a regulator (Alice) and an experimenter (Bob). Bob proposes a project (e.g., a new drug) whose utility depends on an underlying state of the world $s \in \mathcal{S}$. A simple example could be the binary state space $\mathcal{S} = \{0, 1\}$, where $s=1$ indicates an effective drug and $s=0$ indicates an ineffective one; but ultimately, we are interested in continuous $s$.

Both agents are expected utility maximizers. Let $\Delta(\mathcal{S})$ denote the set of all beliefs, i.e., probability measures on $\mathcal{S}$, endowed with the metric of weak convergence. The space of distributions over beliefs (i.e., distributions over probability measures) is represented by $\Delta(\Delta(\mathcal{S}))$. 

\subsection{Experimental design}

We assume that both Alice and Bob agree on a prior $p_0$. To convince Alice to approve his proposal, Bob submits an experimental protocol. If Alice approves the protocol, the experiment is conducted, yielding a signal that allows both agents to update their beliefs, ex-post, to a posterior $p \in \Delta(\mathcal{S})$.

Employing the terminology of Blackwell experiments, we characterize an experiment by the ex-ante distribution over posterior beliefs $Q \in \Delta(\Delta(\mathcal{S}))$ that it induces. Posterior consistency demands that $Q$ satisfy 
\begin{equation}
\int p dQ(p)=p_{0}.\label{eq:martingale=000020constraint}
\end{equation}
This is the minimal and only constraint on $Q$ that experimentation
imposes (at this point, we leave the class of possible experiments
unrestricted and allow for garbling of signals).

\subsection{Utilities and welfare}

After the experiment ends, Alice selects an action: accept ($\delta = 1$) or reject ($\delta = 0$). Alice's utility, $u_A(s, \delta)$, is determined by the state $s$ and her chosen action. For a given posterior $p$, Alice selects $\delta^*(p) \in \arg\max_{\delta \in \{0,1\}} \int u_A(s, \delta) \, dp(s)$, which yields the following ex-post expected welfare:
$$
V_A(p) \coloneq \max_{\delta \in \{0, 1\}} \int_{\mathcal{S}} u_A(s, \delta) \, dp(s).
$$

Whenever Alice is indifferent between $\delta \in \{0,1\}$, we adopt the tie-breaking convention that she chooses Bob's preferred action, that is, $\delta = 1$. In fact, as shown in Section \ref{subsec:tie-breaking}, when information arrives gradually, we can, without mathematical loss of generality, suppose that Alice selects Bob's preferred action whenever she is indifferent, no matter what her actual tie-breaking rule may be.

Bob's overall utility, which is of the form 
$$
u_B(s,\delta) - C(Q),
$$
consists of two components. The first component, $u_{B}(s,\delta)$,
depends on the state of the world and Alice's action. For example, Bob may receive a constant benefit $B$ if the drug is approved: $u_B(s, \delta) = B \cdot \mathds{1}\{\delta = 1\}$. The second component captures the cost of the experiment, which is borne by Bob. Following \cite{denti2022experimental}, we model this ex-ante, as a function, $C:\Delta(\Delta(\mathcal{S})) \to [0,\infty)$, of the distribution over posteriors $Q$. Intuitively, $C(Q)$ represents the least-expensive way for Bob to run a specific experiment---which corresponds to some unique distribution over posteriors $Q$---with the convention that $C(Q) = \infty$ if the posterior distribution is unattainable, e.g., due to limitations on what experiments are allowed or are feasible. 

Often, the cost $C(Q)$ is posterior separable, i.e., it can be written as 
\[
C(Q) = \int\phi(p)dQ(p) - \phi(p_0),
\]
where $\phi(\cdot)$ is convex. Even though we measure information cost in an ex-ante
sense, it is known that posterior separable costs can be induced by specific running cost
functions (\citealp{mor}). Under a posterior separable cost, Bob's ex-ante expected welfare, for a given distribution over posteriors $Q$, is given by:
$$
\int [V_B(p) - \phi(p)] \, dQ(p), \quad \text{where} \\
\quad V_B(p) \coloneq \int u_B(s, \delta^*(p)) \, dp(s).
$$

\subsection{The regulator constraint and Bob's optimization problem}

Given the misalignment of incentives between Alice and Bob, it is suboptimal for Alice to unconditionally approve Bob's experimental protocols, as this allows Bob to leverage Bayesian persuasion to his advantage. At the same time, as in \cite{yoder2022designing}, contracting directly on Bob's utility or cost functions is infeasible for two reasons.  First, these parameters constitute private information and are subject to potential misrepresentation. For instance, Bob may claim that an experiment is more expensive than it really is in order to induce Alice to accept smaller sample sizes. Second, legal regulations often obligate Alice to impose identical conditions on all experimenters, which prevents her from tailoring her requirements specifically to Bob. 

Given these constraints, we propose that Alice impose a minimum threshold $V_0$ for the ex-ante expected welfare. Specifically, we introduce the regulator constraint:
\begin{equation}
\int V_{A}(p)dQ(p)\ge V_{0},\label{eq:regulator=000020constraint}
\end{equation}
and suggest that Alice commit to clearing any experimental proposal submitted by Bob that satisfies this condition. The precise value of $V_0$ depends on how Alice wishes to split the welfare surplus from experimentation with Bob (and other potential experimenters). Among other things, this would depend on the weights Alice places on Bob's profits, incentives for research, and overall consumer welfare. Here, we abstract from these issues and treat $V_0$ as given. In fact, many of our policy implications turn out to be independent of $V_0$, but we also show how it can be calibrated using past data on clinical trials. 

Given Alice's requirements, Bob's actions reduce to choosing an experiment $Q$ that maximizes his ex-ante expected welfare, subject to the martingale and regulator constraints (\ref{eq:martingale=000020constraint}) and (\ref{eq:regulator=000020constraint}):
\begin{align}
 & \max_{Q\in\Delta(\Delta(\mathcal{S}))}\int V_B(p) dQ(p) - C(Q) \label{eq:regulator problem}\\
 & \ \textrm{s.t.}\ \int p dQ(p)=p_{0}, \quad \int V_{A}(p)dQ(p)\ge V_{0}.
\end{align}
Note that when $V_0 = -\infty$ (no regulatory oversight), this model reduces to costly Bayesian persuasion (\citealp{kamenica-gentzkow}).

\subsection{A dual formulation of Bob's problem}

When experimentation costs are posterior separable, Bob's experimental design problem admits the dual representation
\begin{align*}
 & \min_{\{f(\cdot)\in L_{1}(\mathcal{S}),\gamma\ge0\}}\int f(s)dp_{0}(s) \\
 & \ \textrm{s.t.}\ [V_{B}(p) - \phi(p)] +\gamma V_{A}(p)\le\int f(s)dp(s)\ \forall\ p\in\Delta(\mathcal{S}),
\end{align*}
where $L_{1}(\mathcal{S})$ is the set of all Lipschitz continuous
functions on $\mathcal{S}$. The above is a slight extension of \citet[Lemma 1]{kolotilin2025persuasion}.

The function $f(s)$ can be interpreted as the shadow price associated with state $s$. We may view the dual problem as describing a decentralized economy in which Alice and Bob outsource the creation of welfare to an entrepreneur, Carol, who charges a price $f(s)$ for each latent state $s$. Alice and Bob pay Carol an ex-ante fee of $\int f(s)\,dp_0(s)$, and in return, Carol commits to delivering at least $[V_B(p) - \phi(p)]+ \gamma V_A(p)$ in every possible contingency where the posterior is $p$. The solution to the dual problem then characterizes the minimum ex-ante payment Carol can charge Alice and Bob without suffering an expected loss.   

The dual problem greatly simplifies the determination of the optimal strategy when $\mathcal{S}$ is discrete since, in that setting, it reduces to a linear programming problem. In this article, however, our main focus is on the case where $s$ is continuous, and the functions $V_A(\cdot)$ and $V_B(\cdot)$ are also difficult to evaluate explicitly. As a result, the dual problem is of limited practical use for our purposes, but we present it here because it is of conceptual interest.

\section{Persuasive Experimental Design under Incremental Learning}\label{Sec:Incremental_learning}

 We now focus on a specific class of problems that compare a candidate treatment to an active control or placebo. Unfortunately, solving the experimental design problem (\ref{eq:regulator problem}) in general contexts is still infeasible, especially when $s$ is continuous. We therefore make three further restrictions: (1) information is constrained to arrive incrementally in the form of Gaussian processes, (2) the cost of sampling is proportional to the number of observations, and (3) the priors are Gaussian.

The first two restrictions are relatively mild. As we demonstrate in Section \ref{sec:local_asymptotics}, incremental signal processes can be formally motivated using local asymptotics. The assumption of linear sampling costs follows a long tradition in sequential analysis dating back to \cite{wald}; see also \cite{arrow}, \cite{fudenberg-strack-strzalecki2018}, and \cite{adusumilli2025}. In practice, clinical trials are often conducted in batches, and we can think of our framework as assuming that the marginal cost of each batch remains approximately constant. Furthermore, while firms do incur significant fixed costs to initiate trials, these are essentially sunk costs that do not influence optimal experimental design (although they may shift how Alice and Bob split the welfare surplus from experimentation).

The restriction to Gaussian priors is motivated by both technical and practical considerations. Formally, this assumption allows us to make use of the results of \cite{liang-mu-syrgkanis-ecta-2022} to characterize the optimal treatment assignment rule under very general conditions. Because the resulting assignment rule is history-independent and static, the complex dynamic optimization problem reduces to determining the optimal stopping rule. Empirically, meta-analyses of large-scale clinical trial databases suggest that the distribution of realized treatment effects is well-approximated by a Gaussian prior. Also, as suggested by the FDA draft guidance on Bayesian clinical trials (\citealp[Section III.A]{us2026guidance}), the prior could be estimated from Phase 2 trial results; by the central limit theorem, this is typically well-approximated by a Gaussian distribution. 

\subsection{Setup under incremental learning}\label{subsec:Incremental learning}
As discussed above, we consider a special case of the general setup from Section \ref{Sec:General_setup}, where the aim is to determine the best option out of two treatments $a \in \{0,1\}$. These treatments are associated with unknown mean rewards $\mu_0, \mu_1$ and known variances $\sigma^2_0, \sigma^2_1$. The state of the world is therefore $s = (\mu_1, \mu_0)$. We assume that both Bob and Alice agree on a Gaussian prior $(\mu_1, \mu_0) \sim \mathcal{N}(\boldsymbol\mu^0, \Sigma)$. 

We employ the formalism of \cite{adusumilli-continuous-arts} to describe adaptive experiments under incremental learning. The underlying information environment consists of two independent Gaussian signal processes corresponding to each treatment $a$, given by
\begin{align}
dz_a(\gamma) &= \mu_ad\gamma + \sigma_adW_a(\gamma),
\end{align}
where $W_0(\cdot), W_1(\cdot)$ are independent standard Brownian motions. Intuitively, $z_a(\gamma)$ corresponds the cumulative sum of outcomes generated by treatment arm $a$ when it is sampled for a duration $\gamma$. The signal processes $\{z_a(\cdot)\}_a$ are supplemented by an exogenous random variable \(U \sim \text{Uniform}[0,1]\), which accounts for the randomness inherent in any potentially randomized experimentation strategy. Let $\mathbb{P}$ denote the joint probability distribution over the state space and the sample paths, composed of the prior $p_0$ over $s$, together with the conditional law of $\{z_1(\cdot), z_0(\cdot), U\}$ given $s$. Also, take $\mathcal{H}^{(a)}_\gamma \coloneq \sigma\{z_a(r) : r \le \gamma \}$ to be the natural filtration generated by the sample paths of $z_a(\cdot)$ between $0$ and $\gamma$, and set:
\begin{equation}
\mathcal{G}_{\gamma_1, \gamma_0} \coloneq \mathcal{H}^{(1)}_{\gamma_1} \vee \mathcal{H}^{(0)}_{\gamma_0} \vee \sigma(U)
\end{equation}

The experimental design consists of specifying both a sampling strategy and a stopping time. Following \cite{adusumilli-continuous-arts}, we represent the sampling strategy by an allocation process $\bm{q} \equiv \{q_a(t)\}_a$, which records the cumulative amount of attention assigned to treatment $a$ up to time $t$. For example, if $q_a(t) = \gamma_a$, then at time $t$ we have observed the path of $z_a(t)$ over the interval $[0,\gamma_a]$. We work with allocation processes rather than standard sampling rules or policies because the latter are generally not measurable in continuous time. The requirements of an allocation process are that $\{q_a(\cdot)\}_a$ are monotone, satisfy $q_1(t) + q_0(t) = t$ for all $t$, and obey the informational constraint that the event $\{q_1(t) \le \gamma_1,\; q_0(t) \le \gamma_0\}$ is $\mathcal{G}_{\gamma_1,\gamma_0}$-measurable for every choice of $\gamma_1,\gamma_0,t$. The observed signal at time $t$ is given by
$$
x_a(t) \coloneq z_a(q_a(t)),
$$
and the information available at time $t$ under a given experiment is summarized by $\mathcal{F}_t^{\bm{q}} = \mathcal{G}_{q_1(t), q_0(t)}$. The filtration $\mathcal{F}_t^{\bm{q}}$ is indexed by $\bm{q}$ because it depends on the particular sampling strategy in use. A stopping time $\tau$ that is adapted to $\mathcal{F}_t^{\bm{q}}$ then specifies when the experiment terminates. Since $\mathcal{F}_t^{\bm{q}}$ incorporates the exogenous randomization $U$, the framework accommodates randomized stopping times. We denote the overall experimentation strategy, combining both the allocation process and the stopping time, by $\bm{d} = \{\{q_a(\cdot)\}_a, \tau \}$.

Following the experiment, Alice chooses an implementation rule $\delta \in \{0, 1\}$. Given an experimentation strategy $\bm{d}$, we represent Alice's (Bayes) optimal response by
$$
\delta^*_{\bm{d}} := \mathds{1}\left\{ \mathbb{E}\left[u_A(s,1) \vert \mathcal{F}_\tau^{\bm{q}} \right] \ge \mathbb{E}\left[u_A(s,0) \vert \mathcal{F}_\tau^{\bm{q}} \right] \right\}.
$$

Since the priors are Gaussian, standard results in stochastic filtering imply that the posterior distribution $p^{\bm{q}}(t)$ of $s$ induced by $\bm{q}$ is also Gaussian and is given by 
$$
p^{\bm{q}}(t) \propto \left[ \prod_{a\in \{0,1\}} \phi \left(x_a(t) \vert \mu_a q_a(t), \sigma_a^2q_a(t) \right) \right] \times \phi\left(\mu_1, \mu_0 \vert \bm{\mu}^0, \Sigma \right),
$$
where $\phi(\cdot|m, A)$ represents the normal pdf with mean $m$ and covariance matrix $A$. 

Denote the set of all distributions over posteriors inducible through incremental learning by
$
\mathcal{Q} \equiv \{\textrm{ex-ante law of }p^{\bm{q}}(\tau)\}_{\bm{d}}.
$
As noted earlier, we simplify the structure of experimentation costs by assuming it is given by $c\tau$ for any stopping time $\tau$, where $c > 0$ is some constant denoting the marginal cost of each observation. We can relate this to the ex-ante cost of information $C(\cdot)$ from Section \ref{Sec:General_setup} as 
$$
C(Q) = \begin{cases}
\inf_{\{(\bm{q},\tau): p^{\bm{q}}(\tau) \sim Q\}} c\tau &\text{if $Q\in \mathcal{Q}$}\\
\infty &\text{otherwise}.
\end{cases}
$$
Apart from this structure on $C(\cdot)$, our setup is otherwise the same as Section \ref{Sec:General_setup}; indeed, all the constraints on experimentation are subsumed into the form of $C(\cdot)$.

\subsection{Characterizing the optimal sampling strategy}\label{subsec:Optimal_sampling}
We now characterize the optimal sampling strategy. As we demonstrate below, this can be achieved under quite broad conditions.

In what follows, let $\mathbb{P}_{\bm{d}}$ represent the restriction of the measure $\mathbb{P}$ to the $\sigma$-algebra $\mathcal{F}_\tau^{\bm{q}}$, and $\mathbb{E}_{\bm{d}}[\cdot]$ its associated expectation (this is the probability induced by the sampling strategy). 

\begin{asm}\label{asm-1}
(i) The utility functions $u_A(\cdot), u_B(\cdot)$ depend on $s$ only through $\mu_1 - \mu_0$.\\
(ii) The functions $\bar{u}_A(s)\coloneq \vert\max_a u_A(s,a)\vert$ and $\bar{u}_B(s)\coloneq\vert\max_a u_B(s,a)\vert$ are integrable with respect to the prior $p_0$, and are also bounded whenever $\vert s \vert$ is bounded.\\
(iii) There exists some experimentation strategy $\bm{d}$ such that $\mathbb{E}_{\bm{d}}[u_A(s, \delta^*_{\bm{d}})] \geq V_0$ and 
$\mathbb{E}_{\bm{d}}[u_B(s, \delta^*_{\bm{d}}) -c\tau] \ge -L$ for some $L < \infty$.\\
(iv) The priors over $\mu_1, \mu_0$ are independent, i.e., $\mathcal{N}(\boldsymbol\mu^0, \Sigma) = \mathcal{N}(\mu_1^0, \Sigma_{11}) \times \mathcal{N}(\mu_0^0, \Sigma_{00})$. Furthermore, $\Sigma_{11}/\sigma_1^2 = \Sigma_{00}/\sigma_0^2$.
\end{asm}

Assumption \ref{asm-1}(i) posits that Alice and Bob care exclusively about the relative difference between $\mu_1$ and $\mu_0$ rather than their absolute levels. In effect, this serves as a standard normalization for welfare that simplifies the parameter space. Assumption \ref{asm-1}(ii) is a mild regularity condition on the growth rates of $u_A(\cdot), u_B(\cdot)$; given that a Gaussian prior possesses exponential tails, this condition is satisfied by most standard utility functions, which generally exhibit sub-exponential growth relative to $|s|$. Assumption \ref{asm-1}(iii) represents Bob's participation constraint by ensuring that his loss is bounded from below. Assumption \ref{asm-1}(iv) states that $\mu_1/\sigma_1$ and $\mu_0/\sigma_0$ have the same (independent) prior. It is introduced primarily to simplify the algebra because, as we demonstrate below, it implies that the Neyman allocation is always optimal. The case of more general covariance structures is analyzed in Appendix \ref{sec:general_cov_structures}. Even in this broader setting, the optimal sampling rule is qualitatively similar: it prescribes sampling from a single treatment for some initial period of time before switching to the Neyman allocation. 

Under Assumption \ref{asm-1}, we can rewrite Bob's experimental design problem (\ref{eq:regulator problem}) as 
\begin{equation}\label{eq:constrainedtwotreatments}
\begin{aligned} 
&\sup_{\bm{d}}\mathbb{E}_{\bm{d}}[u_B(\mu_1 - \mu_0, \delta^*_{\bm{d}}) -c\tau] \\ 
& \textrm{s.t. }\mathbb{E}_{\bm{d}}[u_A(\mu_1 - \mu_0, \delta^*_{\bm{d}})] \geq V_0.
\end{aligned}
\end{equation}
The (Lagrangian) dual of the problem is
\begin{align}\label{eq:twotreatments}
    \inf_{\lambda \geq 0}  \sup_{\bm{d}} \mathbb{E}_{\bm{d}}\left[u_B(\mu_1 - \mu_0, \delta^*_{\bm{d}}) + \lambda \left\{ u_A(\mu_1 - \mu_0, \delta^*_{\bm{d}}) - V_0 \right\} - c\tau \right].
\end{align}
The dual problem is very similar to the class of information acquisition problems analyzed by \cite{liang-mu-syrgkanis-ecta-2022}. It is, therefore, a lot easier to analyze than the primal. 

\begin{thm}\label{thm:Sampling_strategy}
Suppose Assumption \ref{asm-1} holds. Then:
\begin{enumerate}[(i)]
\item The primal and dual problems, (\ref{eq:constrainedtwotreatments}) and (\ref{eq:twotreatments}), attain the same bounded value. 

\item  The optimal sampling strategy is the Neyman allocation: 
$$
q_a^*(t) = \frac{\sigma_a}{\sigma_1 + \sigma_0}t \ \forall\ t. 
$$
\item Under the Neyman allocation, the posterior variance $\varrho_t^*$ of $\mu_1 - \mu_0$ is deterministic and given by 
$$
\varrho_t^* = \frac{\sigma^2}{\sigma^2 \varrho_0^{-1} + t},
$$
where $\sigma^2 \coloneq (\sigma_1 + \sigma_0)^2$ and $\varrho_0$ is the prior variance of $\mu_1 - \mu_0$. Furthermore, the posterior mean, $m_t$, of $\mu_1 - \mu_0$ evolves as 
\begin{equation}\label{eq:evolution of m(t)}
m_t = \mu_1^0 - \mu_0^0 + \int_0^t \frac{\sigma^{-1}}{\sigma^{-2}t + \varrho_0^{-1}} dW(t),
\end{equation}
where $W(\cdot)$ is standard Brownian motion.
\end{enumerate}
\end{thm}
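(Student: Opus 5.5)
Part (i) is a strong-duality statement, and I would obtain it by convexifying the design space. Let $\mathcal{D}$ be the set of achievable joint laws of $(\mu_1-\mu_0,\delta,\tau)$. Because $\mathcal{F}_t^{\bm{q}}$ contains the randomization device $U$, mixtures of strategies are again strategies, so $\mathcal{D}$ is convex.

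Both the objective $\mathbb{E}_{\bm{d}}[u_B-c\tau]$ and the constraint functional $\mathbb{E}_{\bm{d}}[u_A]$ are linear in these laws. I would then apply a Lagrangian duality theorem for linear programs over a convex set, e.g.\ Luenberger's Lagrange duality theorem for convex problems with one inequality constraint. That theorem needs two things: a Slater-type interior point and a finite primal value.

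The primal value is bounded because $u_B\le \bar u_B$, which is integrable by Assumption~\ref{asm-1}(ii), and $-c\tau\le 0$. It is bounded below by $-L$ from Assumption~\ref{asm-1}(iii).

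For strict feasibility, I would first handle the case where $V_0$ is below the full-information welfare $\mathbb{E}[\max_a u_A]$: mixing the feasible $\bm{d}$ from (iii) with a long-horizon experiment gives a strictly feasible point. The boundary case is degenerate, and there I would argue directly. This gives zero duality gap and attainment of the infimum over $\lambda\ge0$.

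For (ii), fix $\lambda\ge 0$. The inner problem in (\ref{eq:twotreatments}) is a single-agent information acquisition problem with terminal payoff
$$
v_\lambda(p)=\int \big[u_B(s,\delta^*(p))+\lambda u_A(s,\delta^*(p))\big]dp(s)
$$
and linear cost. By Assumption~\ref{asm-1}(i) and Gaussianity, $v_\lambda$ depends on the posterior only through the posterior mean $m$ and variance $\varrho$ of $\mu_1-\mu_0$. Note that $v_\lambda$ uses Alice's rule $\delta^*$, not the maximizer of $u_B+\lambda u_A$, so it need not be convex in $m$. The argument of \citet{liang-mu-syrgkanis-ecta-2022} does not need convexity, though.

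Their key step is a Blackwell-dominance argument. For any allocation, the posterior of $\mu_1-\mu_0$ is Gaussian with $\varrho_t=\big(\varrho_0^{-1}+\ldots\big)^{-1}$, a deterministic function of $(q_1(t),q_0(t))$. Among allocations with $q_1+q_0=t$, the rule $q_a\propto\sigma_a$ minimizes this variance at every $t$ simultaneously, precisely because Assumption~\ref{asm-1}(iv) holds. The information about $\mu_1-\mu_0$ at time $t$ under any adaptive $\bm{q}$ is then a garbling of the Neyman-allocation information at the same $t$.

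I expect the main obstacle to be adaptivity and the stopping time, i.e.\ showing the dominance holds uniformly in $\tau$. I would argue via a time-change/coupling. Under Neyman allocation, $m_t$ is a time-changed Brownian motion whose quadratic variation $\varrho_0-\varrho_t^*$ is pointwise maximal. For a general $\bm{q}$, the process $m^{\bm{q}}_t$ is a continuous martingale with quadratic variation $\varrho_0-\varrho^{\bm{q}}_t\le \varrho_0-\varrho^*_t$. Hence $m^{\bm{q}}_\tau$ can be embedded as $m^*_{\tau'}$ for some randomized stopping time $\tau'\le\tau$ of the Neyman process, via the Dambis--Dubins--Schwarz representation. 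Since $v_\lambda$ is a function of $(m,\varrho)$ and $\varrho^*_{\tau'}$ is matched by construction, the Neyman strategy achieves weakly higher payoff at weakly lower cost for every $\lambda$. Combined with (i), it is therefore optimal for the primal.

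Part (iii) is a direct Kalman--Bucy computation. With $q_a=\sigma_a t/\sigma$, the posterior precision of $\mu_a$ is $\Sigma_{aa}^{-1}+q_a/\sigma_a^2$. Assumption~\ref{asm-1}(iv) gives $\Sigma_{aa}=\kappa\sigma_a^2$, so the two posterior variances add to
$$
\varrho_t^*=\sum_a \frac{\sigma_a^2}{\kappa^{-1}+\sigma_a t/\sigma}=\frac{\sigma^2}{\sigma^2\varrho_0^{-1}+t},
$$
using $\varrho_0=\kappa(\sigma_1^2+\sigma_0^2)$ and simplifying.

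Finally, $m_t$ is a Gaussian martingale, so $d m_t=\sqrt{-\dot\varrho_t^*}\,dW(t)$. Computing $-\dot\varrho^*_t=\sigma^2/(\sigma^2\varrho_0^{-1}+t)^2$ yields the integrand $\sigma^{-1}/(\sigma^{-2}t+\varrho_0^{-1})$ in (\ref{eq:evolution of m(t)}).
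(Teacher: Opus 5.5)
\textbf{Gap: the variance calculation does not hold under Assumption~\ref{asm-1}(iv) as stated.} This calculation drives both (ii) and (iii), and it fails in two places.

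With $\Sigma_{aa}=\kappa\sigma_a^2$, the posterior variance of $\mu_1-\mu_0$ after allocating $(q_1,q_0)$ is $\sum_a \sigma_a^2/(\kappa^{-1}+q_a)$. Minimizing this subject to $q_1+q_0=t$ gives the first-order condition $(\kappa^{-1}+q_1)/\sigma_1=(\kappa^{-1}+q_0)/\sigma_0$. So it is the prior-plus-sample effective size, not $q_a$, that must be proportional to $\sigma_a$. Hence $q_a=\sigma_a t/\sigma$ is the uniform minimizer only when $\sigma_1=\sigma_0$, and your claim in (ii) that it minimizes variance ``precisely because Assumption~\ref{asm-1}(iv) holds'' is false otherwise.

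Your ``simplifying'' step in (iii) likewise asserts an identity that fails when $\sigma_1\neq\sigma_0$. Take $\sigma_1=1$, $\sigma_0=2$, $\kappa=1$, $t=3$, so $\sigma=3$ and $\varrho_0=5$. Then $\sum_a\sigma_a^2/(\kappa^{-1}+\sigma_a t/\sigma)=\tfrac12+\tfrac43=\tfrac{11}{6}$, whereas $\sigma^2/(\sigma^2\varrho_0^{-1}+t)=\tfrac{15}{8}$. Moreover, the split $(q_1,q_0)=(\tfrac23,\tfrac73)$ gives variance $\tfrac95<\tfrac{11}{6}$, so Neyman is not even the minimizer at $t=3$.

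In this case the variance-minimizing path first samples only the arm with the larger $\Sigma_{aa}/\sigma_a$. This is the paper's general-covariance appendix: with independent priors, its switching time $t^*$ is zero iff $\Sigma_{11}/\sigma_1=\Sigma_{00}/\sigma_0$. By your own DDS comparison, Neyman is then generally strictly suboptimal. So (ii)--(iii) cannot be proved from the literal assumption.

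The condition that makes them true is $\Sigma_{aa}=\kappa\sigma_a$. Under it your computation closes:
\begin{itemize}
\item $\mathrm{Var}(\mu_a\mid\mathcal{F}_t)=\sigma_a/(\kappa^{-1}+t/\sigma)$;
\item the two terms add to $\sigma^2/(\sigma\kappa^{-1}+t)$;
\item $\varrho_0=\kappa\sigma$ turns this into exactly $\sigma^2/(\sigma^2\varrho_0^{-1}+t)$.
\end{itemize}
Your derivation of the SDE for $m_t$ from $-\dot\varrho^*_t$ is then correct. The paper's proof never writes this calculation out; it cites Liang--Mu--Syrgkanis and Fudenberg--Strack--Strzalecki, which is why the misstated assumption goes unnoticed there. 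You should state the corrected condition rather than assert the identity.

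\textbf{The rest of your argument is sound, and part of it takes a different route from the paper.}

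For (i), the paper proceeds in three steps:
\begin{itemize}
\item it fixes the Neyman allocation and restricts to $\mathcal{T}_M$;
\item it applies Sion's theorem using Baxter--Chacon compactness and continuity of $W(\cdot,\lambda)$, which takes care for Bob's payoff since it jumps when $m_t$ crosses $0$;
\item it transfers the result to all allocations in its Step 3.
\end{itemize}
Your route works directly on the full design space. The set of induced laws of $(\mu_1-\mu_0,\delta^*,\tau)$ is convex, because $U$ lets you mix and Alice conditions on $U$. Objective and constraint are linear on this set, so Lagrange duality with one scalar constraint applies. This needs no topology on designs, only a Slater point. Your long-horizon design supplies one outside the degenerate case where information has no value to Alice. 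It also gives attainment of the optimal $\lambda$.

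Your DDS embedding for (ii) is the paper's Step 1. Note that it matches the whole law of $(m_\tau,\varrho_\tau)$, hence also Alice's welfare. It therefore gives primal dominance design by design, so you do not actually need (i) to conclude that a Neyman design solves the primal.
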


The first part of Theorem \ref{thm:Sampling_strategy} is new to this paper. We demonstrate the equivalence between the primal and dual problems by establishing a minimax theorem. The proof is subtle because the space of decisions $\bm{d}$ is quite rich and infinite dimensional. 

The second and third parts are extensions of results obtained by \cite{liang-mu-syrgkanis-ecta-2022}. They are based on analyzing the dual problem. Remarkably, the second part states that the optimal sampling strategy is independent of both sampling costs $c$ and the form of the utility functions $u_A(\cdot), u_B(\cdot)$. The strategy is also deterministic and extremely simple: one should always employ the Neyman allocation. The Neyman allocation is optimal because it minimizes the posterior variance of $\mu_1 - \mu_0$ at every possible time point. Intuitively, it is optimal to reach precise beliefs, even unfavorable ones, as quickly as possible to minimize experimentation costs. This is achieved by choosing the strategy that minimizes uncertainty.

To understand the third part of Theorem \ref{thm:Sampling_strategy}, observe that under the Neyman allocation and Assumption \ref{asm-1}(i), the process
\begin{equation}\label{eq:signal_process_under_Neyman_allocation}
x(t) \coloneq \sigma \left(\frac{x_1(t)}{\sigma_1} - \frac{x_0(t)}{\sigma_0} \right) \sim (\mu_1 - \mu_0) t + \sigma W(t),
\end{equation}
is a sufficient statistic for $\mu_1 - \mu_0$. Indeed, representing the allocation process under the Neyman allocation by $\bm{q}^*$, the posterior distribution of $\mu_1 - \mu_0$ is given by
\begin{equation}\label{eq:posterior_dist_normal}
\mu_1-\mu_0\vert\mathcal{F}_{t}^{\bm{q^*}}\sim\mathcal{N}\left(\frac{\sigma^{-2}x(t)+\varrho_0^{-1}(\mu^{0}_1-\mu^0_0)}{\sigma^{-2}t+\varrho_0^{-1}},\frac{1}{\sigma^{-2}t+\varrho_0^{-1}}\right).
\end{equation}
Equation (\ref{eq:evolution of m(t)}) then follows from standard results in stochastic filtering.

\subsection{Characterizing the optimal stopping time}

\subsubsection{Utility specifications}\label{subsec:Utility_spec}

In contrast to the optimal sampling strategy, the optimal stopping rule inherently depends on the specific utility functions of Alice and Bob. For tractability, we therefore adopt specific functional forms for $u_A(\cdot)$ and $u_B(\cdot)$.

\begin{asm}\label{Asm-2}
Alice's utility function is of the form:
\[u_A(s,\delta) \coloneq u(\mu_1 - \mu_0,\delta;\alpha) = \begin{cases}
    \alpha(\mu_1 - \mu_0),\ \text{if }\delta = 1,\\ \ (1-
    \alpha)(\mu_0-\mu_1),\ \text{if }\delta=0,
\end{cases}
\]
where $\alpha \in [0,1]$ is a known parameter.
\end{asm}

Assumption \ref{Asm-2} specifies that Alice's utility is linear in $|\mu_1 - \mu_0|$, conditional on the decision $\delta$. Within this framework, Alice faces two distinct errors: a Type I error occurs when approving a treatment ($\delta = 1$) despite $\mu_1 < \mu_0$, and a Type II error occurs when rejecting a treatment ($\delta = 0$) when $\mu_1 > \mu_0$. We allow for asymmetric utility across these error types, with the degree of asymmetry governed by the parameter $\alpha$. Two cases are particularly significant: $\alpha = 1$ recovers the standard utilitarian welfare, whereas $\alpha = 1/2$ is equivalent to the negative of welfare-regret.

Under Assumption \ref{Asm-2}, Alice's optimal action is to choose the treatment with the higher posterior mean, i.e. $\delta^* = \mathds{1}[m_\tau \geq 0]$. This results in an ex-post Bayes welfare of $\max\{\alpha m_\tau, -(1-\alpha)m_\tau\}$. 

Regarding Bob's preferences, we assume he derives some utility solely from the approval of the treatment.

\begin{asm}\label{Asm-3}
Bob's utility is of the form: 
$$
u_B(s,\delta) = B\mathds{1}[\delta = 1] + \gamma u(\mu_1 - \mu_0,\delta;\alpha^\prime),
$$
where $\gamma, B$ are positive quantities, and $u(s,\delta;\alpha^\prime)$ follows Alice's functional form with a potentially different asymmetry parameter $\alpha^\prime$.
\end{asm}

Beyond the fixed benefit $B$, Assumption \ref{Asm-3} allows Bob to derive additional utility (or disutility) from the magnitude of the treatment effect conditional on approval. For instance, setting $\alpha' = 1$ implies $u(\mu_1 - \mu_0, \delta; \alpha') = (\mu_1 - \mu_0)\delta$. However, as we demonstrate below, the fundamental incentive misalignment between Alice and Bob stems exclusively from the additive benefit $B$. While the players may weigh Type I and Type II errors differently (when $\alpha \neq \alpha'$), this preference heterogeneity does not alter the structural form of the optimal stopping rule.

The intuition is as follows: even with divergent $\alpha, \alpha'$, both players share an underlying incentive to learn $s$ as precisely as possible, subject to the costs of experimentation. While they may disagree on how to distribute the welfare surplus from experimentation, the shared incentive for learning $s$ implies that the basic form of the optimal stopping rule remains constant when $B$ is fixed. Indeed, setting $B=0$ reduces the model to the single-player experimentation setup in \cite{fudenberg-strack-strzalecki2018}. 

To formally understand how these properties come about, consider the Lagrangian (dual) objective, which represents a weighted average of the players' utilities:
$$
u_B(\cdot, \delta) + \lambda u_A(\cdot, \delta) = B \cdot \mathds{1}[\delta = 1] + \gamma u(\cdot, \delta; \alpha') + \lambda \left( u(\cdot, \delta; \alpha) - V_0 \right).
$$
It is easy to verify that there exist ``effective'' parameters $\tilde{V}_0 \in (0, \infty)$, $\tilde{\lambda} \in [0, \infty)$, and $\tilde{\alpha} \in [-1, 1]$ such that the combination of the players' error-weighted utilities maps to a single representative utility function:
$$
\gamma u(\cdot, \delta; \alpha') + \lambda u(\cdot, \delta; \alpha) = \tilde{\lambda} (u(\cdot, \delta; \tilde{\alpha}) - \tilde{V}_0).
$$
Consequently, the solution to the dual problem is observationally equivalent to a model where $\gamma = 0$ and $\alpha = \alpha' = \tilde{\alpha}$. Moreover, Lemma \ref{lem:alpha_equivalent_bounds} below establishes that the optimal solution is invariant to the specific choice of $\alpha$ for a given $\lambda$.

Variations in $\gamma, \alpha,$ and $\alpha'$ thus influence the experimental design only in an indirect manner, via their possible influence on the calibration of $V_0$, which in turn modifies the optimal choice of $\lambda$. Because our theoretical results already describe the optimal solution for the full range of $\lambda$ values, tracking these parameters separately does not yield any further explanatory insight. Consequently, for the rest of the analysis, we can, without loss of generality, impose $\gamma = 0$.

\subsubsection{Properties of the optimal stopping time}\label{subsubsec:properties_of_stopping}

Despite the functional form restrictions in Assumptions \ref{Asm-2} and \ref{Asm-3}, it is well known that a closed-form characterization of the optimal stopping time is intractable under a Gaussian prior (\citealp{Bather_Walker_1962}; \citealp{fudenberg-strack-strzalecki2018}). Instead, we follow the usual approach of identifying the optimal stopping rule with the solution to a boundary problem and use this to characterize its properties.

Consider the dual problem (\ref{eq:twotreatments}) with a fixed value of the multiplier $\lambda$. The optimal stopping time $\tau$ then solves:
\begin{equation}\label{eq:optimal_stopping_problem}
\sup_\tau \mathbb{E}_{\bm{d}}[B\mathds{1}[m_\tau \geq 0] + \lambda(\max\{\alpha m_\tau, -(1-\alpha)m_\tau\} - V_0)-c\tau].
\end{equation}
Because $\lambda$ is fixed, we omit the $V_0$ term in what follows. In Section \ref{sec:Calibration}, we show how $\lambda$ can be numerically calibrated given a level of $V_0$. By standard results in convex optimization, each strictly positive $\lambda$ corresponds to a unique $V_0$ value when the duality gap is 0 and the primal problem has a bounded value (both these conditions are verified in Theorem \ref{thm:Sampling_strategy}). 

Since the posterior distribution of $\mu_1 - \mu_0$ is Gaussian, it is uniquely characterized by its posterior mean and variance. Under the Neyman allocation, the evolution of the  posterior mean $m_t$ is described in Theorem \ref{thm:Sampling_strategy}, while the posterior variance is a deterministic function of time $t$. Consequently, the sufficient statistics for optimal stopping are $(t, m_t)$. Define 
$$
S_\alpha(x) = \max\{x,0\} - (1-\alpha)x,
$$
and let $V(t,m)$ denote the continuation value in state $(t,m)$ when $m$ is the current value of the posterior mean:
\begin{equation}\label{eq:contvalue}
V(t,m) = \sup_{\tau}\mathbb{E}_{\bm{d}}[B\mathds{1}[m_{t + \tau} \geq 0]+ \lambda S_\alpha(m_{t + \tau}) - c\tau |m_t = m].\end{equation}
As in \citet[Chapter 10]{oksendal2013stochastic}, the optimal stopping rule for Markov problems of this kind has the form 
\begin{equation}\label{eq:stopping-time-def}
\tau^* = \inf_{t\geq 0}\{B\mathds{1}[m_t\geq 0] + \lambda S_\alpha(m_t) \geq V(t,m_t)\}.
\end{equation}
Equivalently, we may define two time-dependent boundaries and take the optimal stopping rule, $\tau^* = \min\{\tau^+,\tau^-\}$, to be the first exit time from these boundaries: \begin{align}
    b^+(t) &= \inf \{m \ge 0: B + \lambda S_\alpha(m) \geq V(t,m)\}, \label{eq:def_of_b_t+} \\
    b^-(t) &= \sup \{m < 0: \lambda S_\alpha(m) \ge V(t,m)\}, \label{eq:def_of_b_t-}\\
    \tau^+ &= \inf \{t \ge 0: m_t \geq b^+(t)\}, \\
    \tau^-&= \inf \{t \ge 0: m_t \leq b^-(t)\}.
\end{align}

Recall that we chose to set $\gamma = 0$ without loss of generality. As referenced earlier, a remarkable feature of the boundaries $b^+(t), b^-(t)$ is that they are the same for any $\alpha \in [0,1]$. Thus, we need only express their properties under $\alpha = 1$ or $\alpha = \frac{1}{2}$, whichever is convenient.

\begin{lem}\label{lem:alpha_equivalent_bounds}
    Let $b^+(t;\alpha), b^-(t;\alpha)$ be the stopping boundaries given $\alpha \in [0,1]$. Then, \[b^+(t;\alpha) = b^+(t;1),\quad b^-(t;\alpha) = b^-(t; 1).\]
\end{lem}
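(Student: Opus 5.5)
Since $S_\alpha(x)=\max\{x,0\}-(1-\alpha)x$, we have $S_\alpha(m)=S_1(m)-(1-\alpha)m$: the $\alpha$-dependence enters the terminal payoff only through a term linear in $m$. Under the Neyman allocation, Theorem \ref{thm:Sampling_strategy}(iii) gives that $m_t$ is a stochastic integral against Brownian motion with a deterministic, bounded integrand. Hence $m_t$ is a (square-integrable) martingale. The plan is to exploit this. For any stopping time $\tau$ with $\mathbb{E}[\tau]<\infty$, optional stopping gives $\mathbb{E}[m_{t+\tau}\mid m_t=m]=m$, and the linear term contributes a constant that does not depend on the stopping rule.

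\textbf{Step 1: restrict to integrable stopping times.} Any $\tau$ with $\mathbb{E}[\tau]=\infty$ yields value $-\infty$, since the payoff is bounded above in expectation. So the supremum in (\ref{eq:contvalue}) may be taken over $\tau$ with $\mathbb{E}\tau<\infty$.

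\textbf{Step 2: apply optional stopping.} The quadratic variation of $m$ up to $t+\tau$ is bounded by $\varrho_0$ (the prior-to-posterior variance reduction), so $m$ is a uniformly integrable martingale. Optional stopping then holds for every stopping time, including unbounded ones. Consequently
$$
V_\alpha(t,m)=V_1(t,m)-\lambda(1-\alpha)m,
$$
where $V_\alpha$ denotes the continuation value (\ref{eq:contvalue}) under parameter $\alpha$.

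\textbf{Step 3: compare stopping regions.} The stopping region is defined in (\ref{eq:stopping-time-def}) by
$$
B\mathds{1}[m\ge0]+\lambda S_\alpha(m)\ge V_\alpha(t,m).
$$
Subtract $\lambda(1-\alpha)m$ from both sides: the left side becomes $B\mathds{1}[m\ge0]+\lambda S_1(m)$, and the right side becomes $V_1(t,m)$. The stopping regions therefore coincide pointwise in $(t,m)$. Plugging into (\ref{eq:def_of_b_t+}) and (\ref{eq:def_of_b_t-}) gives $b^\pm(t;\alpha)=b^\pm(t;1)$.

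\textbf{Main obstacle.} The delicate step is the justification of optional stopping for unbounded $\tau$ (Step 2), which I would handle via the uniform integrability/bounded quadratic variation argument above. A related point is that the supremum defining $V$ may not be attained. This causes no difficulty: the identity $V_\alpha=V_1-\lambda(1-\alpha)m$ holds for the value of each individual $\tau$, so it passes directly to the supremum.
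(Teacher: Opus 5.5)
Your proposal is correct and follows essentially the same route as the paper. Both split off the linear term via $S_\alpha(m)=S_1(m)-(1-\alpha)m$, use optional stopping for the martingale $m_t$ to obtain $V(t,m;\alpha)=V(t,m;1)-\lambda(1-\alpha)m$, and observe that this linear term cancels from both sides of the inequalities defining $b^\pm$. The differences are cosmetic: your bounded-quadratic-variation (uniform integrability) argument already justifies optional stopping for every stopping time, which makes Step~1 redundant, whereas the paper instead relies on $\mathbb{E}[\tau]<\infty$ from Step~3 of the proof of Theorem~\ref{thm:Sampling_strategy}; and in Step~3 you should \emph{add}, not subtract, $\lambda(1-\alpha)m$ to both sides.
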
 

Lemma \ref{lem:alpha_equivalent_bounds} implies that allowing for asymmetric welfare under approval and rejection leaves the optimal experimental design unchanged for a given $\lambda$. It further allows us to extend the results of \cite{fudenberg-strack-strzalecki2018} beyond the welfare-regret objective. Intuitively, while $\alpha$ controls the `tilt' of the welfare function, this tilt does not modify the relative difference between the options' means, and the optimal boundaries thus remain unaffected.

Since the optimal $\lambda$ is uniquely determined by the value of $V_0$, it follows that, conditional on knowing $V_0$, one does not need to know $\alpha$. In the numerical analysis (Section \ref{sec:Calibration}), we calibrate $\lambda$ through $V_0$ so that it matches the expected welfare of a Randomized Controlled Trial (RCT) with a fixed sample size ($t = 1$). In fact, when $m_0 = 0$, we obtain the striking result that the welfare generated by the RCT---and therefore the welfare under the calibrated stopping rule---remains the same for every $\alpha \in [0,1]$.

We now characterize several additional properties of the stopping boundaries.

\begin{thm}\label{thm:optimal-stopping-rule}
    Under Assumptions 1-3, $(b^+(t),b^-(t))$ have the following properties:
     \begin{enumerate}[(i)]
        \item $b^+(t), b^-(t)$ are well defined, with $ b^+(t), \vert b^-(t)\vert < \infty$. \label{thm1wd}
        \item $b^+(t)$ and $|b^-(t)|$ are weakly decreasing in $t$, with $\lim_{t\to\infty}b^+(t) = \lim_{t\to\infty}b^-(t)=0$. \label{thm1lim}
        \item $b^+(t) \leq |b^-(t)|$. Also, whenever $B > 0$, $b^+(t) = 0\ \forall\ t\ge t^*$, where $t^*$ is such that $b^-(t^*) = -B/\lambda$.\label{thm1shift}
        \item $b^+(t), b^-(t)$ are continuous in $t$, with the latter being Lipschitz continuous. \label{thm1cont}
        \item $b^+(t)$ and $|b^-(t)|$ are weakly increasing in $\lambda$.\label{thm1lam}
        \item $b^+(t), b^-(t)$ are weakly decreasing in $B$ for a fixed $\lambda$.\label{thm1B}
        \item $b^-(t)$ is strictly less than 0. \label{thm1neg}
    \end{enumerate}
\end{thm}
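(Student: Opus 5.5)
The plan is to reduce everything to a one-dimensional optimal stopping problem for a time-changed Brownian motion, and then argue by comparison of continuation gains. By Lemma \ref{lem:alpha_equivalent_bounds} I take $\alpha=1$, so the terminal payoff is $G(m)=B\mathds{1}[m\ge 0]+\lambda m^+$. By Theorem \ref{thm:Sampling_strategy}(iii), $m_{t+\cdot}-m_t$ is a Brownian motion $W(v)$ run on the deterministic clock $v(\tau)=\varrho^*_t-\varrho^*_{t+\tau}$. The clock increment satisfies $dv/d\tau=(\varrho^*_{t+\tau})^2/\sigma^2\le(\varrho^*_t)^2/\sigma^2$. Hence any stopping time satisfies $c\tau\ge c\sigma^2 v(\tau)/(\varrho_t^*)^2$. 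Writing $X=W(v(\tau))$, I work with the continuation gain
$$D(t,m)=V(t,m)-G(m)=\sup_\tau \mathbb{E}\big[G(m+X)-G(m)-c\tau\big]\ge 0.$$
Stopping is optimal exactly where $D=0$, so every part of Theorem \ref{thm:optimal-stopping-rule} reduces to properties of the zero set of $D$.

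For (i), I would use elementary pointwise inequalities that are quadratic in $X$.
\begin{itemize}
\item For $m<0$: $\lambda(m+x)^+\le\lambda x^2/(4|m|)$ and $B\mathds{1}[m+x\ge0]\le Bx^2/m^2$.
\item For $m>0$: $G(m+x)-G(m)\le \lambda x+(\lambda(m+x)^- - B\mathds{1}[m+x<0])^+$, and the second term is bounded the same way.
\end{itemize}
Since $\mathbb{E}X=0$ and $\mathbb{E}X^2=\mathbb{E}v(\tau)$, the gain is at most $\big(\lambda/(4|m|)+B/m^2\big)\mathbb{E}v(\tau)-c\sigma^2\mathbb{E}v(\tau)/(\varrho^*_t)^2$. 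This is negative once $|m|\gtrsim\lambda(\varrho_t^*)^2/(c\sigma^2)$. So $D(t,m)=0$ for large $|m|$, which gives explicit finite bounds on $b^\pm(t)$. These bounds tend to $0$ as $\varrho^*_t\to0$, which yields the limits in (ii). For monotonicity in $t$, I would note that started at a later time the same Brownian increments $v$ cost more time, because $dv/d\tau$ is decreasing. Hence $D(t,m)$ is decreasing in $t$, the stopping region expands, and $b^+$ and $|b^-|$ weakly decrease.

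For (v), dividing by $\lambda$ gives payoff $(B/\lambda)\mathds{1}+m^+$ with cost $c/\lambda$. With $B$ fixed, raising $\lambda$ lowers the relative cost, so $D$ rises. This needs care because $B/\lambda$ also falls; I would handle it by showing $\partial_\lambda D\ge0$ directly via the envelope theorem, using $\mathbb{E}[(m+X)^+]-m^+\ge0$ by Jensen. For (vi), the envelope derivative in $B$ is $P(m+X\ge0)-\mathds{1}[m\ge0]$, which is $\ge0$ for $m<0$ and $\le0$ for $m\ge0$. Thus raising $B$ shrinks the continuation region on the positive side and enlarges it on the negative side, so both boundaries shift down.

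For (vii), a first-order local argument at $m=0^-$ shows $b^-(t)<0$: for $m$ slightly negative, a short continuation captures $B$ with probability bounded away from zero at negligible cost. For (iii), I would use the reflection $X\mapsto -X$. The $\lambda$-part of the gain is symmetric in $m$ once written with $\alpha=1/2$ (allowed by Lemma \ref{lem:alpha_equivalent_bounds}). The $B$-part equals $-BP(X<-m)$ at $m>0$ and $+BP(X\ge m)$ at $-m$, and these have equal magnitude. Hence $D(t,-m)\ge D(t,m)$ for $m>0$, giving $b^+\le|b^-|$.

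The statement that $b^+(t)=0$ for $t\ge t^*$ is the step I expect to be the main obstacle. My first attempt is a domination argument. For $x\ge0$ we have $G(x)=\lambda(x+B/\lambda)$, and $G\le H(x):=\lambda(x+B/\lambda)^+$ everywhere, with equality on $[0,\infty)$. The $H$-problem is the $B=0$ problem shifted by $-B/\lambda$. I would then show that when $b^-(t)\ge -B/\lambda$, the negative stopping region of $G$ already covers the region where $H$ and $G$ differ. In that case, continuing from $m\ge0$ can only lose relative to stopping, because $G(m_{t+\tau})\le\lambda(m_{t+\tau}+B/\lambda)$ on the relevant paths and $m$ is a martingale. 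Monotonicity of $b^-$ from (ii) then propagates this to all $t\ge t^*$.

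For (iv), continuity would come from continuity of $D$ in $(t,m)$ together with strict monotonicity in $t$. The Lipschitz property of $b^-$ would follow from a bound on $\partial_tD$ relative to $\partial_mD$ near the boundary, using smooth fit.
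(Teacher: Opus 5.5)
Your plan has a genuine gap in part (iv).

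\textbf{The gap: part (iv).} Your route to Lipschitz continuity of $b^-$, bounding $\partial_t D$ relative to $\partial_m D$ near the boundary via smooth fit, is degenerate. Since $D\ge 0$ everywhere and $D=0$ on the stopping set, $D$ attains its minimum at every boundary point. Wherever $D$ is $C^1$ there, which is exactly what smooth fit provides, both $\partial_t D$ and $\partial_m D$ vanish, so the implicit-function ratio is $0/0$. A derivative argument would instead have to differentiate the smooth-fit identity $\partial_m D(t,b^-(t))=0$. That needs control of $\partial_{tm}D$ up to the boundary, and nothing in your plan supplies it.

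Likewise, continuity of $D$ plus monotonicity in $t$ only makes the stopping set closed. That yields right-continuity of $b^\pm$ and nothing more. Excluding a jump from the left needs the equation $\partial_t V+\tfrac12\upsilon_t^2\partial_{mm}V=c$ in the continuation region together with $c>0$; this is the Peskir--Shiryaev argument the paper cites for $b^+$.

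The idea you are missing for the Lipschitz claim is scaling rather than smoothness. The paper combines four facts:
\begin{itemize}
\item the time shift $b^-(t;c,\varrho_0,B)=b^-(0;c,\varrho_t^*,B)$;
\item Brownian scaling $b^-(0;c,\beta^2\varrho_0,B)=\beta\,b^-(0;c\beta^{-3},\varrho_0,\beta^{-1}B)$;
\item the cost comparison $b^-(0;c\beta,\varrho_0,B)\le\beta^{-1}b^-(0;c,\varrho_0,\beta B)$ for $\beta>1$;
\item monotonicity of $b^-$ in $B$.
\end{itemize}
Together these give $b^-(t)\le b^-(t+\epsilon)\le\gamma_\epsilon^4\, b^-(t)$ with $\gamma_\epsilon=(1+\epsilon\sigma^{-2}\varrho_t^*)^{-1/2}$ and $1-\gamma_\epsilon^4\le 2\epsilon\sigma^{-2}\varrho_0$. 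This is a Lipschitz bound with constant $2\sigma^{-2}\varrho_0|b^-(0)|$.

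\textbf{Parts (i) and (ii).} These are sound and follow a different, more explicit route than the paper. Your quadratic majorants combined with $c\tau\ge c\sigma^2 v(\tau)/(\varrho_t^*)^2$ give, for example, $b^+(t)\le\lambda(\varrho_t^*)^2/(4c\sigma^2)$. This proves (i) and the limits in (ii) at once. One correction: on the negative side the $B/m^2$ term forces a threshold of order $\varrho_t^*\sqrt{B/(c\sigma^2)}$, not $\lambda(\varrho_t^*)^2/(c\sigma^2)$, but it still tends to $0$. The paper instead gets $b^+<\infty$ from the $B=0$ result of Fudenberg, Strack and Strzalecki plus (vi), and obtains $|b^-|<\infty$ and the limits by contradiction.

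\textbf{Parts (iii) and (vii).} Your reflection inequality $D(t,-m)\ge D(t,m)$ for $m>0$ gives $b^+\le|b^-|$ without passing through the symmetric $B=0$ boundaries. Your local argument for (vii) avoids their Theorem 4, at the cost of using $B>0$.

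For $b^+=0$ on $[t^*,\infty)$, the $H$-comparison as you phrased it is wrong. $H$ and $G$ differ on $(-B/\lambda,0)$, and the region $(-\infty,b^-(s)]$ does not cover that interval. The argument that works is your last sentence, completed as follows:
\begin{itemize}
\item For $s\ge t^*$, the half-line $(-\infty,-B/\lambda]$ lies in the stopping set. This holds because $b^-(s)\ge -B/\lambda$, and with $\alpha=1$ the function $D(s,\cdot)=V(s,\cdot)$ is nondecreasing on $m<0$.
\item By the strong Markov property, any $\tau$ can therefore be replaced, without loss, by $\tau$ stopped at the hitting time of $-B/\lambda$.
\item On the stopped paths $G(m_\tau)\le\lambda m_\tau+B$, and optional stopping gives $V(t,m)\le G(m)$ for all $m\ge 0$.
\end{itemize}
This is cleaner than the paper's reflection of $(t^*,0)$ onto $(t^*,-2B/\lambda)$, but it rests on the same fact.

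\textbf{Parts (v), (vi) and $t$-monotonicity.} These coincide with the paper's auxiliary lemma. You do not need the envelope theorem there, because each fixed-$\tau$ objective is already monotone in $\lambda$ or $B$.
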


Theorem \ref{thm:optimal-stopping-rule} is based on the idea that the stopping rule identifies the smallest treatment-effect magnitude at which stopping weakly dominates continuation. 

Parts \eqref{thm1wd} and \eqref{thm1cont} are technical results stating that the lower and upper stopping boundaries, $b^-(t)$ and $b^+(t)$,  are well defined and continuous (in fact, $b^-(t)$ is Lipschitz continuous). Part \eqref{thm1lim} further establishes that these boundaries converge monotonically to zero over time. The intuition is that the incremental value of new information shrinks as beliefs about the true mean become more precise. Since the sampling cost remains constant, there comes a point at which it is optimal to stop for any given belief. This pattern is driven by the Gaussian prior, which assigns little prior mass to large values of $s$. By contrast, \cite{adusumilli2025} shows that with a two-point prior, beliefs can remain as uninformative in the future as they are at present.

Part \eqref{thm1shift} shows that the stopping thresholds are asymmetric: in absolute value, the acceptance threshold is always lower than the rejection threshold. This asymmetry is driven by Bob's private benefit $B$, which gives him an incentive to continue the experiment even when early signals are weakly favorable, as he hopes for a possible reversion to a positive mean. Part \eqref{thm1shift} further implies that $b^+(t)$ eventually hits zero. Under standard Bayesian persuasion, which corresponds to a nonbinding welfare constraint ($\lambda = 0$), the approval threshold would be $b^+(t) = 0$ for all $t$, i.e., the experiment would stop whenever the posterior mean becomes nonnegative. By contrast, for sufficiently large $V_0$, Bob chooses $b^+(t) > 0$ for a finite time interval, though this boundary always lies strictly below the welfare-optimal one, $b^+(t) = -b^-(t)$, corresponding to $B = 0$. Ultimately, however, the Bayesian persuasion motive prevails from $t \ge t^*$ onward.

Since $\lambda$ is in one-to-one correspondence with the welfare constraint $V_0$, part \eqref{thm1lam} states that tighter welfare constraints necessarily push the boundaries outward.

Part \eqref{thm1B} is intuitive: for a fixed $\lambda$, a larger benefit $B$ makes waiting less appealing when beliefs are favorable $(m \ge 0)$ and more appealing when beliefs are unfavorable $(m < 0)$. It is important to note, however, that if we hold the welfare constraint $V_0$ constant, an increase in $B$ actually raises the corresponding value of $\lambda$. In combination with part \eqref{thm1lam}, this suggests that the overall effect of raising $B$ on $b^-(t)$ should be strictly negative: increasing the benefit always makes waiting more attractive. By contrast, the impact on $b^+(t)$ is ambiguous: a lower $\lambda$ tends to reduce $b^+(t)$, so in total $b^+(t)$ may either expand or contract, depending on which mechanism dominates.


\subsection{On tie-breaking}\label{subsec:tie-breaking}

Up to this point, our analysis has implicitly assumed that Alice breaks ties in favor of Bob's preferred action, $\delta = 1$, whenever she is indifferent, i.e., when $m_\tau = 0$. It turns out, however, that the specific tie-breaking rule Alice adopts is irrelevant. Because $m_t$ can be expressed as a time-changed Brownian motion, it satisfies the immediate-crossing property: whenever the process reaches $0$, it will cross it within an arbitrarily short time interval. Consequently, Bob can always ensure that Alice accepts the proposal as soon as $m_\tau = 0$ (since it implies that $m_t > 0$ almost immediately). 

In Section \ref{sec:local_asymptotics}, we develop finite-sample analogs of our optimal policies. In that context, to prevent any ambiguity, we impose that $m_\tau$ be strictly bounded away from 0 by a factor $\xi$. We then demonstrate that $\xi$ can be chosen sufficiently small so that Bob's overall welfare is only minimally affected.

\subsection{Lessons for experimental design}\label{subsec:Lessons_for_experiments}

In the previous sections, we have characterized how experimental design changes when there is incentive misalignment between the experimenter (Bob) and the regulator (Alice). We summarize some key takeaways below.

\subsubsection*{Neyman allocation is always optimal}
Incentive misalignment (or even alignment) does not alter the optimal sampling strategy. Both Alice and Bob share a common objective to learn the state of the world, $s$, as efficiently as possible relative to the cost of experimentation. Consequently, the allocation of subjects across treatment arms is determined by variance-minimization rather than divergence in preferences.

\subsubsection*{Incentive misalignment only arises from additive benefits}
The form of the optimal stopping rule is altered if and only if there exists a private benefit from approval, $B$, that Bob receives regardless of the size of the treatment effect ($\mu_1 - \mu_0$). A careful analysis of Section \ref{subsec:Utility_spec} and the proof of Lemma \ref{lem:alpha_equivalent_bounds} shows that this feature is, in fact, a structural feature of the decision problem, and not an artifact of Gaussian priors: indeed, it is a consequence of the players agreeing on a common prior. In the absence of such an additive benefit ($B=0$), the players' interests are aligned regarding the form of the optimal stopping time, regardless of their individual risk tolerances ($\alpha, \alpha'$) or benefits that are linear in $\vert \mu_1 - \mu_0\vert$. 

\subsubsection*{Asymmetry in decision thresholds}
A significant departure from standard sequential designs for clinical trials---such as those discussed in \cite{wassmer2016group}---is the emergence of asymmetric thresholds for acceptance and rejection. The rejection threshold is shifted downward in absolute terms relative to the approval threshold. This asymmetry reflects Bob's `pro-approval' bias: because he captures a fixed gain from approval, he is incentivized to persist with the experiment even when early evidence is underwhelming, as he hopes for a reversion toward a positive mean.

\subsubsection*{Terminal convergence to zero of approval thresholds}
In classical fixed-sample experiments, where the number of observations is characterized by time $t$, a treatment is approved if its mean effect exceeds a threshold of $1.96\sigma/\sqrt{t}$. In the commonly used group sequential designs of \cite{o1979multiple}, this threshold declines at a rate proportional to $1/t$. In our framework, the optimal threshold for approval declines even more aggressively, eventually reaching zero at some finite time $t^*$. This rapid convergence is driven by two factors. First, under Gaussian priors, a low posterior mean at a large $t$ suggests the true effect is likely near zero; thus, the potential `downside' from a Type I error diminishes. Second, lowering the threshold for late-stage stopping increases Bob's `option value' of approval, and incentivizes him to continue costly experimentation for a longer duration.

\subsection{Potential limitations}

Our analysis rests on a couple of important structural assumptions, which have economic content. 

\subsubsection*{Common prior} The first assumption is that Alice and Bob share a common prior. Our preferred view of this prior is that it represents an objective quantity that can be estimated from historical data. For instance, in its current guidance on Bayesian clinical trials, the FDA notes that one option is to estimate the prior for Phase 3 designs using data from Phase 2 studies of the same drug (\citealp[Section III.A]{us2026guidance}).

In the same document, the FDA also indicates that the prior may alternatively be derived from data obtained in earlier clinical trials. This aligns with the Empirical Bayes approach. The central assumption here is exchangeability: the treatment effect in the current trial is presumed to arise from the same distribution as the effects observed in earlier, comparable studies. In clinical research, the development of such an objective prior is supported by the existence of extensive meta-analytic repositories, most notably the Cochrane Database of Systematic Reviews (CDSR). The Empirical Bayes method is likely most appropriate for evaluating `follow-up' treatments or therapies within established drug classes, rather than for genuinely novel or `first-in-class' interventions.

More generally, pharmaceutical firms are legally obligated to disclose all pertinent information about a drug to the FDA before initiating a Phase 3 trial. Once this prior, together with the information used to construct it, becomes common knowledge, Alice and Bob can engage in a process of `communication' that will ultimately lead them to share a common prior, as implied by the classic Aumann agreement theorem (\citealp{aumann1976agreeing}). Indeed, FDA guidance (\citealp[Section V.D.1]{us2026guidance}) explicitly notes that companies must account for the time required for this prior alignment when planning and developing the trial.

\subsubsection*{Risk neutrality and linear utility} For the analysis of optimal stopping times, we employed a second fundamental assumption that the Bernoulli utility functions of Alice and Bob are linear in $|\mu_1 - \mu_0|$, conditional on $\delta$.

For Alice, this implies risk neutrality with respect to the magnitude of the treatment effect. This is a stance consistent with that of a utilitarian social planner, with the important caveat that we do not restrict Alice's risk preferences regarding the binary decision itself; she may still exhibit significant asymmetry in her weighting of Type I versus Type II errors.

For Bob, linear utility implies that the private returns from approval scale proportionally with the magnitude of the treatment effect. In practice, one might expect profits to be non-linear and convex in $\mu_1 - \mu_0$; indeed, a breakthrough treatment effect might yield exceptionally high returns due to market dominance or reduced competition. However, such non-linearities would exacerbate, rather than mitigate, the incentive misalignments identified in Section \ref{subsec:Lessons_for_experiments}. For example, if Bob derives increasing marginal utility from the treatment effect, his `pro-approval' bias would strengthen. This would lead to even more pronounced asymmetry between approval and rejection thresholds and an even faster convergence of the approval threshold toward zero. 

\section{Numerical Calibration for Clinical Trials}\label{sec:Calibration} 

Determining the optimal stopping time requires information on the marginal cost of experimentation $c$, the private fixed benefit from approval $B$, the treatment-effect dependent benefit $\gamma$, the prior variance $\varrho_0$, and the welfare lower bound $V_0$. In this section, we calibrate these parameters using evidence from the clinical trial literature and then use them to derive optimal stopping rules. This allows us to measure the percentage reduction in the expected sample size needed to reach the same expected welfare as that delivered by a standard Randomized Controlled Trial (RCT).

Up to this point, our analysis has been framed in continuous time, whereas in practical settings data are collected at discrete intervals. As outlined in Section \ref{sec:local_asymptotics}, the continuous-time framework serves as an approximation---under so-called `small-cost asymptotics' (\citealp{adusumilli2025})---to a discrete sampling scheme in which time $t$ corresponds to the number of observations divided by $n$, with $n \to \infty$. In this asymptotic setup, the continuous-time parameters $c, B, \gamma, \varrho_0$ emerge as renormalized versions of the underlying structural parameters:
\begin{equation}\label{eq:scaling}
c = \frac{C}{n^{3/2}}, \quad \frac{\gamma}{c} = \frac{\gamma_n}{n^{3/2} C},\quad \frac{B}{c} =  \frac{B_n}{nC},\quad \varrho_0 = \sqrt{n}\varrho_{0,n},
\end{equation}
where $C$ denotes the true per-observation cost, $B_n$ the private fixed benefit from regulatory approval, $\gamma_n$ the scaling factor on treatment-effect–adjusted benefits, and $\varrho_{0,n}$ the prior variance.

The scaling can be rationalized as follows: following \cite{adusumilli2025}, we interpret $n^{3/2}$ as the relevant population size for the drug. As argued in \cite{adusumilli2025}, welfare for such a population size is maximized when the design is tuned to detect treatment effects of order $1/\sqrt{n}$. The true sampling cost $C$ is evidently invariant in $n$. By contrast, treatment-adjusted benefits must be multiplied by the population size $n^{3/2}$ to obtain aggregate profits, implying $\gamma_n = O(n^{3/2})$. Simultaneously, to keep the private fixed benefit and the treatment-adjusted benefits on the same scale, we impose $B_n/\gamma_n = O(1/\sqrt{n})$. This is also intuitive: $B$ represents the pure profit from approval even if the drug has no therapeutic effect, and in realistic scenarios, this should constitute only a small share of the overall profits from approval. 

For ease of exposition, we assume throughout this section that $\sigma^2 := (\sigma_1^2 + \sigma_0^2) = 1$. Under this convention, $\varrho_{0}$ should be interpreted as the prior variance of the standardized treatment effect $\mu/\sigma$. While in practice $\sigma$ would be estimated from the data, all of our later results remain valid for arbitrary values of $\sigma$ if the $m_t$ terms appearing in this section are understood as $m_t/\sigma$.

\subsection{Calibration of $n$, $c$ and $B$}

\cite{chen2025investigating} use data from all published and unpublished Phase 3 clinical trials registered on ClinicalTrials.gov over the past 20 years to estimate a median sample size of $300$ participants for Phase 3 studies. We therefore take this value as our scaling factor $n$.

\cite{moore} report that the median cost per participant is approximately $C = \$41{,}000$.

\cite{tetenov2016} estimates that the value of an approved drug to pharmaceutical firms is approximately \$802 million. This figure, however, incorporates both the pure profit from regulatory approval and the benefits adjusted for treatment effects. As a result, the precise value of $B_n$ is not directly observable. Using the scaling argument outlined above, we regard it as reasonable to divide this total by $\sqrt{n}$, yielding an estimate of $B_n \approx \$46.3$ million. Under this approach, the pure benefit of approval amounts to roughly $5.5\%$ of total profits, a share we consider to be of a realistic order of magnitude.\footnote{This figure also aligns with the estimated value of approximately \$50 million for drugs in the preclinical phase, as reported by \cite{aryal2022valuing}.}

Combining these values yields a cost-to-benefit ratio of $c/B = 0.265$. The appendix outlines two alternative calibration approaches: one assuming $B_n = 0$, and another assuming $B_n =$ \$802 million, meaning that the entire profit is represented as a purely approval-based benefit (although we consider this second calibration to be implausible).

\subsection{Estimating the prior distribution}\label{subsec:Estimating_prior_parameters}

The Cochrane Database of Systematic Reviews (CDSR) is the principal repository for systematic reviews and meta-analyses in the clinical trials literature. \cite{vanzwet2021} use the Cochrane Database to estimate a prior distribution for scaled treatment effects ($\mu/\sigma$) in clinical trials via Empirical Bayes methods.\footnote{\cite{vanzwet2021} fit a four-component mixture of normal distributions, although one component receives only a negligible weight. We estimate $\varrho_0$ by taking $\varrho_0/n$ to be the variance of this distribution.} On the basis of their estimates, we set $\varrho_0 = 3.12^2 = 9.7344$.

\cite{vanzwet2021} contend that choosing a prior mean of $0$ for $\mu_1 - \mu_0$ is reasonable, as it treats the active treatment and control symmetrically. In fact, their empirical estimates also yield a prior mean that is close to zero.

\subsection{Calibration of the welfare lower bound}
A natural baseline choice for $V_0$ is the expected social welfare, $V_0^*$, that regulators could expect to obtain under a standard RCT. Recall that we normalize $t$ so that $t = 1$ matches the median sample size of typical clinical trials. With a $\mathcal{N}(0,\varrho_0)$ prior over $\mu_1 - \mu_0$, the posterior mean at $t = 1$ is distributed as $m_1 \sim \mathcal{N}(0, \varrho_0/(1+\varrho_0^{-1}))$. 

Observe that $V_0^* = E[S_\alpha(m_1)]$. Now, $E[S_\alpha(m_1)] = E[\max\{m_1, 0\}]$ does not depend on $\alpha$ since $E[m_1] = 0$. Substituting the estimated value of $\varrho_0$ thus yields
\[
   V_0^* = E[S_\alpha(m_1)] = \sqrt{\frac{\varrho_0}{2\pi(1+\varrho_0^{-1})}} \approx 1.1853.
\]
In our numerical work below, we also vary $V_0$ by trying out different multiplicative factors of $V_0^*$.

\subsection{Results}\label{subsec:Results}

\begin{table}
    \centering
    \begin{tabular}{l l l}
        \toprule
        \textbf{Parameter} & \textbf{Value} & \textbf{Reference} \\
        \midrule
        Marginal cost ($C_n$)    & $\$41{,}000$        & \cite{moore} \\
        Private benefit ($B_n$)  & $\$46.3\,\text{million}$ & \cite{tetenov2016} \\
        Prior variance ($\varrho_0$) & $9.7344$              & \cite{vanzwet2021} \\
        Sample size ($n$)        & $300$               & \cite{chen2025investigating} \\
        \bottomrule
    \end{tabular}
    \caption{Key parameters used in the analysis}
    \label{tab:key-parameters}
\end{table}


Table \ref{tab:key-parameters} summarizes the values of the key parameters used in our analysis. A simple binary search pinpoints the value of $\lambda$ under which the optimal experimentation strategy achieves an ex-ante welfare of $V_0^*$. The resulting stopping rule is plotted in Figure \ref{fig:stopboundaries}. The stopping boundaries clearly satisfy the properties highlighted in Theorem \ref{thm:optimal-stopping-rule}. For comparison, we also plot the classical approval rule  (accept if the sample mean, scaled by $\sigma$, is above $1.96/\sqrt{t}$; see the discussion in Section \ref{subsec:Lessons_for_experiments}) as $b_{RCT}(t)$. As expected, $b^+(t)$ truncates to 0. In our calibration, the truncation occurs at around $t \approx 1.75$ (530 units); for the same sample size, $b_{RCT}(t)$ requires an effect size for approval at which the adaptive experiment would have ended much earlier.

\begin{figure}
    \centering
    \includegraphics[width=0.55\linewidth]{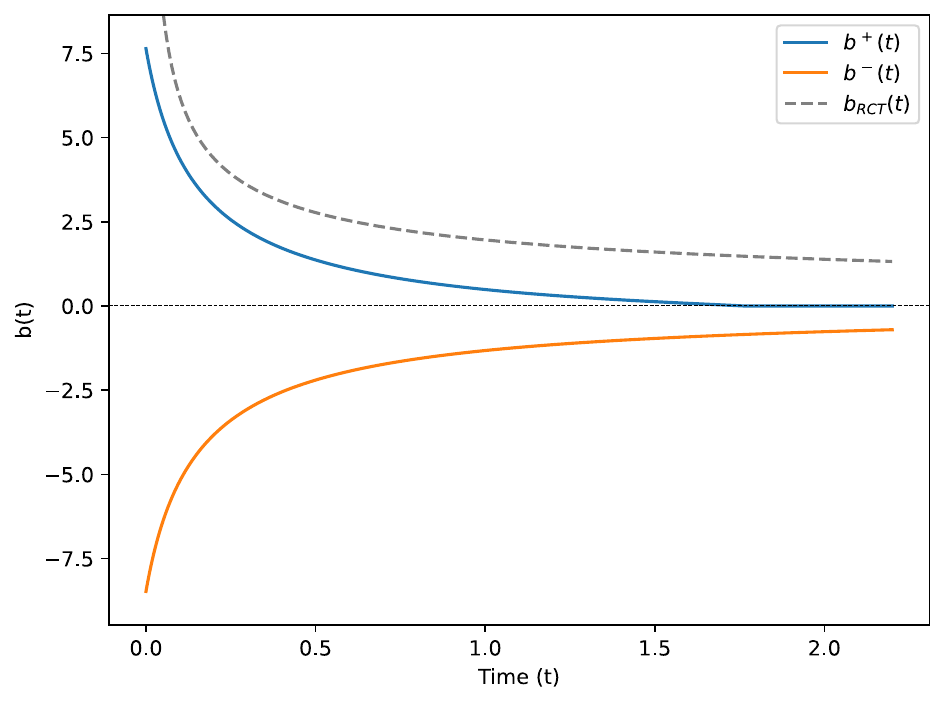}
    \caption{Stopping Boundaries $b^+(t), b^-(t)$.}
\begin{justify}
\scriptsize\textit{Notes: This figure displays the upper and lower stopping boundaries over time. The functions $b^+(t)$ and $b^-(t)$ characterize the optimal stopping region of the decision problem, where continuation is optimal for states between the two curves and stopping occurs once the process hits either boundary. For comparison, we also show the conventional approval cutoff at the 5\% significance level, assuming the sample size $t$ were exogenous.}
\end{justify}
    \label{fig:stopboundaries}
\end{figure}

In Figures \ref{fig:stoppingdist} and \ref{fig:postmeanstop}, we plot the distributions of $\tau$ and $m_\tau$, obtained by running 1 million simulations of the hypothetical adaptive experiment implied by the optimal experimentation strategy. The results show that the proposed strategy can reach the social welfare level $V_0^*$---the same level delivered by a conventional RCT---while using, on average, about $48\%$ fewer observations (i.e., $\mathbb{E}[\tau^*] \approx 0.52$). Under our parameter estimates $C_n = 41{,}000\$$ and $n = 300$, this implies expected cost savings of nearly 6 million dollars. In addition, approximately $86\%$ of the simulated paths stop before $t = 1$; recall that this corresponds to the median sample size in clinical trials. 

\begin{figure}
    \centering
    \begin{subfigure}[b]{0.45\linewidth}
        \centering
        \includegraphics[width=\linewidth]{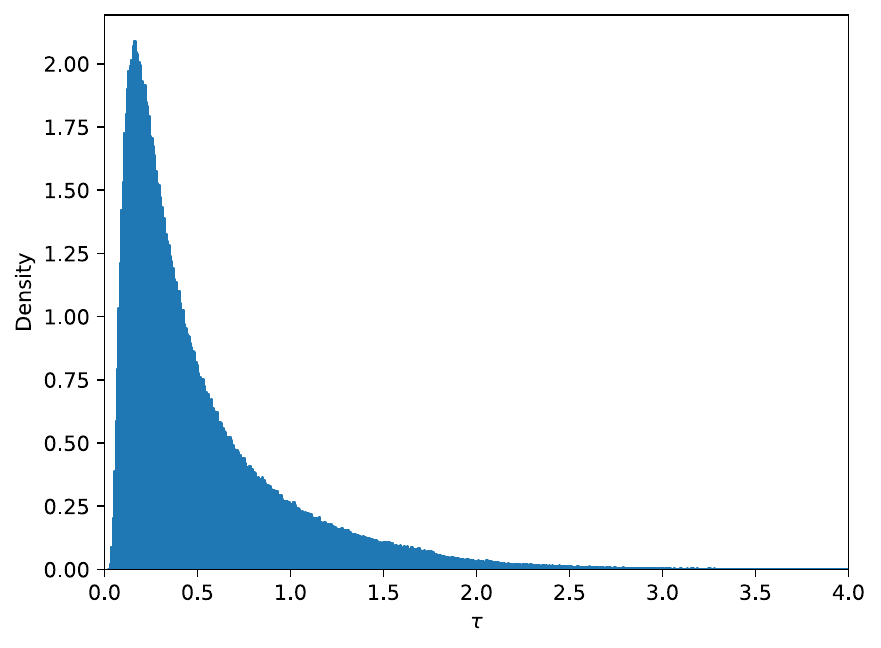}
        \caption{Distribution of $\tau$}
        \label{fig:stoppingdist}
    \end{subfigure}%
    \quad \quad
    \begin{subfigure}[b]{0.45\linewidth}
        \centering
        \includegraphics[width=\linewidth]{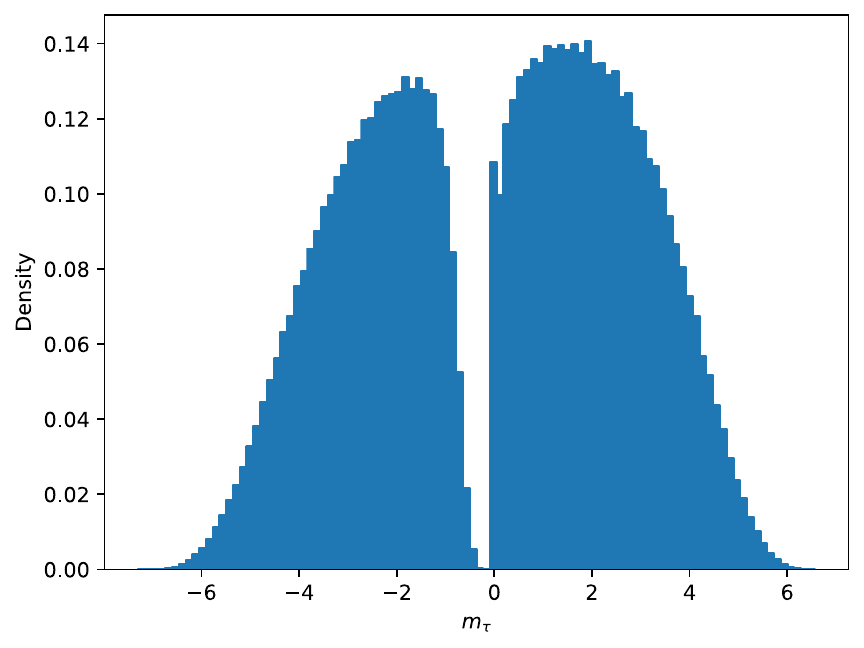}
        \caption{Distribution of $m_\tau$}
        \label{fig:postmeanstop}
    \end{subfigure}
    \caption{Distributions of $\tau$ and $m_\tau$.}
    \label{fig:stopping_and_postmean}
\end{figure}

The preceding analysis presumes that Bob appropriates the entire welfare surplus generated by more efficient experimentation. Alternatively, Alice might instead choose to increase social welfare beyond $V_0^*$. Figure \ref{fig:lambda_v_welfare} plots $V_0$ as a function of the multiplier $\lambda$, Figure \ref{fig:compare_bounds} shows how variations in $V_0$ affect the optimal boundaries, and Figures \ref{fig:welfare_v_meantime} and \ref{fig:welfare_v_medtime} present the expected and median stopping times over the same range of welfare values. The last two figures reveal that the mapping from $V_0$ to the expected sample size $\mathbb{E}[\tau]$ is highly non-linear. As discussed above, Alice can replicate the welfare level of a traditional RCT while using $48\%$ fewer observations. However, if she instead raises $V_0$ up to the point where Bob's average sample size matches that of a classical RCT (i.e., $\mathbb{E}[\tau^*] = 1$), this only yields a 3.2\% welfare gain relative to $V_0$. This welfare level is denoted by $V^*_1$ in Figure \ref{fig:welfare_v_meantime}. These findings appear indicate that, under our calibration, pharmaceutical firms bear most of the inefficiency associated with relying on standard RCTs in clinical trial design: switching to more efficient approaches would significantly reduce firms' experimentation costs, while delivering only relatively small gains in overall social welfare.

Figure \ref{fig:calibrated_bounds_B} illustrates how the stopping boundaries vary with the benefit when $V_0$ is held constant. Reducing $B$ by half or increasing it two-fold has surprisingly little effect on the magnitudes of the stopping boundaries.

We describe additional comparative statics exercises in Appendix \ref{sec:comparativeplots}.

\begin{figure}
    \centering
     \begin{subfigure}[b]{0.45\linewidth}
        \centering
        \includegraphics[width=\linewidth]{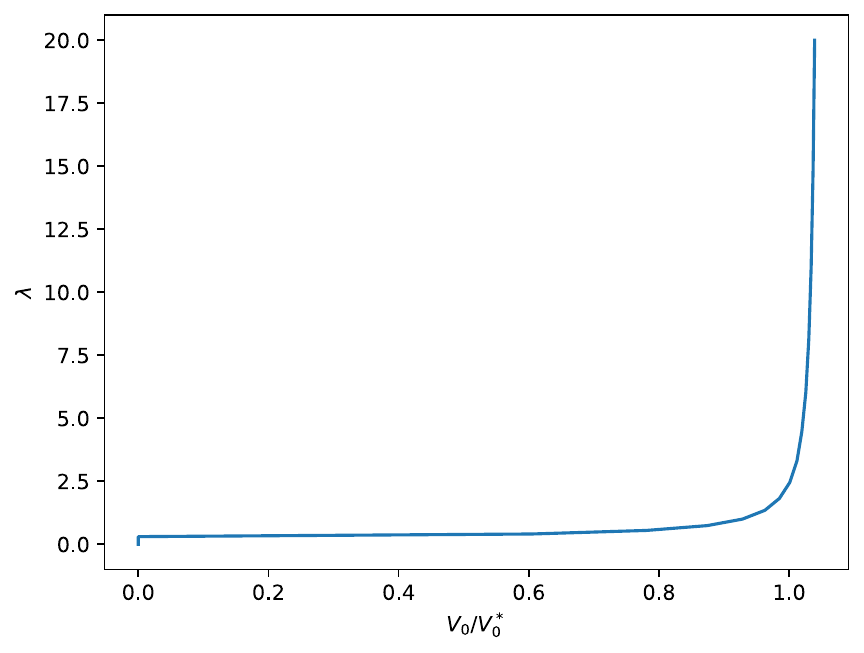}
        \caption{$\lambda$ vs. $V_0$}
        \label{fig:lambda_v_welfare}
    \end{subfigure}%
    \begin{subfigure}[b]{0.45\linewidth}
        \centering
        \includegraphics[width=\linewidth]{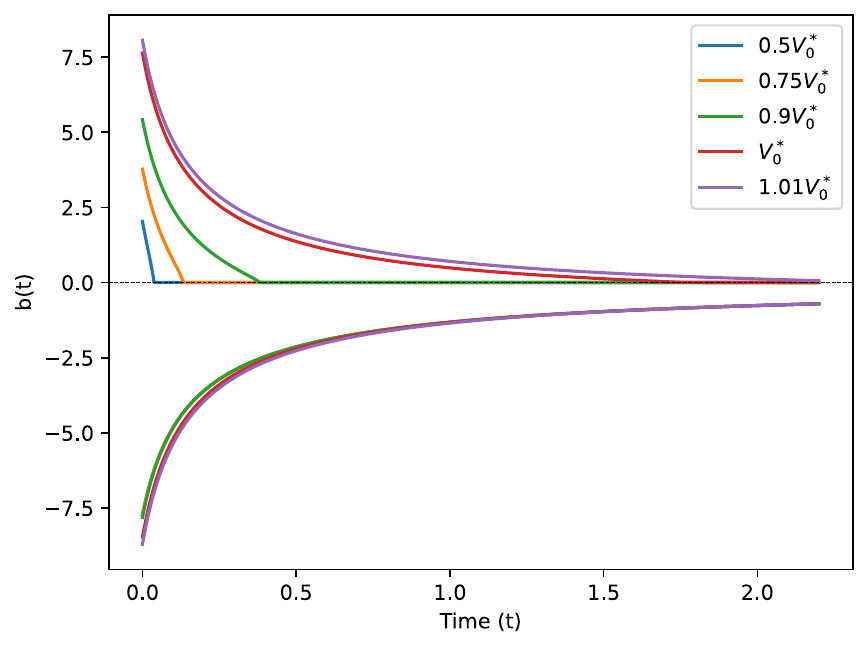}
        \caption{Stopping Boundaries for Various $V_0$}
        \label{fig:compare_bounds}
    \end{subfigure}   
    \begin{subfigure}[b]{0.45\linewidth}
        \centering
        \includegraphics[width=\linewidth]{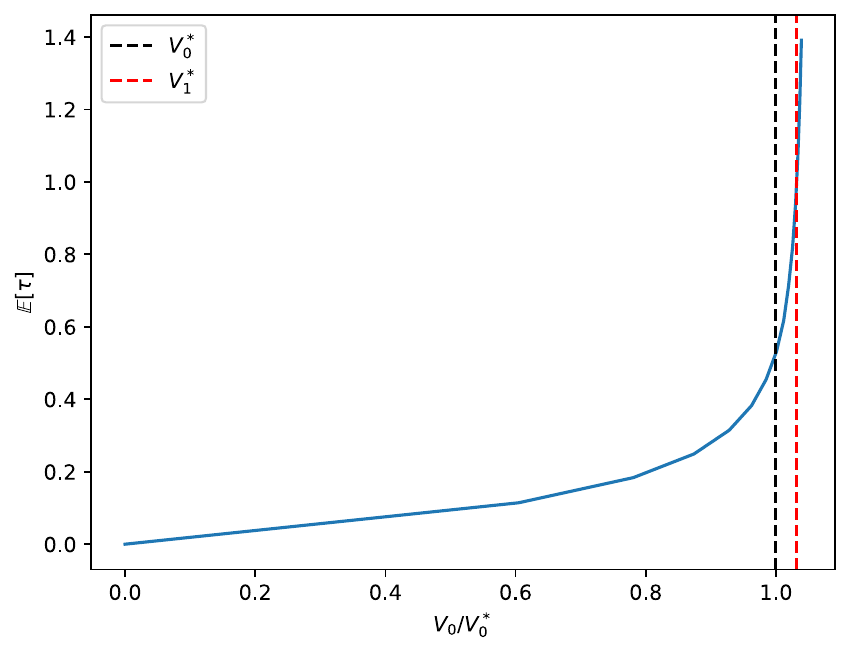}
        \caption{$\mathbb{E}[\tau^*]$ vs. $V_0/V^*_0$}
        \label{fig:welfare_v_meantime}
    \end{subfigure}%
    \begin{subfigure}[b]{0.45\linewidth}
        \centering
        \includegraphics[width=\linewidth]{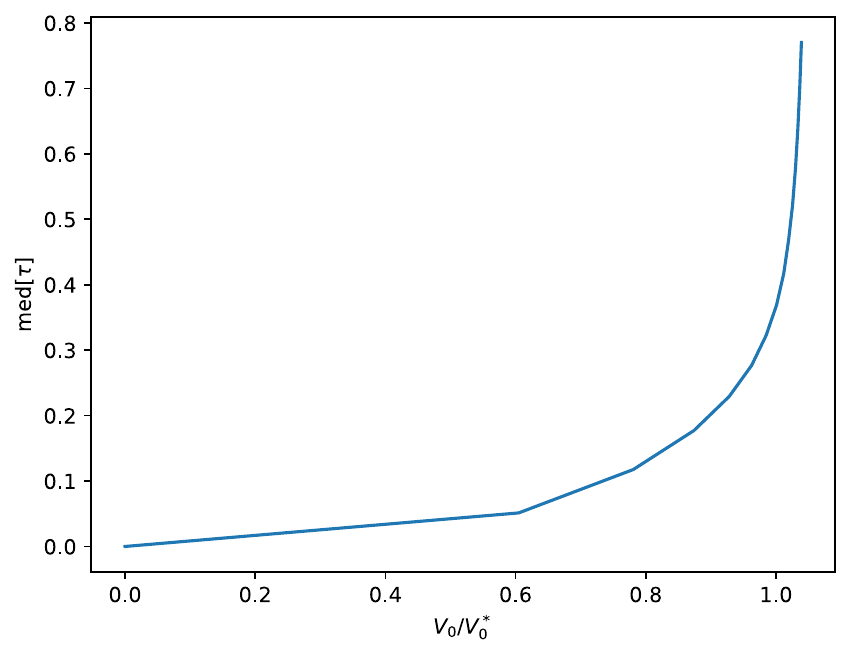}
        \caption{$\text{med}[\tau^*]$ vs. $V_0/V_0^*$}
        \label{fig:welfare_v_medtime}
    \end{subfigure}
    \caption{The effects of changing $V_0$.}
\end{figure}

\begin{figure}[h]
    \centering
    \includegraphics[width=0.55\linewidth]{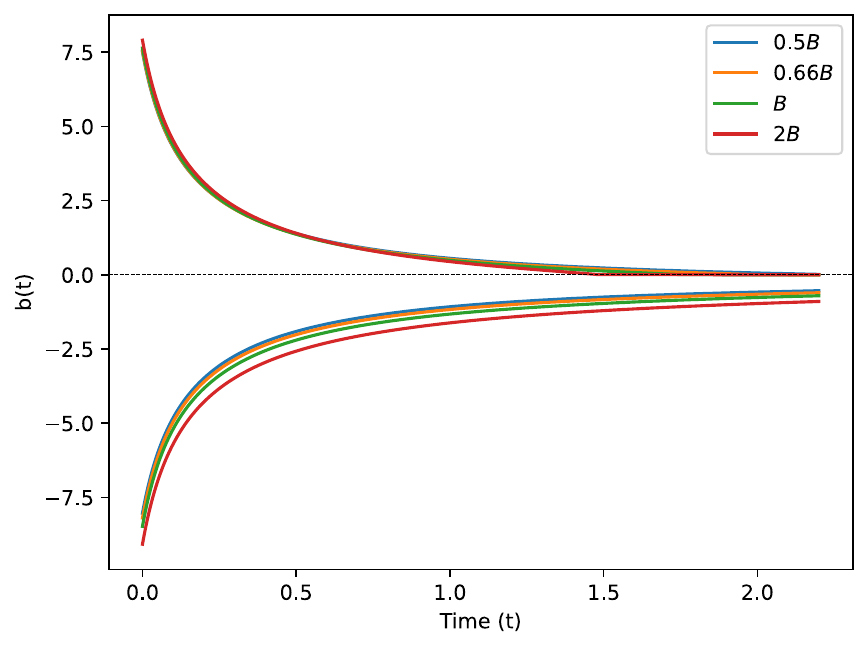}
    \caption{Stopping boundaries for various $B$ values.}
    \label{fig:calibrated_bounds_B}
\end{figure}

\section{Local Asymptotics and Parametric Models}\label{sec:local_asymptotics}

Up to this point, we have studied the optimal experimental design problem under incremental learning (in continuous time). In this section, we demonstrate that the continuous-time formulation can be viewed as an approximation to the corresponding experimental design problem in discrete time, under so-called `small-cost asymptotics'.

The central idea, following \cite{adusumilli2025}, is to consider a setting in which Bob and Alice care about detecting treatment effect differences on the order of $1/\sqrt{n}$ because the population size is of the order $O(n^{3/2})$. This motivates scaling time $t$ in units of $n$ observations, and employing the scaling for $B_n, \gamma_n$ described in (\ref{eq:scaling}). For simplicity, in what follows, we rescale units so that $B = 1$.  

\subsection{Experimental design in discrete time}
As before, Bob runs an experiment aimed at persuading Alice to take a particular action. The new feature is that the experiment now unfolds in discrete time.

Following \cite{lattimore}, we adopt the “stack-of-rewards” representation to formalize experimental design. Concretely, we imagine that there exists an infinite stack of observations \(\bm{y}_{a} := \{Y_{i}^{(a)}\}_{i=1}^{\infty}\) for each treatment $a$, generated at the outset as i.i.d draws from a parametric family \(\{P_{\theta^{(a)}}^{(a)}\}\), where \(\theta^{(a)} \in \mathbb{R}^d\) is unknown. Initially, neither Alice nor Bob observes any of these realizations. Each time a treatment is selected, we can imagine that Bob sees the current top element of the corresponding treatment stack; this element is then taken out of consideration.

Bob's sampling strategy is described by a policy \(\{\pi_{n,j}\}_{j} \equiv \{\pi_{n,\lfloor nt\rfloor}\}_{t}\), which specifies, at each period \(j\), the probability of assigning that observation to treatment 1 as a function of past data. The treatment assignment is then given by \(A_{j} \sim \text{Bernoulli}(\pi_{n,j})\). Define
\[
q_{n,a}(t) := \frac{1}{n}\sum_{j=1}^{\lfloor nt\rfloor} \mathbb{I}\{A_{j} = a\}
\]
as the number of observations that have been allocated to treatment \(a\) by time \(t\), normalized by $n$. We refer to \(\{q_{n,a}(\cdot)\}_{a}\) as the empirical allocation process. The policy rule \(\{\pi_{n,j}\}_{j}\) can be viewed as a function that takes the stacks \(\ensuremath{(\bm{y}_{1}, \bm{y}_{0})}\) and an exogenous random variable \(U \sim \text{Uniform}[0,1]\) as inputs and produces the realized trajectory of the empirical allocation process \(\{q_{n,a}(\cdot)\}_{a}\). The sampling strategy can thus be characterized by \(\{q_{n,a}(\cdot)\}_{a}\) instead of \(\{\pi_{n,j}\}_{j}\).

Let $\mathcal{F}_t^{\bm{q}_n}$ denote the $\sigma$-algebra generated by
\[
\xi_t \coloneq \left\{U, \{A_j\}_{j=1}^{\lfloor nt\rfloor},\{Y_i^{(1)}\}_{i=1}^{\lfloor nq_{n,1}(t)\rfloor}, \{Y_i^{(0)}\}_{i=1}^{\lfloor nq_{n,0}(t)\rfloor}\right\},
\]
that is, the sequence of actions and realized rewards up to time $\lfloor nt\rfloor$. In addition to the sampling strategy, Bob chooses a stopping time $\tau_n$ that is $\mathcal{F}_{t}^{\bm{q}_n}$-adapted. 

Overall, Bob's decision rule $\bm{d}_n$ consists of the combination of the empirical allocation process $\bm{q}_n(\cdot)$ and the stopping time $\tau_n$. After the experiment concludes, Alice makes a binary decision---either approval $(\delta_n = 1)$ or rejection $(\delta_n = 0)$---with the requirement that $\delta_n$ be $\mathcal{F}_{\tau_n}^{\bm{q}_n}$-measurable.

\subsection{Local asymptotics}

Following \cite{hirano2025asymptotic}, we assess decisions under local perturbations of the form $\{\theta_{0}^{(a)} + h_a/\sqrt{n} : h_a \in \mathbb{R}^{d}\}$, where $\theta_{0}^{(a)}$ is a reference parameter. Let the mean outcomes be defined by $\mu_a(\theta) \coloneq \mathbb{E}_{P_\theta^{(a)}}[Y_i^{(a)}]$. The reference parameters are selected so that $\mu_1\bigl(\theta_0^{(1)}\bigr) - \mu_0\bigl(\theta_0^{(0)}\bigr) = 0$. This setup is motivated by the consideration that Alice and Bob generally must be able to discriminate between treatment effects of order $1/\sqrt{n}$ in order to justify running experiments with sample size on the order of $n$.

To simplify notation, we normalize the means so that $\mu_1\bigl(\theta_0^{(1)}\bigr) = \mu_0\bigl(\theta_0^{(0)}\bigr) = 0$. Consequently, under local perturbations, we have
\begin{equation}\label{eq:Taylor_expansion_of_mean}
\mu_{n,a}(h) = \mu_a(\theta_0^{(a)} + h/\sqrt{n}) \approx \dot{\mu}_a^{\intercal}h/\sqrt{n},
\end{equation}
where $\dot\mu_a \coloneq \nabla_{\theta}\mu_a(\theta_0^{(a)})$.

Let $\nu$  denote a dominating measure for $\{P_{\theta}^{(a)}:\theta\in\mathbb{R}^{d},a\in\{0,1\}\}$, and set $p_{\theta}^{(a)}\coloneq dP_{\theta}^{(a)}/d\nu$ . We require $\{P_{\theta}^{(a)}\}_{\theta}$ to be quadratic mean differentiable (qmd): 

\begin{asm}\label{asm:qmd}
The class $\{P_\theta^{(a)}:\theta \in \mathbb{R}^d\}$ is differentiable in quadratic mean around $\theta^{(a)}_0$ for each $a \in \{0,1\}$, i.e., there exists a score function $\psi_a(\cdot)$ such that for each $h_a \in \mathbb{R}^d$, \[\int\left[\sqrt{p^{(a)}_{\theta^{(a)}_0 + h_a}} - \sqrt{p^{(a)}_{\theta_0^{(a)}}} - \frac{1}{2}h_a^{\intercal}\psi_a\sqrt{p^{(a)}_{\theta_0^{(a)}}}\right]^2d\nu = o\left(|h_a|^2\right).\]Furthermore, the information matrix $I_a \coloneq \mathbb{E}_0[\psi_a\psi_a^\intercal]$ is invertible for $a \in \{0,1\}$. 
\end{asm}

This assumption is fairly weak and holds for nearly all standard distributions, such as the Normal, Cauchy, Exponential, and Poisson distributions.

In what follows, define $P^{(a)}_h \coloneq P^{(a)}_{\theta_0^{(a)}+h/\sqrt{n}}$ and let $\mathbb{E}_{h_a}[\cdot]$ denote the corresponding expectation. Furthermore, let $P^{(a)}_{n,h}$ be the joint distribution of $\bm{y}_{n}^{(a)} = \left\{Y_1^{(a)},Y_2^{(a)}, \ldots\right\}$, where $Y_i^{(a)} \sim \text{i.i.d. } P^{(a)}_h$. We also set $\textbf{\textit{h}} \coloneq (h_1,h_0)$ and define 
$$
P_{n,\textbf{\textit{h}}} := P^{(1)}_{n,h_1} \times P^{(0)}_{n,h_0},
$$ 
with $\mathbb{E}_{n,\textbf{\textit{h}}}[\cdot]$ as its associated expectation.

For each treatment \(a\), the score process is defined as
\[
X_{n,a}(t):=\frac{I_{a}^{-1/2}}{\sqrt{n}}\sum_{i=1}^{\left\lfloor nq_{n,a}(t)\right\rfloor }\psi_{a}(Y_{i}^{(a)}).
\]
\cite{adusumilli-continuous-arts} demonstrates that the sample paths of \(\{X_{n,a}(\cdot),q_{n,a}(\cdot)\}\) up to time \(t\) form asymptotically sufficient statistics for the adaptive experiment. 

\subsection{Payoffs, priors and welfare}\label{subsec:Payoffs_and_welfare}

Let $\mu_n(\bm{h}) = \mu_{n,1}(h_1) - \mu_{n,0}(h_0)$. We assume that Alice and Bob's Bernoulli utility functions (exclusive of sampling costs) take the form
$$
\sqrt{n} u\left(\mu_n(\bm{h}),\delta_n; \alpha \right),\quad B_n \delta_n + \gamma_n  u\left(\mu_n(\bm{h}),\delta_n; \alpha^\prime \right),
$$
where the function $u(\cdot)$ is specified in Section \ref{subsec:Utility_spec}. To understand the scaling of Alice's welfare, recall from (\ref{eq:Taylor_expansion_of_mean}) that $\mu_{n,0}(h_0) = O(1/\sqrt{n})$, so we effectively scale the treatment effects by $\sqrt{n}$ to prevent them from vanishing asymptotically.

Observe that $\sqrt{n} u\left(\mu_n(\bm{h}),\delta_n; \alpha \right) = u\left(\sqrt{n}\mu_n(\bm{h}),\delta_n; \alpha \right)$, so for a given $\bm{h}$, Alice's expected utility under the strategy pair $(\bm{d}_n, \delta_n)$ is given by
\begin{align}
W_n^A((\bm{d}_n, \delta_n), \bm{h}) &\coloneq \mathbb{E}_{n,\bm{h}}\left [u(\sqrt{n}\mu_n(\bm{h}),\delta_n;\alpha) \right].
\end{align}
For Bob's expected utility under a given $\bm{h}$, we scale by the factor $1/B_n$ and define
\begin{equation}\label{eq:Bobs-payoff-finite-n}
\begin{aligned}
W_n^B((\bm{d}_n, \delta_n), \bm{h}) &:= \mathbb{E}_{n,\bm{h}}[\delta_n] + \frac{\gamma_n}{B_n}  \mathbb{E}_{n,\bm{h}}\left [u(\mu_n(\bm{h}),\delta_n;\alpha^\prime) \right]- \frac{C_n}{B_n} n \mathbb{E}_{n,\bm{h}}[\tau_n] \\
&= \mathbb{E}_{n,\bm{h}}[\delta_n] + \gamma \mathbb{E}_{n,\bm{h}}\left [u(\sqrt{n}\mu_n(\bm{h}),\delta_n;\alpha^\prime) \right] - c \mathbb{E}_{n,\bm{h}}[\tau_n],
\end{aligned}
\end{equation}
where the second line uses the asymptotic scaling regime from (\ref{eq:scaling}).

\subsubsection{Prior choice}\label{subsubsec:Prior_choice}
We assume that Alice and Bob share a common Gaussian prior $\Gamma_0(\bm{h})$ on the local parameter $\bm{h}$. This prior induces a corresponding prior $p_0$ on the scaled treatment effects $(\dot{\mu}_1^\intercal h_1, \dot{\mu}_0^\intercal h_0)$. We further assume that $\Gamma_0(\bm{h})$ factorizes into the prior $p_0$ on $(\dot{\mu}_1^\intercal h_1, \dot{\mu}_0^\intercal h_0)$ and an independent prior $\tilde{\Gamma}_0$ on the remaining components of $\bm{h}$. The multiplicative separability of $p_0$ and $\tilde{\Gamma}_0$ can be justified by an invariance requirement: inference about $(\dot{\mu}_1^\intercal h_1, \dot{\mu}_0^\intercal h_0)$ should not depend on the values of `nuisance parameters' associated with the other components of $\bm{h}$.

For our asymptotic framework, we additionally assume that $\Gamma_0$ remains fixed as $n$ increases. As highlighted in \cite{adusumilli2025bandits}, such local priors provide a more accurate description of the asymptotic behavior of decisions because their influence does not vanish with growing sample size.

\subsubsection{Finite sample welfare}
Let $\delta_n^*$ denote Alice's optimal strategy under the given prior, given Bob's choice of $\bm{d}_n$. Alice requires that social welfare exceed a predetermined level $V_0$, up to a slackness term $\epsilon_n \to 0$. Formally, her requirement is
$$
\int W_n^A((\bm{d}_n, \delta_n^*), \bm{h}) \, d\Gamma_0(\bm{h}) \ge V_0 - \epsilon_n.
$$
Introducing an arbitrarily small relaxation of the welfare constraint makes it less stringent, allowing it to be met in the limit rather than exactly. Given Alice's choice, Bob's experimental design problem is to select $\bm{d}_n$ to maximize his own expected welfare, i.e., Bob solves
\begin{equation}\label{eq:Bobs_primal_fixed_n}
\begin{aligned}
\bar{W}_n^* := \ & \sup_{\bm{d}_n} \int W_n^B((\bm{d}_n, \delta_n^*), \bm{h}) \, d\Gamma_0(\bm{h}) \\
& \text{s.t. } \int W_n^A((\bm{d}_n, \delta_n^*), \bm{h}) \, d\Gamma_0(\bm{h}) \ge V_0 - \epsilon_n.
\end{aligned}
\end{equation}

\subsection{Limit approximations and upper bounds on welfare}\label{subsec:limit_approximations}

Consider a limit experiment in which the underlying informational environment consists of signal processes of the form
\begin{equation}\label{eq:limit_experiment_process}
Z_a(\gamma) = I_a^{1/2}h_a \gamma + \bar{W}_a(\gamma),  
\end{equation}
where $\bar{W}_1(\cdot), \bar{W}_0(\cdot)$ are independent $d$-dimensional Brownian motions. As before, in this limit experiment, Bob chooses an experimental strategy $\bm{d}$---consisting of an allocation strategy $\{q_a(\cdot)\}_a$ and a stopping time $\tau$---while Alice makes a binary decision $\delta$. 

Suppose further that Alice and Bob's expected utilities in the limit experiment, under a strategy pair $(\bm{d}, \delta)$, take the form
\begin{equation}
\begin{aligned}
   W^A((\bm{d}, \delta), \bm{h}) &\coloneq \mathbb{E}_{\bm{h}}\left [u(\mu(\bm{h}),\delta;\alpha) \right]; \textrm{ and}\\
   W^B((\bm{d}, \delta), \bm{h}) &\coloneq \mathbb{E}_{\bm{h}}[\delta] + \gamma \mathbb{E}_{\bm{h}}\left [u(\mu(\bm{h}),\delta;\alpha) \right] - c \mathbb{E}_{\bm{h}}[\tau],
\end{aligned}
\end{equation}
where $\mu(\bm{h}) := \dot{\mu}_1^\intercal h_1 - \dot{\mu}_0^\intercal h_0$. Based on the above, we can write Bob's experimental design problem in this limit-experiment as:
\begin{equation} \label{eq:Bobs_primal_limit}
\begin{aligned}
\bar{W}^* := \ & \sup_{\bm{d}} \int W^B((\bm{d}, \delta^*_{\bm{d}}), \bm{h}) \, d\Gamma_0(\bm{h}) \\
& \text{s.t. } \int W^A((\bm{d}, \delta^*_{\bm{d}}), \bm{h}) \, d\Gamma_0(\bm{h}) \ge V_0.
\end{aligned}
\end{equation}

Observe that $W^A((\bm{d}, \delta), \bm{h})$ and $W^B((\bm{d}, \delta), \bm{h})$ depend on $\bm{h}$ solely through the terms $\dot{\mu}_1^\intercal h_1$ and $\dot{\mu}_0^\intercal h_0$. Together with the assumption that we restrict attention to multiplicatively separable priors $\Gamma_0$ (as specified in Section \ref{subsubsec:Prior_choice}), this implies that the signal processes $\dot{\mu}_1^\intercal I_1^{-1/2}Z_1(\cdot)$ and $\dot{\mu}_0^\intercal I_0^{-1/2} Z_0(\cdot)$ constitute sufficient statistics for the limit experiment. Consequently, we can directly relate this experiment to the one in Section \ref{Sec:Incremental_learning} by equating $\mu_a$ with $\dot{\mu}_a^\intercal h_a$, $z_a(\cdot)$ with $\dot{\mu}_a^\intercal I_a^{-1/2} Z_a(\cdot)$, and $\sigma_a^2$ with $\dot{\mu}_a^\intercal I_a^{-1} \dot{\mu}_a$. See Lemma \ref{lem:equivalence_of_limit_experiments} for a formal proof of the equivalence between these two experiments.

The asymptotic representation theorem of \citep{adusumilli-continuous-arts} enables us to match the expected welfare of Alice and Bob under any sequence of strategy pairs, $\{(\bm{d}_n, \delta_n)\}_n$, with that from a strategy pair, $(\bm{d},\delta)$, in the limit experiment described above. Here, we do not necessarily require $\delta_n, \delta$ to be Bayes-optimal relative to $\bm{d}_n, \bm{d}$; they just represent permissible strategies for Alice. The result on the matching of expected welfare relies on the following assumptions:

\begin{asm}\label{asm:finite_stopping_time}
    There exists $T < \infty$ such that $\tau_n \le T$ for all $n$. 
\end{asm}

\begin{asm}\label{asm:asm-on-mu}
    There exists $\dot\mu_a \in \mathbb{R}^d$ and $\varepsilon_n \to 0$ independent of $(a, \bm{h})$ such that $\sqrt{n}\mu_{n,a}(\bm{h}) = \dot\mu_a^\intercal h_a + \varepsilon_n|h_a|^2$.
\end{asm}

Assumption \ref{asm:finite_stopping_time} requires that the stopping times are bounded. While our analysis of the incremental learning regime allows for unbounded stopping times, it poses technical challenges for asymptotic approximations; therefore, we restrict our attention to stopping times bounded by some arbitrarily large, but finite, $T$. Assumption \ref{asm:asm-on-mu} is a mild requirement on the smoothness properties of $\mu_{n,a}(\textbf{\textit{h}})$.
 
\begin{thm}\label{thm:limit_experiment}
    Suppose Assumptions \ref{asm:qmd}, \ref{asm:finite_stopping_time}, and \ref{asm:asm-on-mu} hold. Then, for any sequence of strategy pairs $(\bm{d}_n, \delta_n)$ and corresponding welfare functions $W^A_n(\cdot, \bm{h}), W_n^B(\cdot, \bm{h})$, there exists a subsequence $(\bm{d}_{n_k}, \delta_{n_k})$ and a strategy pair $(\bm{d},\delta)$ in the limit experiment, with welfare functions $W^A(\cdot, \textbf{\textit{h}}), W^B(\cdot, \bm{h})$, such that $W^A_{n_k}(\cdot, \textbf{\textit{h}}) \to W^A(\cdot, \textbf{\textit{h}})$ and $W^B_{n_k}(\cdot, \textbf{\textit{h}}) \to W^B(\cdot, \textbf{\textit{h}})$ for each $\textbf{\textit{h}}$.
\end{thm}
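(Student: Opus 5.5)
Our plan is to reduce the statement to the asymptotic representation theorem of \cite{adusumilli-continuous-arts}. We treat the triple $(\bm{q}_n,\tau_n,\delta_n)$ as a single adaptive decision rule whose ``terminal action'' is $(\tau_n,\delta_n)$, taking values in the compact set $[0,T]\times\{0,1\}$ by Assumption \ref{asm:finite_stopping_time}. Both $W^A_n$ and $W^B_n$ are then expectations under $P_{n,\bm{h}}$ of functionals of the form $\mathbb{E}_{n,\bm{h}}[\delta_n]$, $\mathbb{E}_{n,\bm{h}}[\tau_n]$ and $\mathbb{E}_{n,\bm{h}}[u(\sqrt{n}\mu_n(\bm{h}),\delta_n;\alpha)]$. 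For fixed $\bm{h}$, the last of these is affine in $\delta_n$ with coefficients $\alpha\sqrt{n}\mu_n(\bm{h})$ and $(1-\alpha)(-\sqrt{n}\mu_n(\bm{h}))$. By Assumption \ref{asm:asm-on-mu}, $\sqrt{n}\mu_n(\bm{h})=\mu(\bm{h})+\varepsilon_n(|h_1|^2+|h_0|^2)$. Since $\delta_n\in\{0,1\}$ is bounded, the error term contributes $O(\varepsilon_n)\to 0$ for each fixed $\bm{h}$. It therefore suffices to show convergence, along a subsequence, of the joint laws of $(\tau_n,\delta_n)$ under every $P_{n,\bm{h}}$ to those of some $(\tau,\delta)$ in the limit experiment. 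Boundedness of $\tau_n\le T$ then gives convergence of $\mathbb{E}_{n,\bm{h}}[\tau_n]$.

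The first key step is to construct the limit. Under $P_{n,\bm{0}}$ we consider the joint tuple $\{X_{n,a}(\cdot),q_{n,a}(\cdot)\}_a$ together with $(\tau_n,\delta_n)$. The paths $q_{n,a}$ are $1$-Lipschitz and the score processes are tight in $D[0,T]$ by Donsker's theorem. Since $\tau_n$ and $\delta_n$ lie in a compact set, Prohorov's theorem gives a jointly weakly convergent subsequence $n_k$.

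The second step moves from $\bm{0}$ to general $\bm{h}$ by Le Cam's third lemma. By Assumption \ref{asm:qmd}, the log-likelihood ratio $\log dP_{n,\bm{h}}/dP_{n,\bm{0}}$ restricted to $\mathcal{F}^{\bm{q}_n}_{\tau_n}$ admits the expansion
\[
\sum_a h_a^\intercal I_a^{1/2}X_{n,a}(\tau_n)-\tfrac12\sum_a q_{n,a}(\tau_n)h_a^\intercal I_a h_a+o_{P}(1).
\]
This expansion holds uniformly over stopping times bounded by $T$, because the stack-of-rewards construction reduces it to a uniform-in-$t$ statement about partial sums of scores. It follows that the subsequence also converges under each $P_{n_k,\bm{h}}$. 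The limit law is the $\bm{0}$-limit tilted by the exponential martingale, which is exactly the law of the experiment (\ref{eq:limit_experiment_process}) with drift $I_a^{1/2}h_a$.

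The main obstacle is the third step: showing that the limit object is a legitimate strategy pair in the limit experiment. We need an allocation process satisfying the measurability constraint relative to $\mathcal{G}_{\gamma_1,\gamma_0}$, a stopping time adapted to $\mathcal{F}^{\bm{q}}_t$, and a $\delta$ measurable at $\tau$, all possibly using external randomization $U$. We would not redo this argument. Instead we would invoke the asymptotic representation theorem in \cite{adusumilli-continuous-arts} directly, which provides exactly this realization for sequential experiments with bounded stopping times, applied with the enlarged terminal action $(\tau_n,\delta_n)$. If that theorem is stated only for stopping times, we would attach $\delta_n$ by a disintegration/randomization argument: conditionally on the limit sufficient statistics at $\tau$, the limit law of $\delta$ is represented as a measurable function of them and an independent uniform draw.

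Finally, we would combine the steps. Weak convergence of $(\tau_{n_k},\delta_{n_k})$ under each $\bm{h}$, together with boundedness and the $O(\varepsilon_n)$ remainder, gives $W^A_{n_k}(\cdot,\bm{h})\to W^A(\cdot,\bm{h})$ and $W^B_{n_k}(\cdot,\bm{h})\to W^B(\cdot,\bm{h})$ pointwise in $\bm{h}$.
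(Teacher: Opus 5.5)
Your proposal is correct and follows essentially the paper's route. You reduce via Assumption \ref{asm:asm-on-mu} to convergence of $\mathbb{E}_{n,\bm{h}}[\delta_n]$ and $\mathbb{E}_{n,\bm{h}}[\tau_n]$, extract a jointly convergent subsequence under $P_{n,\bm{0}}$ using the representation theorem of \cite{adusumilli-continuous-arts} and Prohorov, transfer to each $\bm{h}$ by a Le Cam third-lemma argument (the paper checks $E[Z]=1$ via Novikov and optional sampling), and build $\delta$ from the weak limit $\bar\delta$ through the terminal sufficient statistics $\{X_a(\tau),q_a(\tau)\}_a$. The only cosmetic difference is that the paper takes the fractional rule $\delta=E[\bar\delta\mid\{X_a(\tau),q_a(\tau)\}_a]$, whereas your randomized version needs an extra independent uniform beyond the one used by $(\bm{q},\tau)$, which you can obtain by splitting $U$.
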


The quantities $\bar{W}_n^*$ and $\bar{W}^*$, defined in (\ref{eq:Bobs_primal_fixed_n}) and (\ref{eq:Bobs_primal_limit}), denote Bob's optimal welfare in the finite-$n$ and limit settings, respectively. We next establish $\limsup_{n \to \infty} \bar{W}_n^* \le \bar{W}^*$, which shows that Bob's welfare in the limit experiment serves as an asymptotic upper bound for his welfare in the finite-$n$ case.

\begin{thm}\label{thm:asymptotic_upper_bound}
    Suppose Assumptions \ref{asm:qmd}, \ref{asm:finite_stopping_time}, and \ref{asm:asm-on-mu} hold, and the prior $\Gamma_0$ is Gaussian.\footnote{An inspection of the proof of this theorem reveals that the extra conditions imposed on the Gaussian prior $\Gamma_0$ in Section \ref{subsubsec:Prior_choice} are in fact not needed here.} Then, $\lim_{T \to \infty} \limsup_{n \to \infty} \bar{W}_n^* \le \bar{W}^*$. 
\end{thm}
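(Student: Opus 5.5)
Fix $T$ and take a subsequence $n_k$ along which $\int W_{n_k}^B((\bm{d}_{n_k},\delta^*_{n_k}),\bm{h})\,d\Gamma_0(\bm{h})$ converges to $\limsup_n \bar{W}_n^*$, where for each $n$ the decision rule $\bm{d}_n$ (with $\tau_n\le T$) is chosen to be within $1/n$ of the supremum in (\ref{eq:Bobs_primal_fixed_n}). Applying Theorem \ref{thm:limit_experiment} and passing to a further subsequence, we obtain a limit strategy pair $(\bm{d},\delta)$ such that $W^A_{n_k}\to W^A$ and $W^B_{n_k}\to W^B$ pointwise in $\bm{h}$. The plan has three steps. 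First, transfer this pointwise convergence to the integrated (Bayes) welfare. Second, check that $(\bm{d},\delta)$ is feasible for (\ref{eq:Bobs_primal_limit}). Third, deal with the fact that $\delta$ need not be Alice's Bayes response $\delta^*_{\bm{d}}$.

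\emph{Step 1 (integration).} I would use dominated convergence with respect to $\Gamma_0$. Since $\tau_n\le T$ and $|\delta_n|\le 1$, we have
$$|W^B_n|\le 1+\gamma\,\mathbb{E}_{n,\bm{h}}|u(\sqrt n\mu_n(\bm{h}),\delta_n;\alpha')|+cT.$$
By Assumption \ref{asm:asm-on-mu}, $|\sqrt n\mu_n(\bm{h})|\le C(|h_1|+|h_0|+|h_1|^2+|h_0|^2)$ uniformly in $n$. The same bound holds for $W^A_n$. This envelope is integrable under the Gaussian prior $\Gamma_0$, so the integrated welfares converge to their limit counterparts.

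\emph{Step 2 (feasibility).} The constraint $\int W_n^A\,d\Gamma_0\ge V_0-\epsilon_n$ with $\epsilon_n\to0$ passes to the limit and gives $\int W^A((\bm{d},\delta),\bm{h})\,d\Gamma_0\ge V_0$.

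\emph{Step 3 (Bayes response).} This is where I expect the main subtlety. In the limit experiment, Alice's Bayes response $\delta^*_{\bm{d}}$ weakly dominates $\delta$ for her, so the constraint still holds with $\delta^*_{\bm{d}}$. However, replacing $\delta$ by $\delta^*_{\bm{d}}$ could \emph{lower} Bob's payoff, because Bob prefers approval. I would handle this with a monotonicity argument. By the tie-breaking convention and the structure $u_B=B\delta+\gamma u(\cdot;\alpha')$, Bob is hurt by a switch only on events where Alice's posterior strictly favors rejection. The key observation is that $\delta_{n}^*$ is Bayes optimal in finite samples, and Bayes-optimality should be preserved in the limit. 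I would argue this by contradiction. If $\delta$ were not Bayes optimal for Alice under $\bm{d}$, there would be an $\mathcal{F}_\tau$-measurable improvement $\tilde\delta$. Approximating $\tilde\delta$ by a continuous functional of the limiting sufficient statistics, and using the asymptotic representation in the converse direction (the finite-$n$ analog built from score processes, as in \cite{adusumilli-continuous-arts}), would yield a finite-$n$ rule that strictly beats $\delta^*_{n_k}$ for large $k$, a contradiction. Hence $\delta$ attains Alice's Bayes value, and it can differ from $\delta^*_{\bm{d}}$ only on tie events $\{m_\tau=0\}$, where by convention $\delta^*_{\bm{d}}=1$ is weakly better for Bob.

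Combining the three steps, for each $T$,
$$\limsup_n\bar W_n^*\le\int W^B((\bm{d},\delta^*_{\bm{d}}),\bm{h})\,d\Gamma_0\le \bar W^*.$$
Since the right-hand side does not depend on $T$, taking $T\to\infty$ gives the claim. The continuity and approximation argument in Step 3 is the delicate part. As a fallback, I would reformulate both primal problems through their Lagrangians (\ref{eq:twotreatments}) and Theorem \ref{thm:Sampling_strategy}(i): for fixed $\lambda$, the Bayes response maximizes the weighted sum $u_B+\lambda u_A$ pointwise, which sidesteps the need to identify Alice's best response in the limit.
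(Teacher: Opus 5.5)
Your proposal is correct. Its skeleton is the paper's: extract a subsequence via Theorem \ref{thm:limit_experiment}, integrate the pointwise welfare convergence against $\Gamma_0$, check feasibility, show the limit $\delta$ is Bayes optimal for Alice, and use tie-breaking in Bob's favor for the final inequality. You depart from the paper in two sub-steps.

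\textbf{Integration.} You use the direct envelope $1+cT+\gamma C(|\bm h|+|\bm h|^2)$, which Assumption \ref{asm:asm-on-mu} and the Gaussian prior make integrable. The paper instead truncates $u$ at $M$ in (\ref{eq:welfare_truncation})--(\ref{eq:welfare_truncation_3}), mainly because it needs bounded losses later.

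\textbf{Bayes optimality of the limit $\delta$.} The paper invokes Lemma \ref{lem:convergence_optimal_Bayes}. That lemma proves two-sided convergence of Alice's optimal Bayes welfare for any sequence of experiments whose likelihood ratios converge. It writes the welfare as $\mathbb{E}_{n,0}[\phi(\varphi_n)]$ with $\phi$ convex, discards the singular part of $\mathbb{P}_{n,\bm h}$, uses uniform integrability, and approximates $\Gamma_0$ in total variation by a finitely supported prior.

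Your achievability argument needs only one direction, because $\int W^A((\bm d,\delta))\,d\Gamma_0\le\int W^A((\bm d,\delta^*_{\bm d}))\,d\Gamma_0$ holds automatically. It reuses the change-of-measure step (Step 2) of the proof of Theorem \ref{thm:limit_experiment} for a single constructed rule. No contradiction is needed: a direct inequality suffices. To make it airtight, say two things:
\begin{itemize}
\item The improvement can be the smoothed rule $\min\{1,\max\{0,m_\tau/\epsilon\}\}$ in place of $\mathds{1}\{m_\tau\ge 0\}$. This loses at most $\epsilon$ of Alice's welfare, since her welfare is $\mathbb{E}[g\,m_\tau]$ plus a constant. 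Under the Gaussian prior, $m_\tau$ is a continuous function of $\{X_a(\tau),q_a(\tau)\}_a$, so no randomization is required.
\item The joint convergence of these stopped statistics with $\varphi_{n_k}(\bm h;\tau_{n_k})$ is exactly what the proof of Theorem \ref{thm:limit_experiment} delivers.
\end{itemize}

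\textbf{Comparison.} Your route is more self-contained for this theorem. The paper's lemma is more general, and it is reused in Step 2 of the proof of Theorem \ref{thm:optimal-strategy-discrete-time}, where two-sided convergence of Alice's Bayes welfare is genuinely needed.

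Your bookkeeping is also slightly cleaner than the paper's. Taking $1/n$-optimal $\bm d_n$ avoids assuming optimizers exist. You also extract the subsequence along Bob's $\limsup$, rather than along Alice's $\liminf$ as the paper does.
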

 
\subsection{Attaining the upper bound} 

We now demonstrate that the upper bound stated in Theorem \ref{thm:asymptotic_upper_bound} can be attained using finite-sample counterparts of the optimal strategies derived in Section \ref{Sec:Incremental_learning}. 

In what follows, we assume that the normal prior $\Gamma_0(\bm{h})$ induces a distribution $p_0$ over $(\dot{\mu}_1^\intercal h_1,  \dot{\mu}_0^\intercal h_0)$ for which Assumption \ref{asm-1}(iv) is satisfied. Recall that under this condition, the optimal sampling strategy in the limit experiment reduces to the Neyman allocation in every period.

Let $\Sigma_{aa}$ denote the prior variance of $\dot{\mu}_a^\intercal h_a$ and define
$$
\begin{aligned}
\mu_{n,a}(t) &:= \frac{\dot{\mu}_a^\intercal I_a^{-1/2}  X_{n,a}(t) + \Sigma_{aa}^{-2}\mu_0^{(a)}}{\sigma_a^{-2} q_{n,a}(t) + \Sigma_{aa}^{-2}}, \\
m_{n,t} &:= \mu_{n,1}(t) - \mu_{n,0}(t).
\end{aligned}
$$
In this notation, $m_{n,t}$ is the sample analogue of $m_t$, the posterior average treatment effect in the limit experiment. Furthermore, let $b^+(t; \lambda)$ and $b^-(t; \lambda)$ be the optimal stopping boundaries defined in Theorem \ref{thm:optimal-stopping-rule}, where we now make their dependence on $\lambda$ explicit.

Take $\lambda^*$ to be the Lagrange multiplier corresponding to $V_0$ in the limit experiment. For some $T < \infty$ and $\xi > 0$, we construct finite sample analogs of $\bm{d}^*$ as $\bm{d}^*_{n,T,\xi} = (\pi^*_n, \tau^*_{n,T,\xi})$, where 
\begin{equation} \label{eq:asymptotically_optimal_strategy}
\begin{aligned}
    \pi^*_{n,a}(t) &= \mathds{1}\left[q_{n,a}(t) \leq \frac{\sigma_a}{\sigma_0 + \sigma_1}t\right], \textrm{ and}\\
    \tau^*_{n,T,\xi} &= \inf_{t \geq 0} \left\{m_{n,t} \not\in (b^-(t; \lambda^*), b^+(t;\lambda^*) + \xi)\right\} \wedge T,
\end{aligned}    
\end{equation}
and it may be recalled from Section \ref{subsec:limit_approximations} that $\sigma_a^2 := \dot{\mu}_a^\intercal I_a^{-1} \dot{\mu}_a$. The policy rule $\pi^*_{n,a}(t)$ realizes the Neyman allocation $q_{n,a}^*(t) = \sigma_a t /(\sigma_1 + \sigma_0)$, up to an approximation error of order $O(1/(nt))$. The stopping time $\tau^*_{n,T,\xi}$ is a finite-sample counterpart of $\tau^*$ in which $m_{n,t}$ substitutes for $m_t$, but with two further modifications: we cap the stopping time at an arbitrarily large horizon $T$, and we enlarge the upper boundary by an arbitrarily small $\xi > 0$. The latter adjustment ensures that Alice never enters her indifference region, which corresponds to $b^+(t) = 0$. From a formal standpoint, this is required to eliminate discontinuity issues when proving convergence of welfare. Practically, it also means that the exact tie-breaking rule Alice uses within her indifference region is no longer relevant.

The theorem below shows that, for sufficiently large $T$ and sufficiently small $\xi$, the policy $\bm{d}_{n,T,\xi}^*$ asymptotically satisfies Alice's welfare constraint while also delivering Bob's asymptotically optimal welfare level $\bar{W}^*$. Therefore, this experimentation strategy is asymptotically optimal.

\begin{thm}\label{thm:optimal-strategy-discrete-time}
    Suppose Assumptions \ref{asm:qmd}, \ref{asm:finite_stopping_time}, and \ref{asm:asm-on-mu} hold. Further assume that the prior $\Gamma_0$ is Gaussian and can be factored into a prior $p_0$ over $(\dot{\mu}_1^\intercal h_1, \dot{\mu}_0^\intercal h_0)$, which fulfills Assumption \ref{asm-1}(iv), and an independent prior $\tilde{\Gamma}_0$ over the remaining components of $\bm{h}$. Then
    $$
       \lim_{T \to \infty} \lim_{\xi \to 0}\lim_{n \to \infty} \int W_n^A\left((\bm{d}_{n,T,\xi}^*, \delta_{n,T,\xi}^*), \bm{h} \right) d\Gamma_0(\bm{h}) \ge V_0,
    $$
    where $\delta_{n,T,\xi}^*$ denotes Alice's Bayes-optimal strategy given $\bm{d}_{n,T,\xi}^*$, and
    $$
        \lim_{T \to \infty} \lim_{\xi \to 0}\lim_{n\to\infty} \int W^B_n\left((\bm{d}^*_{n,T,\xi}, \delta^*_{n,T,\xi}), \bm{h} \right)d\Gamma_0(\bm{h}) = \bar W^*.
    $$
\end{thm}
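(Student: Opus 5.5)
The argument rests on weak convergence of the finite-sample sufficient statistics under the proposed policy, followed by continuous mapping applied to the stopping time and Alice's decision. First, fix $\bm{h}$. Under $P_{n,\bm{0}}$, the policy $\pi^*_n$ gives $q_{n,a}(t)\to q_a^*(t)=\sigma_a t/(\sigma_1+\sigma_0)$ uniformly on $[0,T]$, with a deterministic error of order $1/n$. Donsker's theorem for the score partial sums gives $X_{n,a}(\cdot)\Rightarrow \bar W_a(\cdot)$ jointly across $a$, and therefore $X_{n,a}(q_{n,a}(\cdot))\Rightarrow \bar W_a(q_a^*(\cdot))$. Assumption \ref{asm:qmd} yields the LAN expansion of the log-likelihood ratio $\log dP_{n,\bm{h}}/dP_{n,\bm{0}}$ along the realized sample paths. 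By Le Cam's third lemma (as in \cite{adusumilli-continuous-arts}), under $P_{n,\bm{h}}$ the processes converge to $Z_a(q_a^*(\cdot))$ from (\ref{eq:limit_experiment_process}).

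Consequently $m_{n,\cdot}\Rightarrow m_\cdot$ on $[0,T]$, where $m_t$ is the limit-experiment posterior mean under Neyman allocation. This uses the equivalence in Lemma \ref{lem:equivalence_of_limit_experiments} and Theorem \ref{thm:Sampling_strategy}(iii). Next define the map
\[
g_\xi(\omega)=\inf\{t:\omega(t)\notin(b^-(t),b^+(t)+\xi)\}\wedge T .
\]
I will show $g_\xi$ is continuous at almost every path of $m_\cdot$. The boundaries are continuous by Theorem \ref{thm:optimal-stopping-rule}(iv), $b^-<0$ by part (vii), and $b^++\xi>0$. Since $m_t$ is a time-changed Brownian motion, it immediately crosses any continuous boundary it touches, so the exit time is a continuity point.

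The same continuity gives $(\tau^*_{n,T,\xi},m_{n,\tau^*_{n,T,\xi}})\Rightarrow(\tau_{T,\xi},m_{\tau_{T,\xi}})$ for each $\bm{h}$. Alice's Bayes decision $\delta^*_{n}$ is asymptotically $\mathds{1}[m_{n,\tau_n}\ge 0]$. Because $m_{\tau_{T,\xi}}\ne 0$ almost surely, except on the event $\{\tau=T\}$, where $m_T$ has a continuous law, the indicator is a continuity point. Hence $\mathbb{E}_{n,\bm h}[\delta_n^*]$, $\mathbb{E}_{n,\bm h}[\tau_n]$ and $\mathbb{E}_{n,\bm h}[u(\sqrt n\mu_n(\bm h),\delta_n^*;\alpha)]$ converge to their limit counterparts. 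Uniform integrability holds because $\tau\le T$ and $\delta$ is bounded, and Assumption \ref{asm:asm-on-mu} controls the $\varepsilon_n|h|^2$ term. Integrating over $\Gamma_0$ by dominated convergence, using Gaussian tails, then gives the limits of both welfare integrals for fixed $(T,\xi)$.

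The main obstacle is showing that Alice's finite-sample Bayes rule $\delta^*_{n,T,\xi}$ matches $\mathds{1}[m_{n,\tau_n}\ge0]$ asymptotically. Her exact posterior under $\Gamma_0$ uses the full likelihood rather than the LAN approximation. My first attempt would show that the finite-$n$ posterior of $\mu_n(\bm h)$ given $\mathcal F_{\tau_n}$ converges in total variation to the Gaussian posterior built from the score statistics, via a Bernstein–von Mises-type argument under local priors. The $\xi$-buffer keeps $m_{n,\tau_n}$ bounded away from $0$ except on the event $\{\tau=T\}$, so small posterior perturbations do not flip her decision. If this proves too delicate, I would fall back on bounding Alice's welfare from below by that of the suboptimal rule $\mathds{1}[m_{n,\tau_n}\ge0]$. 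That bound suffices for the first claim, and for Bob the two rules coincide with probability tending to one off the boundary.

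Finally, let $\xi\to0$ and $T\to\infty$. As $\xi\to0$, $\tau_{T,\xi}\uparrow\tau^*\wedge T$ almost surely, again by the immediate-crossing property and continuity of $b^+$, and bounded convergence then gives convergence of both welfare terms. As $T\to\infty$, $\tau^*\wedge T\to\tau^*$, and $\mathbb{E}[\tau^*]<\infty$ follows from Theorem \ref{thm:Sampling_strategy}(i) together with the bounded boundaries of Theorem \ref{thm:optimal-stopping-rule}(i). Monotone and dominated convergence then give Alice's welfare $\to\mathbb E[S_\alpha(m_{\tau^*})]\ge V_0$, where the inequality holds by complementary slackness at $\lambda^*$, and Bob's welfare $\to\bar W^*$ by Theorem \ref{thm:Sampling_strategy}.
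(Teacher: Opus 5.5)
Your overall structure matches the paper's Steps 1--2: weak convergence of $(q_{n,a},X_{n,a})$, then convergence of $m_{n,\cdot}$, then almost-sure continuity of the exit-time functional, then $\xi\to0$ and $T\to\infty$ inside the limit experiment. But the step you flag as the main obstacle is never closed, and the second claim rests on it.

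Alice's finite-sample rule is $\delta^*_{n,T,\xi}=\mathds{1}\{\mathbb{E}[\sqrt{n}\mu_n(\bm{h})\mid\mathcal F^{\bm{q}_n}_{\tau_n}]\ge 0\}$, computed from the exact likelihood under $\Gamma_0$. Nothing you prove ties it to $\mathds{1}[m_{n,\tau_n}\ge0]$.

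\begin{itemize}
\item The Bernstein--von Mises route is only announced, and it is harder than it looks. The decision depends on the posterior \emph{mean} of the unbounded functional $\mu(\bm{h})$. Total-variation convergence of posteriors does not control that mean without a separate uniform-integrability argument. The experiment is also adaptively stopped, and $P_{n,\bm{h}}$ need not be absolutely continuous with respect to $P_{n,\bm{0}}$.
\item Your fallback says Alice's Bayes welfare is at least that of the rule $\mathds{1}[m_{n,\tau_n}\ge0]$. This gives a $\liminf$ version of the first claim.
\item For Bob, however, the fallback simply asserts that the two rules coincide with probability tending to one, which is exactly what has to be shown. Bob's welfare depends on $\delta^*_{n,T,\xi}$ directly, through $\mathbb{E}_{n,\bm{h}}[\delta^*_{n,T,\xi}]$. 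A lower bound on Alice's welfare says nothing about it.
\end{itemize}

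The paper avoids finite-sample posteriors entirely.

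\begin{itemize}
\item Lemma \ref{lem:convergence_optimal_Bayes} shows that Alice's \emph{optimal} Bayes welfare converges to its limit-experiment value. It truncates $u$, approximates $\Gamma_0$ by a finitely supported prior, writes the Bayes welfare as $\mathbb{E}_{n,0}$ of a supremum of linear functions of the likelihood ratios, and uses their uniform integrability.
\item In Step 3, along any subsequence, Prohorov's theorem and the argument in the proof of Theorem \ref{thm:limit_experiment} give an $\mathcal F^{\bm{q}^*}_{\tau^*_{T,\xi}}$-measurable limit rule $\delta$. Both players' welfare under $\delta^*_{n,T,\xi}$ converges to their welfare under $(\bm{d}^*_{T,\xi},\delta)$.
\item Alice's welfare under $\delta$ equals the limiting Bayes welfare, so $\delta$ is Bayes optimal.
\item The $\xi$-buffer gives $m_{\tau^*_{T,\xi}}\neq 0$ a.s., so the Bayes rule is a.s.\ unique. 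Hence $\delta=\mathds{1}[m_{\tau^*_{T,\xi}}\ge 0]$, and Bob's welfare converges.
\end{itemize}

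Your own fallback lower bound can be completed the same way. Combine it with the upper bound that every such subsequential limit $\delta$ is a feasible rule in the limit experiment. That already gives convergence of Alice's Bayes welfare without Lemma \ref{lem:convergence_optimal_Bayes}. What your proposal is missing is this identification of the limit of $\delta^*_{n,T,\xi}$ through uniqueness of the limiting Bayes rule.

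Two smaller points:
\begin{itemize}
\item Immediate crossing is best justified by the monotonicity of $b^\pm$ in Theorem \ref{thm:optimal-stopping-rule}(ii), not by continuity alone.
\item $\tau_{T,\xi}$ \emph{decreases}, not increases, to $\tau^*\wedge T$ as $\xi\downarrow0$.
\end{itemize}
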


In applications, we recommend choosing $\xi$ to be a small multiple of $\sigma$, e.g., $0.05\sigma$. Regarding $T$, we advise taking $T = \infty$ in practice, as the restriction to a finite $T$ in Theorems \ref{thm:asymptotic_upper_bound} and \ref{thm:optimal-strategy-discrete-time} is primarily to simplify the theory. 

\subsubsection{Unknown variances}

Up to this point, our analysis has taken $\sigma_1, \sigma_0$ as known. These objects are the information matrices evaluated at the reference parameters $\theta_{0}^{(1)}, \theta_{0}^{(0)}$. Conceptually, within the local asymptotic framework, these reference parameters are treated as if they are known beforehand. As emphasized in \cite{adusumilli2025}, although in applications one would ideally design procedures that either adapt to or are invariant with respect to these quantities, the local asymptotic framework itself cannot capture the effect of estimating them.

In principle, one could use a ‘forced exploration' phase (see, for example, \citealp[Chapter 33, Note 7]{lattimore}): for the first $\bar{n} = n^{c}$ observations, with $c \in (0,1)$, we set $\pi_{n,1}^{*}(t) = 1/2$. This is equivalent to applying the equal allocation strategy until time $\bar{t} = n^{c-1}$. The data collected in this preliminary stage are then used to obtain consistent estimators $\hat{\sigma}_{1}^{2}, \hat{\sigma}_{0}^{2}$ of $\sigma_{1}^{2}, \sigma_{0}^{2}$. From time $\bar{t}$ onward, we apply the asymptotically optimal experimentation rule $\bm{d}_{n,T,\xi}^*$, replacing $\sigma_{1}, \sigma_{0}$ with their estimates $\hat{\sigma}_{1}, \hat{\sigma}_{0}$. In practical applications, since clinical trials are typically conducted in groups, we suggest using $\pi_{n,1}^{*}(t) = 1/2$ for the first group of participants.

In the special case of Bernoulli outcomes, the local asymptotic structure forces $\sigma_1 = \sigma_0$, which in turn implies that $\pi_{n,1}^{*}(t) = 1/2$ throughout.

\subsection{Simulation with Bernoulli outcomes}

To study the finite-sample behavior of our proposed procedures, we simulate the stopping times under Bernoulli outcomes when using the asymptotically optimal strategy  from (\ref{eq:asymptotically_optimal_strategy}).

Let $Y^{(a)} \sim \text{Bernoulli}(\theta^{(a)})$ represent the outcome under treatment $a$. As noted earlier, conducting a local asymptotic analysis with Bernoulli outcomes requires specifying reference parameters $\theta^{(1)}_0$, $\theta^{(0)}_0$ such that $\theta^{(1)}_0 = \theta^{(0)}_0 := \theta_0$. We then posit that each $\theta^{(a)}$ is drawn independently from a Gaussian prior, $\theta^{(a)} \sim \mathcal{N}(\theta_0, \sigma^2 \nu^2/n)$, where $\sigma^2 = 4\theta_0(1-\theta_0)$.\footnote{Recall that $\sigma^2 := (\sigma_1 + \sigma_0)^2$, and in the Bernoulli case $\sigma_1^2 = \sigma_0^2 = \theta_0(1-\theta_0)$.} In our simulations, we set $\theta_0 = 0.5$ and $\nu^2 = \varrho_0/2$, where $\varrho_0 \approx 9.734$ is the value estimated in Section \ref{subsec:Estimating_prior_parameters}. With this choice, we have $\sqrt{n}(\theta^{(1)} - \theta^{(0)})/\sigma \sim \mathcal{N}(0,\varrho_0)$, in agreement with the prior specification obtained in Section \ref{subsec:Estimating_prior_parameters}.

For Bernoulli outcomes, the score function takes the form $\psi_a(Y) = (\theta_0(1-\theta_0))^{-1}(Y^{(a)} - \theta_0)$, so the (normalized) score process can be written as 
$$
X_{n,a}(t):=\frac{1}{\sqrt{n\theta_0(1-\theta_0)}}\sum_{i=1}^{\left\lfloor nq_{n,a}(t)\right\rfloor } (Y_i^{(a)} - \theta_0).
$$
We then use this specification of $X_{n,a}(\cdot)$ in (\ref{eq:asymptotically_optimal_strategy}). Figure \ref{fig:bernoulli_n_v_welfare} illustrates how Alice and Bob's finite-sample welfare under the asymptotically optimal strategy converges as $n$ grows.\footnote{Due to numerical approximations used to compute the asymptotic welfare, Bob's finite-sample welfare appears slightly higher than the approximate asymptotic benchmark.}

\begin{figure}[h]
    \centering
     \begin{subfigure}[b]{0.45\linewidth}
        \centering
        \includegraphics[width=\linewidth]{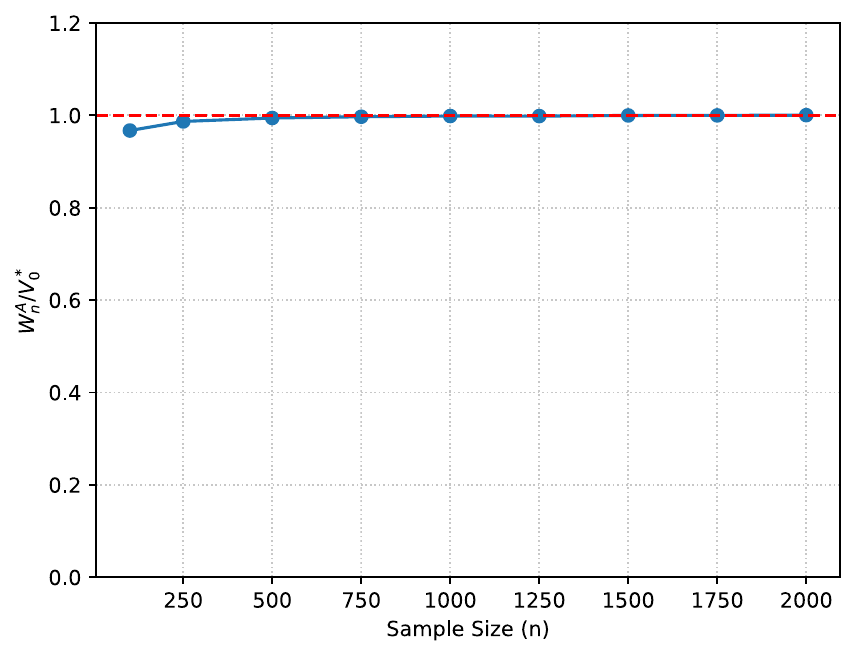}
        \caption{Alice}
    \end{subfigure}%
    \begin{subfigure}[b]{0.45\linewidth}
        \centering
        \includegraphics[width=\linewidth]{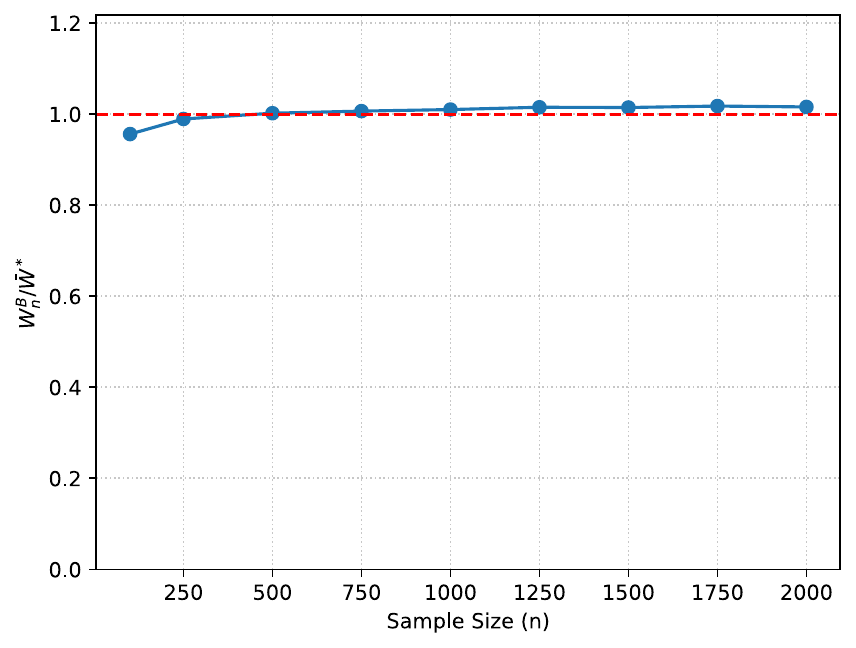}
        \caption{Bob}
    \end{subfigure}   
    \caption{Finite-sample welfare under Bernoulli outcomes.}
    \label{fig:bernoulli_n_v_welfare}
    \begin{justify}
    \scriptsize\textit{Notes: This figure depicts the finite-sample welfare of Alice and Bob under different values of $n$. Alice’s welfare is plotted relative to $V_0$, the asymptotic lower bound on welfare. Bob’s welfare is plotted relative to $\bar{W}_n^*$, the maximal welfare he can achieve in the limit experiment given $V_0$.}
    \end{justify}
\end{figure}

\section{Conclusion}

In this article, we propose that regulators directly target social welfare when regulating experimental designs. We characterize the optimal design in a continuous-time setting with two treatments and Gaussian priors over the mean effects of these treatments. Finally, we show that the optimal design in continuous time is asymptotically optimal under parametric outcome distributions and construct a finite sample analog of the optimal design that can be used in practical settings. 

\bibliography{bib}
\bibliographystyle{chicago}
\appendix
\section{Proofs of the Results from Section \ref{Sec:Incremental_learning}}

\subsection{Proof of Theorem \ref{thm:Sampling_strategy}}

We begin by reviewing some properties of randomized stopping times. An $\mathcal{F}^{\bm{q}}_t$-adapted randomized stopping time is a Markov kernel $\tau(\omega, dt)$ from the space of sample paths $\Omega$ to $[0,\infty)$, such that $\tau(\omega, [0,t])$ is $\mathcal{F}^{\bm{q}}_t$-measurable for all $t$. For any two randomized stopping times $\tau_0, \tau_1$ and a weight $\alpha \in [0,1]$, the mixture $\alpha\tau_1 + (1-\alpha)\tau_0$ is defined as the randomized stopping time that selects $\tau_1$ with probability $\alpha$ and $\tau_0$ with probability $1-\alpha$; equivalently, $\alpha\tau_1 + (1-\alpha)\tau_0$ can be considered a mixture of Markov kernels. A sequence of randomized stopping times, $\tau_n$, is said to converge to $\tau$ in the Baxter-Chacon topology if and only if
$$
\lim_{n \to \infty} \mathbb{E}[Y_{\tau_n}] = \mathbb{E}[Y_\tau]
$$
for all bounded, continuous, $\mathcal{F}^{\bm{q}}_t$-adapted processes $Y$. Let $\mathcal{T}_M \equiv \{\tau: \mathbb{E}[\tau] \le M\}$ denote the set of all randomized stopping times whose expectation is bounded by a constant $M$. \cite{baxter1977compactness} establish that $\mathcal{T}_M$ is compact under their namesake topology.

\subsubsection*{Step 1}
We start by proving parts (ii) and (iii) of Theorem \ref{thm:Sampling_strategy} for the dual problem. 

For part (ii), our proof strategy is based on Lemma 5 from \cite{liang-mu-syrgkanis-ecta-2022}. Let $\mu \coloneq \mu_1-\mu_0$. It turns out to be convenient to rewrite the dual problem (\ref{eq:twotreatments}) in terms of the policy invariant measure $\mathbb{P}$ from Section \ref{subsec:Incremental learning}, with $\mathbb{E}[\cdot]$ denoting its corresponding expectation. Define
$$
\begin{aligned}
\delta^{\bm{q}}_t &\coloneq \mathds{1}\left\{ \mathbb{E}\left[u_A(s,1) \vert \mathcal{F}_t^{\bm{q}} \right] \ge \mathbb{E}\left[u_A(s,0) \vert \mathcal{F}_t^{\bm{q}} \right] \right\}.
\end{aligned}
$$ We can then write the dual problem as
\[
    \inf_{\lambda \geq 0}  \sup_{q}\sup_{\tau}\mathbb{E}\left[\mathbb{E} \left[u_B(\mu, \delta^{\bm{q}}_\tau) + \lambda\{u_A(\mu,\delta^{\bm{q}}_\tau) - V_0\} - c\tau \vert \mathcal{F}_\tau^{\bm{q}} \right] \right].
\]
where the decision rule has been separated as $\bm{d} = (q, \tau)$. 

Let $\bm{q}^*$ represent the Neyman allocation, with the associated filtration denoted by $\mathcal{F}_t^* \coloneq \mathcal{F}_t^{\bm{q}^*}$. By construction (see \cite{liang-mu-syrgkanis-ecta-2022}), this strategy achieves the smallest possible posterior variance across all strategies at every time $t$. This minimized posterior variance is denoted as $\varrho_t^*$. Let $\varrho_0$ denote the prior variance and $\varrho_0 - \varrho_t^*$, the reduction in variance due to sampling. The latter is the same as the quadratic variation of $\mathbb{E}[\mu\vert \mathcal{F}_t^*]$. Hence, it follows from the Dambis-Dubins-Schwartz theorem that there exists a Brownian motion $(B^*_v)_{v\in[0,\varrho_0)}$ such that 
\[
    B^*_{\varrho_0-\varrho_t^*} = \mathbb{E}[\mu|\mathcal{F}_t^*].
\]
Let $T^*(\varrho)$ be the time $t$ such that $\varrho_t^* = \varrho_0 - \varrho$. Then, under $\bm{q}^*$, the value of the dual problem becomes
\begin{equation}\label{eq:dual_problem_in_variance_form}
\inf_{\lambda \geq 0} \sup_{\tau}\mathbb{E}\left[\mathbb{E} \left[u_B(\mu, \delta^{\bm{q}^*}_\tau) + \lambda\{u_A(\mu,\delta^{\bm{q}^*}_\tau) - V_0\} - cT^*(\varrho_0 - \varrho_\tau^*) \vert \mathcal{F}_\tau^{*} \right] \right].
\end{equation}
Now, as shown in Section \ref{subsec:Incremental learning}, the posterior distribution of $\mu$ given $\mathcal{F}_t^*$ is Gaussian and is therefore characterized solely by its posterior variance, $\varrho_t^*$, and posterior mean, $B^*_{\varrho_0 - \varrho_t^*}$. Hence, $\delta_\tau^{\bm{q}^*}$ is a function solely of $(B^*_{\varrho_0 - \varrho_t^*}, \varrho_t^*)$, and we can similarly write
$$
\mathbb{E}\left[u_B(\mu, \delta^{\bm{q}^*}_\tau) + \lambda\{u_A(\mu,\delta^{\bm{q}^*}_\tau) - V_0\} - cT^*(\varrho_0 -\varrho_\tau^*) \vert \mathcal{F}_\tau^{*} \right] = g(B^*_{\varrho_0 - \varrho_\tau^*}, \varrho_\tau^*; \lambda),
$$
for some measurable function $g(\cdot, \cdot;\lambda)$. Consequently, by standard time-change arguments (see, e.g., \citealt[Theorem I.2.2]{peskir2006optimal}), the optimal stopping problem in (\ref{eq:dual_problem_in_variance_form}) can be written as 
\begin{equation}\label{eq:time_change_equivalence}
\inf_{\lambda \ge 0} \sup_\tau \mathbb{E}[g(B^*_{\varrho_0 - \varrho_\tau^*}, \varrho_\tau^*;\lambda)] = \inf_{\lambda \ge 0} \sup_\rho \mathbb{E}[g(B_\rho^*, \rho;\lambda)],
\end{equation}
where $\rho$ is a stopping value adapted to the filtration generated by the sample paths of $(B^*_v)_{v\in[0,\varrho_0)}$. 

Let $\mathcal{F}_t^{\bm{q}}$ be the induced filtration under some other allocation rule $\bm{q} = \{q_a(\cdot)\}_a$. Applying the Dambis-Dubins-Schwartz theorem again, there exists a Brownian motion $(B_v)_{v \in [0,\varrho_0]}$ such that 
\[
    B_{\varrho_0-\varrho_t} = \mathbb{E}[\mu|\mathcal{F}_t^{\bm{q}}],
\]
where $\varrho_t$ is the posterior variance of $\mu$ under $\bm{q}$. We know that $\varrho_t \geq \varrho^*_t$. For any $t \geq 0$, we then have $t = T^*(\varrho_0 - \varrho_t^*) \geq T^*(\varrho_0 - \varrho_t)$. Hence, we can bound the dual value of allocation rule $\bm{q}$ from above by
\begin{align}\label{eq:upper_bound_payoff}
    \notag &\inf_{\lambda \geq 0} \sup_{\tau} \mathbb{E}\left[\mathbb{E} \left[u_B(\mu, \delta^{\bm{q}}_\tau) + \lambda\{u_A(\mu,\delta^{\bm{q}}_\tau) - V_0\} - c\tau \vert \mathcal{F}_\tau^{\bm{q}} \right] \right] \\
     & \le \inf_{\lambda \geq 0} \sup_{\tau} \mathbb{E}\left[\mathbb{E} \left[u_B(\mu, \delta^{\bm{q}}_\tau) + \lambda\{u_A(\mu,\delta^{\bm{q}}_\tau) - V_0\} - cT^*(\varrho_0 - \varrho_\tau) \vert \mathcal{F}_\tau^{\bm{q}} \right] \right].
\end{align}
The posterior distribution of $\mu$ given $\mathcal{F}_t^{\bm{q}}$ is the same as the posterior distribution of $\mu$ given $\mathcal{F}_t^*$ as long as the posterior mean and variances are the same. Hence, we can make another time change $\rho = \varrho_0 - \varrho_\tau$, and rewrite the upper bound in (\ref{eq:upper_bound_payoff}) as 
$$
\inf_{\lambda \ge 0} \sup_\rho \mathbb{E}[g(B_\rho, \rho;\lambda)],
$$
where $\rho$ is adapted to the filtration generated by the sample paths of $(B_v)_{v\in[0,\varrho_0)}$. This is the same as the right hand side of (\ref{eq:time_change_equivalence}) since $B_v^*, B_v$ are both standard Brownian motions. Hence, the payoff from using any sampling strategy is upper bounded by the payoff from using the Neyman allocation.

Part (iii) of the theorem follows from part (ii) and Lemma 1 of \cite{fudenberg-strack-strzalecki2018}.  

\subsubsection*{Step 2} Next, we demonstrate that the primal and dual problems lead to the same value under two restrictions: (1) the sampling strategy is fixed at the Neyman allocation, $q^*(t)$; and (2) the set of stopping times is restricted to $\mathcal{T}_M$ for some $M < \infty$. These restrictions will be relaxed in the next step. 

Under the above restrictions, the primal problem (\ref{eq:constrainedtwotreatments}) can be expressed as
\begin{equation}\label{eq:pf:Thm1}
\sup_{\tau \in \mathcal{T}_M} \inf_{\lambda \geq 0} \mathbb{E}\left[u_B(\mu, \delta^*_\tau) + \lambda \left\{ u_A(\mu, \delta^*_\tau) - V_0 \right\} - c\tau \right],
\end{equation}
where $\tau$ is $\mathcal{F}_t^*$-adapted, and
$$
\delta^*_t := \argmax_{\delta \in \{0,1\}} \mathbb{E}[u_A(\mu, \delta) \mid \mathcal{F}_t^*].
$$
The primal and dual problems would therefore attain the sample value if it is possible to interchange the sup and inf operations in (\ref{eq:pf:Thm1}). 

Define
$$
U_A(t) \coloneq \mathbb{E}[u_A(\mu, \delta^*_t) \mid \mathcal{F}_t^*], \ \textrm{and} \ \ 
U_B(t) \coloneq \mathbb{E}[u_B(\mu, \delta^*_t) \mid \mathcal{F}_t^*].
$$
Then, using the law of iterated expectations, (\ref{eq:pf:Thm1}) can be rewritten as
$
\sup_{\tau \in \mathcal{T}_M} \inf_{\lambda \ge 0} W(\tau, \lambda),
$
where
$$
W(\tau, \lambda) \coloneq \mathbb{E}[U_B(\tau) + \lambda(U_A(\tau) - V_0) - c\tau].
$$
The desired interchange between the sup and inf operations can be justified using Sion's minimax theorem \citep{sion1958general}, which implies
$$
\sup_{\tau \in \mathcal{T}_M} \inf_{\lambda \ge 0} W(\tau, \lambda) = \inf_{\lambda \ge 0} \sup_{\tau \in \mathcal{T}_M}  W(\tau, \lambda),
$$
provided $\mathcal{T}_M$ is a compact set, and $W(\tau, \lambda)$ is concave-convex and continuous in both arguments.

To verify these conditions, we first recall that $\mathcal{T}_M$ is compact under the Baxter-Chacon topology. Regarding convexity, since $W(\tau, \cdot)$ is linear in $\lambda$, it is evidently convex and continuous. Regarding concavity, the definition of the mixture stopping time $\alpha\tau_1 + (1-\alpha)\tau_0$ implies that for any $\tau_1, \tau_0 \in \mathcal{T}_M$ and $\alpha \in [0,1]$,
$$
W(\alpha\tau_1 + (1-\alpha)\tau_0, \lambda) = \alpha W(\tau_1, \lambda) + (1-\alpha)W(\tau_0, \lambda).
$$
Consequently, $W(\cdot, \lambda)$ is concave on $\mathcal{T}_M$. It remains to show that $W(\cdot, \lambda)$ is continuous. Let $\tau_n \in \mathcal{T}_M$ be a sequence of randomized stopping times converging to $\tau \in \mathcal{T}_M$ under the Baxter-Chacon topology. As noted earlier, this implies $\mathbb{E}[\tau_n] \to \mathbb{E}[\tau]$. We now show that $\mathbb{E}[U_A(\tau_n)]$ converges to $\mathbb{E}[U_A(\tau)]$. For any constant $C < \infty$, define
$$
U_A^c(t) \coloneq \mathbb{E}\left[\mathds{1}_{\{\vert \mu \vert \le C\}} u_A(\mu, \delta^*_t) \mid \mathcal{F}_t^* \right],\  \textrm{and }\
U_A^d(t) \coloneq \mathbb{E}\left[\mathds{1}_{\{\vert \mu\vert > C\}} u_A(\mu, \delta^*_t) \mid \mathcal{F}_t^* \right].
$$
We can then decompose 
\begin{equation} \label{pf:Thm1:2}
\left| \mathbb{E}[U_A(\tau_n)] - \mathbb{E}[U_A(\tau)] \right| 
\le \left| \mathbb{E}[U_A^c(\tau_n)] - \mathbb{E}[U_A^c(\tau)] \right| 
+ \mathbb{E}[\vert U_A^d(\tau_n)\vert] + \mathbb{E}[\vert U_A^d(\tau) \vert].
\end{equation}
By the law of iterated expectations, Assumption \ref{asm-1}(ii), and the dominated convergence theorem, for any $\epsilon >0$, we can choose $C$ sufficiently large such that
$$
\sup_{\tau \in \mathcal{T}_M} \mathbb{E}[\vert U_A^d(\tau)\vert] \le \mathbb{E}_{p_0}\left[ \mathds{1}_{\{\vert \mu\vert > C\}} \bar{u}_A(\mu) \right] \le \epsilon.
$$
Thus, the last two terms in (\ref{pf:Thm1:2}) are bounded by $\epsilon$. Regarding the first term, note that $U_A^c(\cdot)$ is a martingale adapted to the filtration $\mathcal{F}_t^*$ generated by the Neyman allocation. By standard stochastic filtering results, $\mathcal{F}_t^*$ is equivalent to an initial enlargement (by the exogenous noise $U$) of a Wiener process filtration generated by an innovation process; hence, $U_A^c(\cdot)$ possesses continuous sample paths.\footnote{By the Ito representation theorem, local-martingales adapted to filtrations generated by a Wiener process have continuous sample paths} Furthermore, it is bounded since $\sup_t U_A^c(t) \le \sup_{\vert \mu \vert \le C} \bar{u}_A(\mu) < \infty$, where the last inequality is due to Assumption \ref{asm-1}(ii). We can therefore apply the definition of Baxter-Chacon convergence to obtain
$$
\mathbb{E}[U_A^c(\tau_n)] - \mathbb{E}_{\pi^*}[U_A^c(\tau)] \to 0.
$$
Since $\epsilon$ is arbitrary, we conclude that $\mathbb{E}[U_A(\tau_n)] \to \mathbb{E}[U_A(\tau)]$. An analogous argument demonstrates that $\mathbb{E}[U_B(\tau_n)] \to \mathbb{E}[U_B(\tau)]$. Taken together, these results confirm $W(\cdot, \lambda)$ is continuous under the Baxter-Chacon topology. This completes the verification of the conditions for Sion's minimax theorem.

\subsubsection*{Step 3} 

To complete the proof, we relax the restrictions placed in Step 2 and show that the primal and dual problems still attain the same value. 

We start by arguing that it is without loss of generality to restrict the stopping times to the set $\mathcal{T}_M$ for some $M < \infty$. Indeed, by Assumption \ref{asm-1}(ii), $\sup_{\bm{d}} \mathbb{E}_{\bm{d}}[u_B(\mu, \delta^*_{\bm{d}})] \le \mathbb{E}_{p_0}[\bar{u}_B(s)] < \infty$. Combined with Assumption \ref{asm-1}(iii), this implies that any optimal stopping time must satisfy $\mathbb{E}[\tau] \le c^{-1}(L+\mathbb{E}_{p_0}[\bar{u}_B(s)]) \coloneq M$. 

It remains to show that the primal and dual problems attain the same value even if we do not restrict the sampling strategy to the Neyman allocation. Let $W^P$ and $W^D$ denote the optimal values under the primal and dual problems. Step 1 of this proof shows that the optimal value, $W^D$, under the dual problem is attained when using the Neyman allocation. Step 2 states that the duality gap is 0 when we restrict the sampling strategy to the Neyman allocation. This implies there exists a (possibly randomized) stopping time $\tau^*$ such that choosing $\bm{d^*}$ to be the combination of Neyman-allocation and $\tau^*$ leads to a value, say $W^*$, in the primal problem that is arbitrarily close to $W^D$. But $W^* \le W^P$, and in turn, $W^P \le W^D$ by the max-min inequality, so we conclude that $W^P = W^D$. That $W^P$ is bounded follows from Assumption \ref{asm-1}(ii) since $W^P \le \mathbb{E}_{p_0}[\bar{u}_B(s)]$.

\subsection{Proof of Lemma \ref{lem:alpha_equivalent_bounds}}
Note that $\max\{x, 0\} - S_\alpha(x) = (1-\alpha)x$. Let $V(t,m;\alpha)$ and $(b^+(t;\alpha), b^-(t;\alpha))$ represent the value function and stopping time boundaries under a given $\alpha$. Then, 
\[
V(t,m;\alpha) := \sup_\tau \mathbb{E}_{\bm{d}}\left[B\mathds{1}\left[m_{t + \tau} \geq 0\right] + \lambda\left(\max\{m_{t + \tau}, 0\} - (1-\alpha)m_{t + \tau}\right) - c\tau \vert m_t = m \right].
\]
Assumption \ref{asm-1} implies that we may restrict attention to stopping times $\tau$ satisfying $\mathbb{E}[\tau] < \infty$ (as shown in Step 3 of the proof of Theorem \ref{thm:Sampling_strategy}). Hence, by Doob's optional stopping theorem, 
$$
\mathbb{E}_{\bm{d}}[m_\tau \vert m_t = m] = \mathbb{E}_{\bm{d}}\left[m + \int_t^{t+\tau} \frac{\sigma^{-1}}{\sigma^{-2}\tilde{t} + \varrho_0^{-1}}dW(\tilde{t}) \right] = m.
$$
We thus conclude 
$$
V(t,m;\alpha) = V(t,m; 1) - \lambda (1-\alpha)m,
$$
which in turn implies
$$
b^+(t; \alpha) = \inf_{m \geq 0} \{B + \lambda \alpha m \geq V(t,m; 1) - \lambda (1-\alpha) m\} = b^+(t; 1).
$$
A similar argument applies to $b^-(t; \alpha)$. \qed

\subsection{Proof of Theorem \ref{thm:optimal-stopping-rule}}

In what follows, we occasionally index $V(t,m), b^+(t), b^-(t)$ by the parameters $\lambda, B$ or $c$ to make their dependence on these quantities explicit. 

    \begin{enumerate}[(i)]
        \item From the definition of $\tau^*$ in (\ref{eq:stopping-time-def}), the stopping set at time $t$, $\mathcal{S}_t$, can be written as
        \begin{align*}
        \mathcal{S}_t & \equiv \{m \geq 0: B + \lambda S_\alpha(m) \geq V(t, m)\} \cup \{m < 0: \lambda  S_\alpha(m) \geq V(t,m)\} \\
        &:= \mathcal{S}_t^+ \cup \mathcal{S}_t^-.
        \end{align*}

        By parts \eqref{lem1m+} and \eqref{lem1m-} of Lemma \ref{lem:lemsupp}, $V(t,m) - \lambda S_\alpha(m)$ is increasing on the domain $m < 0$ and decreasing on the domain $m \geq 0$. Consequently, we can write
        $$
        \mathcal{S}_t^- = \{m \leq b^-(t)\}, \quad
        \mathcal{S}_t^+ = \{m \geq b^+(t)\}
        $$
        for some $b^-(\cdot) \le 0$ and $b^+(\cdot) \ge 0$, with the understanding that $b^-(t),b^+(t)$ may be taken to be $-\infty, \infty$ if the sets are empty. 

        When $B = 0$, we have $b^+(t; B = 0) < \infty$ by \citet[Theorem 4.1]{fudenberg-strack-strzalecki2018}. But by part (vi) of this theorem, $b^+(t;B) \le b^+(t;0)$ for all $B$. Hence, $b^+(t;B) < \infty$.

        To show $|b^-(t)| < \infty$ it suffices to show  $\mathcal{S}_t^-$ is non-empty for all $t$. By Lemma \ref{lem:alpha_equivalent_bounds}, we may take $\alpha = 1$ without loss of generality. Suppose, by way of contradiction, that there exists $t > 0$ such that $V(t,m) > \lambda \max\{m,0\} = 0$ for all $m < 0$ (if $t = 0$, we may use Lemma \ref{lem:lemsupp}(\ref{lem1time-change}) to shift time). Since $V(\cdot)$ is non-increasing in $t$ by Lemma \ref{lem:lemsupp}\eqref{lem1t}, we also have  $V(\tilde{t},m) > 0$ for any $m < 0$ and $\tilde{t} < t$. Fix some state $(t^\prime, m < 0)$ such that $t^\prime < t$ and define the stopping time 
        $$
        \Delta\tau := \left(\inf \{\tilde{t} \ge t^\prime: m_{\tilde{t}} \ge 0\} \wedge t \right) - t^\prime.
        $$
        As $V(\tilde{t},m) > 0$ for any $m < 0$ and $\tilde{t}\in [t^\prime, t]$, stopping is never optimal between $[t^\prime, t]$, as long as $m_{\tilde{t}} < 0$. Then, an upper bound on the continuation-value $V(t^\prime,m)$ can be obtained by supposing that the true state $\mu$ is revealed at $\Delta\tau$; this implies
        $$
        V(t^\prime,m) \le B\mathbb{P}(\mu \ge 0 |t^\prime,m) + \lambda \mathbb{E}[\max\{\mu, 0\}|t^\prime,m] - c\mathbb{E}[\Delta\tau |t^\prime, m].
        $$
        
        Now, as $m \to -\infty$, standard properties of the Gaussian distribution imply $\mathbb{P}(\mu \ge 0 |t^\prime,m) \to 0$ and $\mathbb{E}[\max\{\mu, 0\}|t^\prime,m] \to 0$. Turning to the term $\mathbb{E}[\Delta\tau |t^\prime, m]$, observe that the quadratic variation of $m_t$ is $\varrho_0t/(\sigma^2 \varrho^{-1} +t)$ and is therefore bounded by $\varrho_0$. Consequently, it follows from standard results in stochastic analysis that $\mathbb{P}(\Delta\tau \neq t -t^\prime) \le \exp\{-\vert m \vert^2/2\varrho_0(t-t^\prime)\}$, which implies $\lim_{m \to -\infty}\mathbb{E}[\Delta\tau |t^\prime, m] = (t - t^\prime) > 0$. Taken together, we conclude $\lim_{m \to -\infty} V(t^\prime, m) < 0$ --- a contradiction of our earlier assertion that $V(t,m) > 0$ for all $m < 0$! 
        
        \item By Lemma \ref{lem:lemsupp}\eqref{lem1t}, $V(\cdot, m)$ is decreasing in $t$. This implies, by the definitions of $b^+(t)$, $b^-(t)$ in (\ref{eq:def_of_b_t+}) and (\ref{eq:def_of_b_t-}), that $b^+(t)$ is decreasing in $t$ and $b^-(t)$ is increasing in $t$. 

        We now show that $\lim_{t \to \infty}b^+(t) = 0$. By Lemma \ref{lem:alpha_equivalent_bounds}, we may take $\alpha = 1$ without loss of generality. 
        Define $\nu_t = \frac{\sigma^{-1}}{\sigma^{-2}t + \varrho_0^{-1}}$, let
        $$
        \langle m \rangle (t) \coloneq\int_0^t \upsilon_{t^\prime}d{t^\prime} = \frac{\varrho_0t}{\sigma^2\varrho_0^{-1}+t}
        $$
        denote the quadratic variation of $m_t$, and set $\varsigma(\gamma) = \langle m \rangle^{-1}(\gamma) = \frac{\gamma\sigma^2\varrho_0^{-1}}{\varrho_0 - \gamma}$. Then, by the Dambis-Dubins-Schwartz theorem, $W(\gamma) \coloneq m_{\varsigma(\gamma)} - m_0$ is a standard Brownian motion for $\gamma \in [0, \varrho_0)$. Suppose, by way of contradiction, that there exists $K^* > 0$ such that $b^+(t) \geq K^*$ for all $t$. Consider the state $(t = 0,m = K^*/2)$. The probability of $m_t$ staying within the continuation interval $(b^-(\cdot), b^+(\cdot))$ over all time is then bounded from below by
        \begin{align*}
            \mathbb{P}\left(   m_t \in (0, b^+(t))\ \forall\ t \big\vert m_0 = \frac{K^*}{2} \right) 
            &  \geq \mathbb{P}\left(  m_t \in (0, K^*)\ \forall\ t \big\vert m_0 = \frac{K^*}{2} \right)  \\
            & = \mathbb{P}\left(\sup_{\varsigma \in [0, \varrho_0)} \vert W_\varsigma \vert \in (-K^*/2, K^*/2)\right) > 0.
        \end{align*}
        Because there is a positive probability of waiting forever and incurring infinite cost, we conclude $V(0, K^*/2)$ is $-\infty$. This is a contradiction because the optimal continuation value is always bounded below by Assumption \ref{asm-1}(iii). 

        The argument that $\lim_{t \to \infty}|b^-(t)| = 0$ is analogous. 

        \item Let $b^+(t;B), b^-(t;B)$ denote the optimal boundaries with their dependence on $B$ made explicit. Note that $b^+(t; 0) = -b^-(t;0)$; this follows from Theorem 2 of \cite{fudenberg-strack-strzalecki2018}. As a result,
        \[
            b^+(t;B) \leq b^+(t; 0) = |b^-(t; 0)| \leq |b^-(t; B)|,
        \]
        where the inequalities follow from part (vi) of this theorem.
        
        For the second statement, we start by applying Lemma \ref{lem:alpha_equivalent_bounds} to take $\alpha = 1/2$ without loss of generality. Set $t^*$ such that $b^-(t^*) = B/\lambda$. Such a $t^*$ exists because $b^-(t)$ is a monotonically increasing function that converges to 0 by part (ii) of this theorem, and it is also Lipschitz continuous as established in part (iv) of this theorem.
        
        Consider a pair of stopping strategies $(\tau^+, \tau^-)$ employed starting from the state $(t^*,0)$. By construction, they are adapted to the innovation process 
        \[
            \tilde{m}_{t} \coloneq \int_{t^*}^{t} \upsilon_{t^\prime} dW_{t^\prime},\quad\textrm{where }\ \nu_t := \frac{\sigma^{-1}}{\sigma^{-2}t+\varrho_0^{-1}}.
        \]
        By definition, $\tilde{m}_{\tau^+} \geq 0$ and $\tilde{m}_{\tau^-} < 0$. We can further impose that $\tilde{m}_{\tau^-} > -B/\lambda$, since $\vert b^-(t) \vert$ is decreasing by part (ii), so it would be suboptimal to stop at $\tilde{m}_{\tau^-} \leq -B/\lambda$ after time $t^*$.

        We show that the terminal payoffs when the posterior means take on the values $-2B/\lambda - \tilde{m}_{\tau^+}$ and $-2B/\lambda - \tilde{m}_{\tau^-}$ weakly dominate those when the posterior means take on the values $\tilde{m}_{\tau^+}, \tilde{m}_{\tau^-}$. Indeed, the payoffs at $\tilde{m}_{\tau^+}$ and $-2B/\lambda - \tilde{m}_{\tau^+}$ are, respectively,
        \[
            B + \frac{\lambda}{2}|\tilde{m}_{\tau^+}|;\quad \frac{\lambda}{2}\left|-\frac{2B}{\lambda} - \tilde{m}_{\tau^+}\right| = B + \frac{\lambda}{2}|\tilde{m}_{\tau^+}|.
        \]
        These are identical at every realization of $\tilde{m}_{\tau^+}$. On the other hand, the payoffs at $\tilde{m}_{\tau^-}$ and $-2B/\lambda - \tilde{m}_{\tau^-}$ are, respectively,
        \[
            -\frac{\lambda}{2}\tilde{m}_{\tau^-};\quad \frac{\lambda}{2}\left|-\frac{2B}{\lambda} - \tilde{m}_{\tau^-}\right| = B + \frac{\lambda}{2}\tilde{m}_{\tau^-}.
        \]
        The second payoff is greater than the first for any realization of $\tilde{m}_{\tau^-} \in (-B/\lambda, 0)$.

        By the reflection principle, the terminal payoffs $-2B/\lambda - \tilde{m}_{\tau^+}$ and $-2B/\lambda - \tilde{m}_{\tau^-}$ can be attained starting from the state $(t^*,-2B/\lambda)$ by employing the same stopping strategy $(\tau^+, \tau^-)$, but now adapted to $-\tilde{m}_t$.
        This implies that for any stopping strategy $(\tau^+, \tau^-)$ starting at $(t^*,0)$, and such that $\tilde{m}_{\tau^-} > -B/\lambda$, there exists a corresponding stopping strategy with the same distribution of stopping times (due to the reflection principle), and which delivers a greater stopping payoff when starting from the state $(t^*, -2B/\lambda)$. Taking the supremum over all such stopping times, we obtain
        \begin{equation}\label{pf:Thm2:1}
            B = V(t^*, -2B/\lambda) \geq V(t^*, 0) \geq B.
        \end{equation}
        The first equality in (\ref{pf:Thm2:1}) arises because $b^-(t^*) = -B/\lambda$, so it is optimal to stop at the state $(t^*, -2B/\lambda)$ delivering a payoff $B$ (recall that we set $\alpha = 0.5$). The last inequality in (\ref{pf:Thm2:1}) arises because stopping immediately at the state $(t^*, 0)$ delivers a payoff of $B$. Therefore, (\ref{pf:Thm2:1}) shows that it is optimal to stop immediately at $(t^*, 0)$. 
        
        \item Continuity of $b^+(t)$ follows by similar probabilistic arguments as in \citet[Chapter VI, Section 25.1]{peskir2006optimal}. Lipschitz continuity of $b^-(t)$ is shown in Lemma \ref{lem:Lip_continuity}.
        
        \item We show that $b^+(t)$ is non-decreasing in $\lambda$. The argument that $b^-(t)$ is non-increasing in $\lambda$ is analogous. 
        
        Index $V_1(t,m; \lambda), b^+_1(t; \lambda)$ by $\lambda$ to make explicit the dependence of these quantities on the latter. Consider some $\lambda_1 > \lambda$. By the definition of $b^+(\cdot; \lambda_1)$ and Lemma \ref{lem:lemsupp}\eqref{lem1lam}, 
        \begin{align*}
        & V(t, b^+(t;\lambda_1);\lambda) - B -\lambda S_\alpha(b^+(t;\lambda_1)) \\
        & \le V(t, b^+(t;\lambda_1);\lambda_1) - B - \lambda_1 S_\alpha(b^+(t;\lambda_1)) \le 0.
        \end{align*}
        Hence, it follows by the definition of $b^+(t;\lambda)$ that $b^+(t;\lambda) \le b^+(t;\lambda_1)$.

        \item We start with the case of fixed $\lambda$ and index $V(t,m; B), b^+(t; B)$ by $B$ to make explicit the dependence of these quantities on the latter. Consider some $B_1 > B$. By Lemma \ref{lem:lemsupp}\eqref{lem1r}, 
        \begin{align*}
        & V(t,b^+(t;B); B_1) - B_1 - \lambda S_\alpha(b^+(t;B)) \\
        & \leq V(t,b^+(t;B);B) - B - \lambda S_\alpha(b^+(t;B)) \leq 0.
        \end{align*}
        
        Hence, by the definition of $b^+(\cdot)$, it follows that $b^+(t;B_1) \leq b^+(t;B)$. 
        
        An analogous argument, this time using the second implication of Lemma \ref{lem:lemsupp}\eqref{lem1r}, shows that the lower boundary $b^-(t;B)$ is also decreasing in $B$.

        \item By part (vi) of this theorem, $b^-(t;B)$ is decreasing in $B$ for a fixed $c$ and $\lambda$. Hence, $b^-(t;B) \le -b(t;0)$ for all $B$. But when $B=0$, we are in the same setting as \cite{fudenberg-strack-strzalecki2018}, so we can apply their Theorem 4 to conclude $b^-(t; 0) < 0$. 
\end{enumerate}

\newpage

\section*{\textbf{SUPPLEMENTAL APPENDIX}}

\section{Proofs of the Results from Section \ref{sec:local_asymptotics}}
    
\subsection{Proof of Theorem \ref{thm:limit_experiment}}
By Assumption \ref{asm:asm-on-mu},
$$
\begin{aligned}
W_n^B((\bm{d}_n, \delta_n), \bm{h}) &= \mathbb{E}_{n,\bm{h}}[\delta_n] + \gamma \mathbb{E}_{n,\bm{h}} [\delta_n] \sqrt{n}\mu_n(\bm{h}) - \gamma(1-\alpha^\prime) \sqrt{n} \mu_n(\bm{h}) - c \mathbb{E}_{n,\bm{h}}[\tau_n] \\
&= \mathbb{E}_{n,\bm{h}}[\delta_n] + \gamma \mathbb{E}_{n,\bm{h}}[\delta_n] \mu(\bm{h}) - \gamma(1-\alpha^\prime) \mu(\bm{h}) - c \mathbb{E}_{n,\bm{h}}[\tau_n] + o(1).
\end{aligned}
$$
Similarly, 
$$
W_n^A((\bm{d}_n, \delta_n), \bm{h}) = \mathbb{E}_{n,\bm{h}}[\delta_n] \mu(\bm{h})  - (1-\alpha) \mu(\bm{h}) + o(1).
$$

Hence, the claim follows if we show that for each strategy pair $(\bm{d}_n, \delta_n)$, there is a subsequence $\{n_k\}_k$ and a strategy pair $(\bm{d} \equiv (\bm{q},\tau), \delta)$ in the limit experiment under which $\delta$ is $\mathcal{F}_{\tau}^{\bm{q}}$-measurable, and 
$$
\mathbb{E}_{n_k,\bm{h}}[\delta_{n_k}] \to \mathbb{E}_{\bm{h}}[\delta], \quad \mathbb{E}_{n_k,\bm{h}}[\tau_{n_k}] \to \mathbb{E}_{\bm{h}}[\tau]. 
$$

\subsubsection*{Step 1 (Asymptotic representations)}

The log-likelihood ratio process, conditional on the information collected up to time $t$, is expressed as:
$$
\varphi_n(\bm{h};t)	= \ln\frac{dP_{\theta_{0}^{(1)}+h^{(1)}/\sqrt{n}}^{(1)}}{dP_{\theta_{0}^{(1)}}^{(1)}}\left({\bf y}_{\left\lfloor nq_{n,1}(t)\right\rfloor }^{(1)}\right)+\ln\frac{dP_{\theta_{0}^{(0)}+h^{(0)}/\sqrt{n}}^{(0)}}{dP_{\theta_{0}^{(0)}}^{(0)}}\left({\bf y}_{\left\lfloor nq_{n,0}(t)\right\rfloor }^{(0)}\right),
$$
where, for any $a\in\{0,1\}$ and $\gamma\in[0,1]$,
$$
\ln\frac{dP_{\theta_{0}^{(a)}+h_a/\sqrt{n}}^{(a)}}{dP_{\theta_{0}^{(a)}}^{(a)}}\left({\bf y}_{\left\lfloor n\gamma\right\rfloor }^{(a)}\right):=\sum_{i=1}^{\left\lfloor n\gamma\right\rfloor }\ln\frac{dP_{\theta_{0}^{(a)}+h_a/\sqrt{n}}^{(a)}}{dP_{\theta_{0}^{(a)}}^{(a)}}\left(Y_{i}^{(a)}\right).
$$
Adusumilli (2025) shows that under Assumption \ref{asm:qmd}, 
\begin{equation}\label{eq:SLAN}
\varphi_n^{(a)}(\bm{h};t)= \sum_a \left[ h_a^\intercal I_{a}^{1/2}X_{n,a}(t)-\frac{q_{n,a}(t)}{2}h_a^\intercal I_{a}h_a \right] +o_{\mathbb{P}_{n,0}}(1)\ \textrm{uniformly over } t \in[0,T].
\end{equation}

Now, under an experimental strategy $\bm{d}$ in the limit-experiment, the likelihood-ratio process at time $t$ is given by 
\begin{equation}\label{eq:LR_process_limit_experiment}
\varphi(\bm{h};t) = \sum_a \left[ h_a^\intercal I_{a}^{1/2}X_{a}(t)-\frac{q_{a}(t)}{2}h_a^\intercal I_{a}h_a \right], 
\end{equation}
where $X_a(t):= Z_a(q_a(t))$ for $Z_a(\cdot)$ defined in (\ref{eq:limit_experiment_process}). By \citet[Theorem 1]{adusumilli-continuous-arts}, for every sequence of empirical allocation processes $\bm{q}_n$, there exists a subsequence $\{n_k\}_k$ and an allocation process $\bm{q}$ in the limit experiment such that 
\begin{equation}\label{eq:convergence_of_experiments}
\left\{ X_{n_{k},a}(\cdot),q_{n_{k},a}(\cdot)\right\} _{a}\xrightarrow[\mathbb{P}_{n,0}]{d}\left\{ X_{a}(\cdot),q_{a}(\cdot)\right\} _{a}.
\end{equation}
Combining (\ref{eq:SLAN})-(\ref{eq:convergence_of_experiments}) gives
\begin{equation}\label{eq:LR_process_convergence}
\varphi_{n_k}(\bm{h};\cdot) \xrightarrow [\mathbb{P}_{n,0}]{d} \varphi(\bm{h};\cdot)\ \textrm{for each } \bm{h}.
\end{equation}

Note that $\tau_n$ is tight since it is uniformly bounded. Combined with (\ref{eq:convergence_of_experiments}), this implies 
$$
\left ( \left\{ X_{n_{k},a}(\tau_{n_k}),q_{n_{k},a}(\tau_{n_k})\right\}_a,\tau_{n_k} \right)
$$
is also jointly tight. Furthermore, by construction, the sample paths of $\left\{ X_{n_{k},a}(\cdot),q_{n_{k},a}(\cdot)\right\}_a$ on the interval $[0,t]$ are $\mathcal{F}_t^{\bm{d}_n}$-measurable, and $\tau_n$ is an $\mathcal{F}_t^{\bm{d}_n}$-adapted stopping time. Hence, by similar arguments as in \citet[Proposition 3]{LeCam1979}, there exists a further subsequence---represented again as $\{n_k\}_k$ for ease of notation---and a $\mathcal{F}_t^{\bm{d}}$-adapted stopping time $\tau$ in the limit experiment, such that 
\begin{equation}\label{eq:convergence_of_stopping_times}
\tau_{n_k} \xrightarrow[\mathbb{P}_{n,0}]{d} \tau.
\end{equation}
In view of (\ref{eq:convergence_of_experiments}), the above further implies 
\begin{equation}\label{eq:LR_process_convergence_2}
\varphi_{n_k}(\bm{h};\tau_{n_k}) \xrightarrow [\mathbb{P}_{n,0}]{d} \varphi(\bm{h};\tau)\ \textrm{for each } \bm{h}.
\end{equation}

\subsubsection*{Step 2 (Convergence of implementation rules and stopping times)}

Since $(\delta_n, \tau_n)$ are bounded, they are tight, and by (\ref{eq:LR_process_convergence_2}), so is the joint $(\delta_n, \tau_n, \varphi_n(\bm{h};\tau_n))$. Hence, by Prohorov's theorem, there exists a further subsequence---represented again as $\{n_k\}_k$ for ease of notation---such that 
\begin{equation}\label{eq:weak_convergence_1}
\begin{aligned}
\begin{pmatrix}
    \delta_{n_k} \\ \tau_{n_k} \\ \varphi_{n_k}(\bm{h}; \tau_{n_k}) 
\end{pmatrix} 
& \xrightarrow[\mathbb{P}_{n,0}]{d} 
\begin{pmatrix}
    \bar{\delta} \\ \tau \\ \ln Z 
\end{pmatrix}\\
\textrm{where } Z &\sim \exp \sum_a \left[ h_a^\intercal I_{a}^{1/2}X_{a}(\tau)-\frac{q_{a}(\tau)}{2}h_a^\intercal I_{a}h_a \right],
\end{aligned}
\end{equation}
and $\bar{\delta}$ is some weak limit of $\delta_n$. Denote 
$$
S(t):=\sum_{a}h_a^\intercal I_{a}^{1/2}X_{a}(t)
$$
and 
$$
M(t):=\exp\sum_{a}\left\{ h_a^\intercal I_{a}^{1/2}X_{a}(t)-\frac{q_{a}(t)}{2}h_a^\intercal I_{a}h_a\right\}.
$$
By \citet[Lemma 1]{adusumilli-continuous-arts}, $S(t)$ is an $\mathcal{F}_{t}^{\bm{d}}$-martingale, and its quadratic variation is given by $\sum_{a}\frac{q_{a}(t)}{2}h_a^\intercal I_{a}h_a$. Hence, $M(t)$ is the stochastic/Doleans-Dade exponential of $S(t)$. As $q_{a}(t) \le T$ almost surely,
$$
E\left[\exp\int_{0}^{T}\left\{ \sum_{a}\frac{q_{a}(t)}{2}h_a^\intercal I_{a}h_a\right\} dt\right]\le\exp\left\{ \sum_{a}\frac{T}{2}h_a^\intercal I_{a}h_a\right\} <\infty.
$$
Thus, Novikov's condition is satisfied, and $M(t)$ is also an $\mathcal{F}_{t}^{\bm{d}}$-martingale. Doob's optional sampling theorem then implies $E[Z] \equiv E[M(1)]=E[M(0)]=1$. 

We now claim that 
\begin{equation}\label{eq:weak_convergence_2}
\begin{pmatrix}
\delta_{n_k} \\ \tau_{n_k}
\end{pmatrix}
\xrightarrow[\mathbb{P}_{n,\bm{h}}]{d}\mathcal{L};\ \textrm{where }\mathcal{L}(B):=E\left[\mathbb{I}\{(\bar{\delta}, \tau)\in B\}Z\right]\ \forall\ B\in\mathcal{B}([0,1] \times[0,T]).
\end{equation}
It is clear from $Z \ge 0$ and $E[Z] = 1$ that $\mathcal{L}$ is a probability measure, and that for every measurable function $f:[0,1] \times [0,T]\to\mathbb{R}$, $\int f d\mathcal{L}=E[f(\bar{\delta}, \tau)V]$. Furthermore, for any $f(\cdot)$ lower-semicontinuous and non-negative, 	
\begin{equation}\label{eq:weak_convergence_3}
\begin{aligned}
\lim\inf \mathbb{E}_{n_k,\bm{h}}\left[f\begin{pmatrix}
\delta_{n_k} \\ \tau_{n_k}
\end{pmatrix} \right]
 & \ge \lim\inf \mathbb{E}_{n_k,0}\left[
 f\begin{pmatrix}
    \delta_{n_k} \\ \tau_{n_k}
   \end{pmatrix}
\exp\{\varphi_{n_k}(\bm{h};T)\} \right] \\
& = \lim\inf \mathbb{E}_{n_k,0}\left[
f\begin{pmatrix}
\delta_{n_k} \\ \tau_{n_k}
\end{pmatrix}
\exp\{\varphi_{n_k}(\bm{h};\tau_{n_k})\} \right] \\
& \ge E\left[
f\begin{pmatrix}
\bar{\delta} \\ \tau
\end{pmatrix}Z \right].
\end{aligned}
\end{equation}
The equality in (\ref{eq:weak_convergence_3}) follows from the law of iterated expectations since $(\delta_{n_k}, \tau_{n_k})$ are $\mathcal{F}_{\tau_{n_k}}^{\bm{d_{n_k}}}$-measurable, and 
$$
\mathbb{E}_{n_k,0}\left[\exp\left\{ \varphi_{n_k}(\bm{h};T)\right\} \big\vert\mathcal{F}_{\tau_{n_k}}^{\bm{d_{n_k}}}\right] 
= \exp\left\{ \varphi_{n_k}(\bm{h};\tau_{n_k})\right\}
$$ 
as the observations are iid given $\bm{h}$. The last inequality in (\ref{eq:weak_convergence_3}) follows from applying the portmanteau lemma on (\ref{eq:weak_convergence_1}). Applying the portmanteau lemma again, in the converse direction, on (\ref{eq:weak_convergence_3}), gives (\ref{eq:weak_convergence_2}). 

Define $\delta =E\left[\bar{\delta}|\{X_{a}(\tau),q_{a}(\tau)\}_{a}\right]$. By construction, $\delta \in [0,1]$ is a valid strategy by Alice since it is  $\mathcal{F}_{\tau}^{\bm{d}}$-measurable. Furthermore, by (\ref{eq:weak_convergence_2}), 
$$
\begin{aligned}
\lim_{k \to \infty} \mathbb{E}_{n_k,\bm{h}}[\delta_{n_k}] &=E\left[\bar{\delta}e^{\sum_{a}\left\{ h_a^\intercal I_{a}^{1/2}X_{a}(\tau)-\frac{q_{a}(\tau)}{2}h_a^\intercal I_{a}h_a\right\} }\right] \\
& =E\left[\delta e^{\sum_{a}\left\{ h_a^\intercal I_{a}^{1/2}X_{a}(\tau)-\frac{q_{a}(\tau)}{2}h_a^\intercal I_{a}h_a\right\} }\right]=\mathbb{E}_{\bm{h}}[\delta],
\end{aligned}
$$
where the second equality follows by the law of iterated expectations, and the last equality follows by the Girsanov theorem. This proves that $\mathbb{E}_{n_k,\bm{h}}[\delta_{n_k}] \to \mathbb{E}_{\bm{h}}[\delta]$. 

The proof that $\mathbb{E}_{n_k,\bm{h}}[\tau_{n_k}] \to \mathbb{E}_{\bm{h}}[\tau]$ is similar.  

\subsection{Proof of Theorem \ref{thm:asymptotic_upper_bound}}

For any $\mu_1 - \mu_0 \in \mathbb{R}$, $\alpha \in [0,1]$ and $M < \infty$, define $u_M(\cdot)$ as the truncated version of $u(\cdot)$: 
$$
u_M(\mu_1 - \mu_0, \delta;\alpha) = \begin{cases}
    M & \text{if } u(\mu_1 - \mu_0, \delta;\alpha) \ge M,\\
    -M & \text{if } u(\mu_1 - \mu_0, \delta;\alpha) \le -M,\\
    u(\mu_1 - \mu_0, \delta;\alpha) & \text{otherwise}.
\end{cases}
$$
Let $W_{n,M}^A(\cdot, \cdot, \bm{h}), W_{n,M}^B(\cdot, \cdot,\bm{h})$ denote the welfare functions of Alice and Bob when $u(\cdot)$ is replaced with $u_M(\cdot)$. Due to Assumption \ref{asm:asm-on-mu} and the use of a Gaussian prior $\Gamma_0$, for any $\eta > 0$, we can find $M < \infty$ such that 
\begin{equation} \label{eq:welfare_truncation}
    \begin{aligned}
        & \int \sup_{(\bm{d}_n,\delta_n)} \left \vert W_n^B((\bm{d}_n,\delta_n),\bm{h}) - W_{n,M}^B((\bm{d}_n,\delta_n),\bm{h}) \right\vert d\Gamma_0(\bm{h}) \\
        & \le 2\gamma \int \left\vert \sqrt{n}\mu_n(\bm{h}) \mathds{1}\{\vert \sqrt{n}\mu_n(\bm{h})\vert \ge M\} \right\vert d\Gamma_0(\bm{h}) \\
        & \le \frac{2\gamma}{M} \int \left\vert \sqrt{n}\mu_n(\bm{h})  \right\vert^2 d\Gamma_0(\bm{h}) 
         \le \frac{2\gamma (\vert \dot{\mu}_1 \vert + \vert\dot{\mu}_0\vert) }{M} \int \left\vert \bm{h}  \right\vert^2 d\Gamma_0(\bm{h}) < \eta.
    \end{aligned}
\end{equation}
The same argument also shows that 
\begin{equation} \label{eq:welfare_truncation_2}
\int \sup_{(\bm{d}_n,\delta_n)} \left \vert W_n^A((\bm{d}_n,\delta_n),\bm{h}) - W_{n,M}^A((\bm{d}_n,\delta_n),\bm{h}) \right\vert d\Gamma_0(\bm{h}) < \eta.
\end{equation}

Analogously, in the limit experiment, define $W_{M}^A(\cdot, \cdot, \bm{h}), W_{M}^B(\cdot, \cdot,\bm{h})$ as the welfare functions of Alice and Bob when $u(\cdot))$ is replaced with $u_M(\cdot)$. Then, we similarly have
\begin{equation} \label{eq:welfare_truncation_3}
    \begin{aligned}
    & \int \sup_{(\bm{d},\delta)} \left \vert W^B((\bm{d},\delta),\bm{h}) - W_{M}^B((\bm{d},\delta),\bm{h}) \right\vert d\Gamma_0(\bm{h}) < \eta, \textrm{ and} \\
    & \int \sup_{(\bm{d},\delta)} \left \vert W^A((\bm{d},\delta),\bm{h}) - W_{M}^A((\bm{d},\delta),\bm{h}) \right\vert d\Gamma_0(\bm{h}) < \eta.
    \end{aligned}
\end{equation}

Let $(\bm{d}_n^*,\delta_n^*)$ denote a sequence of optimal strategy pairs in the fixed-$n$ setting, inducing the welfare $\bar{W}_n^*$. Consider any subsequence along which $\liminf_{n \to \infty} \int W_n^A((\bm{d}_n^*,\delta_n^*),\bm{h})d\Gamma_0(\bm{h})$ is attained. By Theorem \ref{thm:limit_experiment}, for such a subsequence, there exists a further subsequence, denoted by $\{k\}$, and a strategy pair $(\bm{d},\delta)$ in the limit experiment such that $W_k^A(\cdot, \bm{h}) \to W^A(\cdot, \bm{h})$ and $W_k^B(\cdot, \bm{h}) \to W^B(\cdot, \bm{h})$ for each $\bm{h}$. Combined with (\ref{eq:welfare_truncation}) and (\ref{eq:welfare_truncation_2}), an application of the dominated convergence theorem then gives 
\begin{equation} \label{eq:convergence_of_welfare}
\begin{aligned}
\bar{W}_k^* :=  \int W^B_k((\bm{d}_k^*,\delta_k^*), \bm{h}) d\Gamma_0(\bm{h}) & \to \int W^B((\bm{d},\delta), \bm{h}) d\Gamma_0(\bm{h}), \textrm{ and} \\
\int W^A_k((\bm{d}_k^*,\delta_k^*), \bm{h}) d\Gamma_0(\bm{h}) & \to \int W^A((\bm{d},\delta), \bm{h}) d\Gamma_0(\bm{h}).
\end{aligned}
\end{equation}
Since the welfare constraint requires $\int W^A_k((\bm{d}_k^*,\delta_k^*), \bm{h}) d\Gamma_0(\bm{h}) \ge V_0 -\epsilon_k$, where $\epsilon_k \to 0$, we further obtain 
$$
\int W^A((\bm{d},\delta), \bm{h}) d\Gamma_0(\bm{h}) \ge V_0.
$$

We now show that $\delta$ is a Bayes optimal strategy by Alice given $\bm{d}$. As shown in the proof of Theorem \ref{thm:limit_experiment}, we can choose the subsequence $\{k\}$ in such a manner that the likelihood ratios induced by $\bm{d}_k^*$ converge to that induced by $\bm{d}$: 
$$
\varphi_{k}(\bm{h};\tau_{k}^*) \xrightarrow [\mathbb{P}_{n,0}]{d} \varphi(\bm{h};\tau)\ \textrm{for each } \bm{h}.
$$ 
Then, we can use Lemma \ref{lem:convergence_optimal_Bayes} to conclude 
$$
\lim_{k \to \infty} \int W^A_k((\bm{d}_k^*,\delta_k^*), \bm{h}) d\Gamma_0(\bm{h})
= \int W^A((\bm{d},\delta^*_{\bm{d}}), \bm{h}) d\Gamma_0(\bm{h}),
$$
where $\delta^*_{\bm{d}}$ is any Bayes-optimal strategy relative to $\bm{d}$ in the limit experiment. Combined with (\ref{eq:convergence_of_welfare}), this, in turn, implies
$$
\int W^A((\bm{d},\delta), \bm{h}) d\Gamma_0(\bm{h}) = \int W^A((\bm{d},\delta^*_{\bm{d}}), \bm{h}) d\Gamma_0(\bm{h}).
$$
Thus, $\delta \in \Delta(\bm{d})$, where $\Delta(\bm{d})$ is the class of all Bayes-optimal strategies given $\bm{d}$.\footnote{This does not necessarily mean Alice chooses $\delta$ since she breaks ties in favor of Bob.}

To conclude, we have shown that $\delta \in \Delta(\bm{d})$, and that $(\bm{d},\delta)$ satisfies $\int W^A((\bm{d},\delta), \bm{h}) d\Gamma_0(\bm{h}) \ge V_0$. Consequently, $\bm{d}$ is a
valid strategy for Bob's primal problem (\ref{eq:Bobs_primal_limit}), where the optimal welfare is $\bar{W}^*$. Combined with (\ref{eq:convergence_of_welfare}), this proves
$$
\lim_{n \to \infty} \bar{W}_n^* = \int W^B((\bm{d},\delta), \bm{h}) d\Gamma_0(\bm{h}) \le \sup_{\delta \in \Delta(\bm{d})} \int W^B((\bm{d},\delta), \bm{h}) d\Gamma_0(\bm{h})  \le \bar{W}^*,
$$
where the first inequality accounts for the fact that Alice breaks ties in favor of Bob.
\qed

\subsection{Proof of Theorem \ref{thm:optimal-strategy-discrete-time}}

\subsubsection*{Step 1 (Stopping time approximation in limit experiment).}

Let $\bm{d}^* = (\bm{q}^*,\tau^*)$ denote the optimal strategy corresponding to the welfare constraint of $V_0$ in the limit experiment. We start by showing that approximating $\tau^*$ with 
\[
    \tau^*_{T,\xi} := \inf_{t \geq 0} \left\{m_{t} \not\in \left( b^-(t; \lambda^*), b^+(t;\lambda^*) + \xi \right)\right\} \wedge T,
\]
only leads to a negligible loss in welfare when $T$ is large enough and $\xi$ is sufficiently small. Denote the corresponding experimentation strategy by $\bm{d}^*_{T,\xi} = (\bm{q}^*,\tau^*_{T,\xi})$.

Let Alice's optimal decision rules associated with $\bm{d}^*$ and $\bm{d}^*_{T,\xi}$ be written as $\delta^* = \mathds{1}\{m_{\tau^*} \ge 0\}$ and $\delta^*_{T,\xi} =  \mathds{1}\{m_{\tau^*_{T,\xi}} \ge 0\}$, respectively. By the dominated convergence theorem, we have $\mathbb{E}_{\bm{h}}[\tau^*_{T,\xi}] \to \mathbb{E}_{\bm{h}}[\tau^*_{T,0}]$ as $\xi \to 0$, and by the monotone convergence theorem, $\mathbb{E}_{\bm{h}}[\tau^*_{T,0}] \to \mathbb{E}_{\bm{h}}[\tau^*]$ as $T \to \infty$. Thus,
\begin{equation} \label{eq:pf:Thm5:tau_convg}
\lim_{T \to \infty} \lim_{\xi \to 0} \mathbb{E}_{\bm{h}}[\tau^*_{T,\xi}] = \mathbb{E}_{\bm{h}}[\tau^*].
\end{equation}
We next show that
\begin{equation} \label{eq:pf:Thm5:delta_convg}
 \lim_{T \to \infty} \lim_{\xi \to 0} \mathbb{E}_{\bm{h}}[\delta^*_{T,\xi}] = \mathbb{E}_{\bm{h}}[\delta^*].
\end{equation}
By the monotone convergence theorem, we have $\lim_{T \to \infty}\mathbb{E}_{\bm{h}}[\delta^*_{T,0}] = \mathbb{E}_{\bm{h}}[\delta^*]$. Moreover, for any fixed $T$,
$$
\lim_{\xi \to 0} \mathbb{P}_{\bm{h}}\left(m_{\tau^*_{T,\xi}} \ge 0 \right) 
= \mathbb{P}_{\bm{h}}\left(\bigcup_{\xi > 0} \left\{m_{\tau^*_{T,\xi}} \ge 0 \right\} \right) 
= \mathbb{P}_{\bm{h}}\left(m_{\tau^*_{T,0}} \ge 0 \right).
$$
The first equality uses the nesting property $\left\{m_{\tau^*_{T,\xi_1 }} \ge 0 \right\} \subseteq \left\{m_{\tau^*_{T,\xi_2}} \ge 0 \right\}$ whenever $\xi_1 > \xi_2$. The second equality follows because, by (\ref{eq:signal_process_under_Neyman_allocation}) and (\ref{eq:posterior_dist_normal}),
$$
m_t \sim \frac{\varrho_0^{-1}}{\sigma^{-2}t + \varrho_0^{-1}} m_0 + \frac{\sigma^{-2}}{\sigma^{-2}t + \varrho_0^{-1}}(\mu(\bm{h})t + \sigma W(t) ) \quad \textrm{under $\mathbb{P}_{\bm{h}}$},
$$ 
so under $\mathbb{P}_{\bm{h}}$, $m_t$ is a continuous local martingale with strictly positive quadratic variation. Hence, it satisfies the immediate-crossing property: any sample path of $m_t$ that reaches the boundary $b^+(t) \ge 0$ will immediately cross $b^+(t) + \xi$ for some $\xi > 0$ before ever hitting $b^-(t) < 0$. Combining these facts yields \eqref{eq:pf:Thm5:delta_convg}.

Since $\int \mu(\bm{h})d\Gamma_0(\bm{h}) <\infty$ and
$$
\begin{aligned}
W^B((\bm{d}, \delta), \bm{h}) &= \mathbb{E}_{\bm{h}}[\delta] + \gamma \mathbb{E}_{\bm{h}} [\delta] \mu(\bm{h}) - \gamma(1-\alpha^\prime)  \mu(\bm{h}) - c \mathbb{E}_{\bm{h}}[\tau],\\
W^A((\bm{d}, \delta), \bm{h}) &= \mathbb{E}_{\bm{h}}[\delta] \mu(\bm{h})  - (1-\alpha) \mu(\bm{h}),
\end{aligned}
$$
an application of the dominated convergence theorem, together with \eqref{eq:pf:Thm5:tau_convg} and \eqref{eq:pf:Thm5:delta_convg}, implies 
\begin{equation} \label{eq:pf:Thm5:welfare_approx}
\begin{aligned}
\lim_{T \to \infty} \lim_{\xi \to 0} \int W^B \left( (\bm{d}_{T,\xi}^*, \delta^*_{T,\xi}), \bm{h} \right) d\Gamma_0(\bm{h})  &= \int W^B((\bm{d^*}, \delta^*), \bm{h}) d\Gamma_0(\bm{h}) := \bar{W}^*, \\
\lim_{T \to \infty} \lim_{\xi \to 0} \int W^A\left( (\bm{d}^*_{T,\xi}, \delta^*_{T,\xi}), \bm{h} \right) d\Gamma_0(\bm{h})  &= \int W^A((\bm{d^*}, \delta^*), \bm{h}) d\Gamma_0(\bm{h}).
\end{aligned}
\end{equation}

In addition, as $\int W^A((\bm{d^*}, \delta^*), \bm{h}) d\Gamma_0(\bm{h}) \ge V_0$ by construction, we also have
$$
\lim_{T \to \infty} \lim_{\xi \to 0} \int W^A\left( (\bm{d}^*_{T,\xi}, \delta^*_{T,\xi}), \bm{h} \right) d\Gamma_0(\bm{h}) \ge V_0.
$$

\subsubsection*{Step 2 (Convergence of strategies and Alice's welfare).}
Define $x_{n,a}^*(t) := \dot{\mu}_a^\intercal I_a^{-1/2}X_{n,a}(t)$, $X_a^*(t) := Z_a(q_a^*(t))$ and $x_a^*(t) := \dot{\mu}_a^\intercal I_a^{-1/2}Z_a(q_a^*(t))$, where $Z_a(\cdot)$ is defined in (\ref{eq:limit_experiment_process}). By arguments analogous to those used in the proof of \citet[Theorem 3]{adusumilli2025}, we obtain
\begin{equation} \label{eq:pf:Thm5:1}
   x^*_{n,a}(t) \xrightarrow[\mathbb{P}_{n,0}]{d} x^*_a(t),\quad q^*_{n,a}(t) \xrightarrow[\mathbb{P}_{n,0}]{d} q^*_a(t). 
\end{equation}
It then follows from the continuous mapping theorem that
\begin{equation} \label{eq:pf:Thm5:2}
\begin{aligned}
\tilde m_{n,t} & \xrightarrow[\mathbb{P}_{n,0}]{d} m_t, \textrm{ where} \\
m_t & \sim \frac{\varrho_0^{-1}}{\sigma^{-2}t + \varrho_0^{-1}} m_0 + \frac{\sigma^{-1}}{\sigma^{-2}t + \varrho_0^{-1}}W(t) \quad \textrm{under $\mathbb{P}_0$}.
\end{aligned}
\end{equation}

Let $\mathbb{D}[0,T]$ be the metric space of real-valued functions on $[0,T]$ endowed with the sup norm. For each $z \in \mathbb{D}[0,T]$, define
\[
    \tau_{T,\xi}(z) = \inf\{t: z_t \not\in (b^-(t), b^+(t) + \xi)\} \wedge T.
\]
Because $b^-(t)$ and $b^+(t)$ are continuous, standard properties of Brownian motion imply that, with probability 1, the sample paths of $m_t$ under $\mathbb{P}_0$ lie at continuity points of the functional $\tau_{T,\xi}(\cdot)$. Thus, by the extended continuous mapping theorem \citep[Theorem 1.11.1]{van1996weak},
\begin{equation} \label{eq:pf:Thm5:3}
\tau_{n,T,\xi}^* = \tau_{T,\xi}(m_{n,t}) \xrightarrow[\mathbb{P}_{n,0}]{d} \tau_{T,\xi}^* = \tau_{T,\xi}(m_t).
\end{equation}

Take $\varphi_n^*(\cdot)$ and $\varphi^*(\cdot)$ to be the likelihood ratios induced by $\bm{d}_{n,T,\xi}^*$ and $\bm{d}_{T,\xi}^*$, respectively. Combining (\ref{eq:pf:Thm5:1})–(\ref{eq:pf:Thm5:3}) with (\ref{eq:SLAN}) yields
\begin{equation} \label{eq:pf:Thm5:4}
\varphi_n^*(\bm{h}, \tau_{n,T,\xi}^*)  \xrightarrow[\mathbb{P}_{n,0}]{d} \varphi^*(\bm{h};\tau_{T,\xi}^*) \quad \text{for each } \bm{h}.
\end{equation}
We can then use Lemma \ref{lem:convergence_optimal_Bayes} to conclude that Alice's welfare converges, i.e.,
\begin{equation} \label{eq:pf:Thm5:conv_Alice_welfare}
\lim_{n \to \infty} \int W_n^A\left((\bm{d}_{n,T,\xi}^*,\delta_{n,T,\xi}^*),\bm{h} \right)d\Gamma_0(\bm{h}) = 
\int W^A\left( (\bm{d}_{T,\xi}^*,\delta_{T,\xi}^*),\bm{h} \right)d\Gamma_0(\bm{h}),
\end{equation}
where $\delta_{n,T,\xi}^*,\delta_{T,\xi}^*$ are optimal responses of Alice given $\bm{d}_{n,T,\xi}^*$ and $\bm{d}_{T,\xi}^*$, respectively. Combining this with Step 1 establishes the first claim of the theorem, namely,
$$
\lim_{T \to \infty} \lim_{\xi \to 0} \lim_{n \to \infty} \int W_n^A\left((\bm{d}_{n,T,\xi}^*,\delta_{n,T,\xi}^*),\bm{h} \right)d\Gamma_0(\bm{h}) \ge V_0.
$$

\subsubsection*{Step 3 (Convergence of Bob's Welfare).}

Let $\{n_k\}_k$ denote a subsequence along which 
$$
\liminf_{n \to \infty} \int W_n^B\left((\bm{d}_{n,T,\xi}^*,\delta_{n,T,\xi}^*),\bm{h} \right)d\Gamma_0(\bm{h})
$$
is attained. Since $\delta^*_{n,T,\xi}$ is bounded, it is tight, and by (\ref{eq:pf:Thm5:1})–(\ref{eq:pf:Thm5:4}), so is the joint 
$$
(\delta^*_{n,T,\xi}, \tau^*_{n,T,\xi}, \varphi_n^*(\bm{h};\tau_{n,T,\xi})).
$$
Hence, by Prohorov's theorem, there exists a further subsequence---represented again as $\{n_k\}_k$ for ease of notation---such that 
\begin{equation}
\begin{aligned}
\begin{pmatrix}
    \delta_{n_k,T,\xi}^* \\ \tau_{n_k,T,\xi}^* \\ \varphi_{n_k}^*(\bm{h}; \tau_{n_k,T,\xi}^*) 
\end{pmatrix} 
& \xrightarrow[\mathbb{P}_{n,0}]{d} 
\begin{pmatrix}
    \bar{\delta} \\ \tau_{T,\xi}^* \\ \ln Z 
\end{pmatrix},\\
\textrm{where } Z &\sim \exp \sum_a \left[ h_a^\intercal I_{a}^{1/2}X_{a}^*(\tau_{T,\xi}^*)-\frac{q_{a}^*(\tau_{T, \xi}^*)}{2}h_a^\intercal I_{a}h_a \right].
\end{aligned}
\end{equation}

Thus, by applying the same type of reasoning as in step 2 of the proof of Theorem \ref{thm:limit_experiment}, we can find a feasible $\mathcal{F}_{\tau^*_{T,\xi}}^{\bm{q}^*}$-measurable strategy $\delta$ such that
$$
\begin{aligned}
    \lim_{k \to \infty} W^B_{n_k}\big((\bm{d}^{*}_{n_k,T,\xi}, \delta^{*}_{n_k,T,\xi}), \bm{h}\big) &= W^B\big((\bm{d}^*_{T,\xi}, \delta), \bm{h}\big), \textrm{ and}\\
    \lim_{k \to \infty} W^A_{n_k}\big((\bm{d}^{*}_{n_k,T,\xi}, \delta^{*}_{n_k,T,\xi}), \bm{h}\big) &= W^A\big((\bm{d}^*_{T,\xi}, \delta), \bm{h}\big),
\end{aligned}
$$
for every $\bm{h}$. Together with (\ref{eq:welfare_truncation_3}) from the proof of Theorem \ref{thm:asymptotic_upper_bound}, this yields
\begin{equation} \label{eq:pf:Thm5:5}
    \lim_{k \to \infty} \int W^B_{n_k}\big( (\bm{d}^*_{n_k,T,\xi}, \delta^*_{n_k,T,\xi}), \bm{h} \big) \, d\Gamma_0(\bm{h})
    = \int W^B\big( (\bm{d}^*_{T,\xi}, \delta), \bm{h}\big) \, d\Gamma_0(\bm{h}),
\end{equation}
and
\begin{equation} \label{eq:pf:Thm5:6}
 \lim_{k \to \infty} \int W^A_{n_k}\big( (\bm{d}^*_{n_k,T,\xi}, \delta^*_{n_k,T,\xi}), \bm{h} \big) \, d\Gamma_0(\bm{h})
 = \int W^A\big((\bm{d}^*_{T,\xi}, \delta), \bm{h}\big) \, d\Gamma_0(\bm{h}).
\end{equation}

Combining (\ref{eq:pf:Thm5:conv_Alice_welfare}) with (\ref{eq:pf:Thm5:6}), we obtain
$$
\int W^A\big((\bm{d}^*_{T,\xi}, \delta), \bm{h}\big) \, d\Gamma_0(\bm{h})
= \int W^A\big( (\bm{d}_{T,\xi}^*,\delta_{T,\xi}^*),\bm{h} \big)\, d\Gamma_0(\bm{h}).
$$
For a fixed $\xi > 0$, however, Alice's Bayes-optimal policy $\delta_{T,\xi}^*$ is almost surely unique since stopping under $\tau_{T,\xi}$ only occurs when $m_t$ is strictly positive or negative. Consequently, $\delta = \delta_{T,\xi}^*$ almost surely, and by (\ref{eq:pf:Thm5:5}) we conclude
$$
\liminf_{n \to \infty} \int W^B_{n}\big( (\bm{d}^*_{n,T,\xi}, \delta^*_{n,T,\xi}), \bm{h} \big) \, d\Gamma_0(\bm{h})
= \int W^B\big( (\bm{d}^*_{T,\xi}, \delta^*_{T,\xi}), \bm{h}\big) \, d\Gamma_0(\bm{h}).
$$
The desired claim then follows by combining the above with the first part of (\ref{eq:pf:Thm5:welfare_approx}). \qed

\section{Auxiliary Results}

\subsection{Properties of $V(t,m)$}

The proof of Theorem \ref{thm:optimal-stopping-rule} makes use of the following preliminary lemma. Recall that $S_\alpha(x) \coloneq \max\{x,0\} - (1-\alpha)x$. In what follows, we occasionally index $V(t,m)$ by the parameters $\lambda, B$ or $c$ to make its dependence on these quantities explicit. 

\begin{lem}\label{lem:lemsupp}
    The value function $V(t,m)$ satisfies the following properties:
    \begin{enumerate}[(i)]
        \item On the domain $m\geq 0$, $V(t,m) - \lambda S_\alpha(m)$ is decreasing in $m$. \label{lem1m+}
        \item On the domain $m < 0$, $V(t,m) - \lambda S_\alpha(m)$ is increasing in $m$. \label{lem1m-}
        \item $V(t,m; \lambda) - \lambda S_\alpha(m)$ is increasing in $\lambda$. \label{lem1lam}
        \item $V(t,m; B) - B - \lambda S_\alpha(m)$ is decreasing in $B$ for a fixed $(c, \lambda)$. Additionally, $V(t, m; B)$ is increasing in $B$ for a fixed $(c, \lambda)$. \label{lem1r}
        \item $V(t, m)$ is decreasing in $t$.\label{lem1t}
        \item $V(t,m; \varrho_0) = V(0, m; \varrho_t)$ for all $t > 0$. \label{lem1time-change}
    \end{enumerate}
\end{lem}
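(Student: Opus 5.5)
Every part of Lemma \ref{lem:lemsupp} follows from one device: write the continuation value minus the stopping reward as a supremum over stopping rules of an expected \emph{increment}, and compare payoffs pathwise across two values of a parameter while keeping the same stopping rule. Write $G(m) \coloneq B\mathds{1}[m\ge 0]+\lambda S_\alpha(m)$ and let $\tilde m_s$ denote the innovation $m_{t+s}-m_t$, a continuous martingale started at zero. Then
\[
V(t,m)-G(m)=\sup_\tau \mathbb{E}\bigl[G(m+\tilde m_\tau)-G(m)-c\tau\bigr].
\]
By Lemma \ref{lem:alpha_equivalent_bounds}, and by optional stopping as in its proof (restricting to $\mathbb{E}[\tau]\le M$, as justified in Step 3 of the proof of Theorem \ref{thm:Sampling_strategy}), we may take $\alpha=1$. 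The law of $\tilde m$ does not depend on $m$, so monotonicity in $m$ reduces to monotonicity of the increment $G(m+x)-G(m)$ for each fixed $x$.

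\textbf{Parts (i) and (ii).} For $\alpha=1$, on $m\ge0$ we have $G(m+x)-G(m)=B(\mathds{1}[m+x\ge0]-1)+\lambda(\max\{m+x,0\}-m)$. The map $m\mapsto \max\{m+x,0\}-m=\max\{x,-m\}$ is nonincreasing, and so is $\mathds{1}[m+x\ge0]$... I have the sign backwards here, and I will redo this carefully in the proof. The intended argument is as follows. On $m\ge0$, $-G(m)$ loses the $B$ term only through the indicator, and $B\mathds{1}[m+x\ge0]$ is nondecreasing in $m$. So for the increment to be decreasing, I instead use the convexity-based identity $\max\{m+x,0\}-\max\{m,0\}=\max\{x,-m\}$ for $m\ge0$, which is nonincreasing. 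I then handle the $B$ indicator separately by a coupling: for $m'>m\ge0$, let the process started at $m'$ follow the stopping rule of the process started at $m$ until it first hits $0$, and afterwards shift appropriately. If this coupling fails, the fallback is a direct comparison using the definition of $b^+$.

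On $m<0$, the identity $\max\{m+x,0\}-0$ is nondecreasing in $m$, and $B\mathds{1}[m+x\ge0]$ is nondecreasing as well. Taking suprema of nondecreasing functions gives (ii) directly.

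\textbf{Parts (iii) and (iv).} For (iii), the increment equals $\lambda(S(m+x)-S(m))+B(\cdots)-c\tau$. By convexity of $S$ and $\mathbb{E}[x]=0$ (optional stopping), Jensen's inequality gives $\mathbb{E}[S(m+\tilde m_\tau)-S(m)]\ge0$. Each objective is therefore nondecreasing in $\lambda$, and so is the supremum. For (iv), $B(\mathds{1}[m+x\ge0]-\mathds{1}[m\ge0])$ is pointwise $\le0$ when $m\ge0$; hence $V-B-\lambda S$ is decreasing in $B$ on $m\ge0$, which is the regime in which this statement is used. Monotonicity of $V$ itself in $B$ follows because every payoff is nondecreasing in $B$.

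\textbf{Parts (v) and (vi).} For (vi), the posterior at time $t$ under prior variance $\varrho_0$ coincides with the prior at time $0$ under variance $\varrho_t$. Since the posterior process is Markov in $(m,\varrho)$ and the cost $c$ per unit of calendar time is the same, the value functions agree. For (v), I use the time change from the proof of Theorem \ref{thm:Sampling_strategy}: $m$ is a Brownian motion run on the clock $\varrho_0-\varrho$. Starting later leaves less residual variance $\varrho_t$ and a slower information clock, since $d\langle m\rangle/dt=\nu_t^2$ is decreasing in $t$. Any rule at the later start can be mimicked at the earlier start by stopping at the same quadratic-variation level. This achieves the same terminal distribution at a weakly smaller calendar cost and never requires more variance than $\varrho_0$ provides, so $V(t,m)\ge V(t',m)$ for $t<t'$.

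The main obstacle is part (i), because of the discontinuous $B$ indicator at $0$. I would try the reflection/coupling argument first, in the style of part (iii) of Theorem \ref{thm:optimal-stopping-rule}.
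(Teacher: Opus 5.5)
\textbf{There is a genuine gap in part (i).}

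Parts (ii)--(vi) are fine:
(ii) follows from your common-innovation comparison, since there the indicator moves in the helpful direction;
(iii) follows from Jensen plus optional stopping, as in the paper;
(v)--(vi) follow from the same time-change arguments the paper imports from Fudenberg, Strack and Strzalecki.
For (iv) you prove the first claim only for $m\ge0$. Subtracting $B$ rather than $B\mathds{1}[m\ge0]$ gives it for all $m$ at once, since $B(\mathds{1}[m_\tau\ge0]-1)\le0$ for every $\tau$.

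You have located the obstacle in (i) correctly. With $\alpha=1$ and $m\ge0$, the increment $G(m+x)-G(m)$ equals $-B-\lambda m$ for $m<-x$ and $\lambda x$ for $m\ge -x$. So for fixed $x<0$ it jumps \emph{up} by $B$ as $m$ crosses $-x$, and no pathwise comparison under a common innovation can give monotonicity.

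What you put in its place is not yet an argument. As described, your coupling turns rules at $m$ into rules at $m'$. That can only bound the value at $m'$ from below, whereas you need to dominate an arbitrary rule at $m'$ by some rule at $m$. In addition, ``shift appropriately'' is left unspecified. The fallback via $b^+$ is circular, because the fact that the stopping set on $m\ge0$ is $[b^+(t),\infty)$ (Theorem~\ref{thm:optimal-stopping-rule}(i)) is derived from this very part of the lemma.

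Note also the paper's own proof of (i). It splits the payoff on $\{\tau^+<\tau^-\}$ and $\{\tau^-<\tau^+\}$ and uses linearity in $m$ on each event. That tacitly holds these events fixed as $m$ moves, which is exactly what the jump prevents, so it does not supply the missing step either.

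The missing idea is a first-passage argument rather than a coupling. By optional stopping, for $m\ge0$ and $\alpha=1$ (where $V-G$ differs from $V-\lambda S_\alpha$ only by the constant $B$),
\[
V(t,m)-G(m)=\sup_{\tau}\mathbb{E}\bigl[k(m_{t+\tau})-c\tau \,\big|\, m_t=m\bigr],\qquad
k(y)=\begin{cases} 0, & y\ge 0,\\ \lambda|y|-B, & y<0.\end{cases}
\]
So the net reward vanishes on $[0,\infty)$, and $V(t,m)-G(m)\ge0$.

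Fix $m'>m\ge0$ and a rule $\tau'$ from $(t,m')$. Let $\theta$ be the first time the posterior mean falls to $m$.
\begin{itemize}
\item On $\{\tau'<\theta\}$ the path stays above $m\ge0$, so the rule collects only $-c\tau'\le0$.
\item On $\{\tau'\ge\theta\}$, the strong Markov property bounds the conditional value of the remainder by $V(t+\theta,m)-G(m)$. This is at most $V(t,m)-G(m)$ by part (v). Your time-change proof of (v) does not use (i), so there is no circularity.
\end{itemize}
Hence $\tau'$ is worth at most $\mathbb{P}(\tau'\ge\theta)\,[V(t,m)-G(m)]\le V(t,m)-G(m)$. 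Taking the supremum over $\tau'$ gives (i).
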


\begin{proof} 
Theorem \ref{thm:Sampling_strategy}(iii) implies that, under the Neyman allocation, the posterior mean of $\mu \coloneq \mu_1 - \mu_0$ at time $t$ is determined by the process 
\begin{equation}\label{postmean}   
m_t = m_0 + \int_0^t \upsilon_{t^\prime} dW(t^\prime), \ \textrm{where } \upsilon_t \coloneq \frac{\sigma^{-1}}{\sigma^{-2}t + \varrho_0^{-1}}. 
\end{equation}

Let $\widetilde{\mathcal{T}}$ denote the class of stopping times that can be represented by a stopping region in the $(t,m)$-space. That is, each $\tau \in \widetilde{\mathcal{T}}$ admits a representation $\tau = \inf\{t \ge 0 : (t, m_t) \in \mathcal{S}_\tau\}$ for some set $\mathcal{S}_\tau \subseteq [0,\infty) \times \mathbb{R}$. For such $\tau$, define the positive and negative parts of $\mathcal{S}_\tau$ by
$$
\begin{aligned}
\mathcal{S}_\tau^{+} &\equiv \{(t,m) \in \mathcal{S}_\tau : m \ge 0\}, \\
\mathcal{S}_\tau^{-} &\equiv \{(t,m) \in \mathcal{S}_\tau : m < 0\},
\end{aligned}
$$
and the corresponding stopping times
$$
\begin{aligned}
\tau^{+} &\equiv \inf\{t \ge 0 : (t, m_t) \in \mathcal{S}_\tau^{+} \} \\
\tau^{-} &\equiv \inf\{t \ge 0 : (t, m_t) \in \mathcal{S}_\tau^{-} \}.
\end{aligned}
$$
By construction, $\tau = \tau^{+} \wedge \tau^{-}$, so each $\tau \in \widetilde{\mathcal{T}}$ is uniquely associated with a pair of stopping strategies $(\tau^{+}, \tau^{-})$.

A key result from optimal stopping theory, see, e.g., \cite{liptser-shiryaev2011}, states that the optimal stopping time for problems of the form (\ref{eq:optimal_stopping_problem}), where $m_t$ is a strong Markov process, belongs to the class $\widetilde{\mathcal{T}}$. Consequently, in what follows, it is without loss of generality to restrict the space of stopping times to $\widetilde{\mathcal{T}}$.

\begin{enumerate}[(i)]
    \item Fix some $m > 0$, and consider any pair of stopping strategies $(\tau^+, \tau^-)$ associated with some $\tau \in \widetilde{\mathcal{T}}$. The expression $V(t,m) - \lambda S_\alpha(m)$ can be written as
    \[
    \mathbb{E}\left[\mathbf{1}_{\{\tau^+ < \tau^-\}}(B + \lambda \alpha m_{\tau^+}) - \mathbf{1}_{\{\tau^+ >\tau^-\}}(\lambda (1-\alpha) m_{\tau^-}) - c(\min\{\tau^+,\tau^-\} - t)\right] - \lambda \alpha m.
    \]
    When $\tau^+ < \tau^-$, we have $m_{\tau^+} = m + \int_t^{\tau^+} \upsilon_{t^\prime} dW(t^\prime) \geq 0$ and the expression inside of the expectation reduces to 
    \[
    B + \lambda\alpha\left(m + \int_t^{\tau^+} \upsilon_{t^\prime}dW(t^\prime)\right) - c(\tau^+ - t) - \lambda\alpha m.
    \]
    Thus the expression is constant in $m$ in this case. A similar argument for the case $\tau^- < \tau^+$ shows that the expression is decreasing in $m$. As a result, the expectation overall is decreasing in $m$ and so is its supremum over all $\tau \in \widetilde{\mathcal{T}}$.
    
    \item The result follows by applying the same approach as in part $(i)$ to the case where $m < 0$.
    
    \item Consider any stopping time $\tau$. Straightforward algebra shows that the expression $V(t,m; \lambda) - \lambda S_\alpha(m)$ depends on $\lambda$ only through $\lambda(\mathbb{E}[S_\alpha(m_\tau)] - S_\alpha(m))$. The claim follows if we show that $\lambda(\mathbb{E}[S_\alpha(m_\tau)] - S_\alpha(m)) \ge 0$. Since $S_\alpha(m)$ is concave, Jensen's Inequality implies
    \[
    \mathbb{E}\left[S_\alpha\left(m + \int_t^{\tau}\upsilon_{t^\prime}dW(t^\prime)\right)\right] - S_\alpha(m) \geq S_\alpha\left(\mathbb{E}\left[m + \int_t^\tau \upsilon_{t^\prime}dW(t^\prime)\right]\right) - S_\alpha(m).
    \]
    By Doob's optional stopping theorem, the stochastic integral on the right-hand side of the expression is mean 0, so $\mathbb{E}[S_\alpha(m_\tau)] - S_\alpha(m) \geq 0$. 
    
    \item Let $\tau^*$ denote the optimal stopping strategy. From the definition of $V(t,m)$ in (\ref{eq:contvalue}), we observe that $V(t,m; B)-B-\lambda S_\alpha(m)$ depends on $B$ only through $B(\mathbb{P}[m_{\tau^*} \geq 0|t,m] - 1)$. Thus, it is weakly decreasing in $B$.

    That $V(t,m; B)$ is increasing in $B$ is trivial since Bob's utility function is increasing in B. 
 
    \item The result follows from the same time-change argument as in Lemma 2(vi) from \cite{fudenberg-strack-strzalecki2018}.

    \item Follows from Lemma 2(vii) of \cite{fudenberg-strack-strzalecki2018}.
\end{enumerate} 
\end{proof}

\subsection{Lipschitz continuity of $b^-(t)$}

To show Lipschitz continuity of $b^-(t)$, we adapt a set of lemmas from \cite{fudenberg-strack-strzalecki2018}. By Lemma \ref{lem:alpha_equivalent_bounds}, it suffices to look at the case where $\alpha = \frac{1}{2}$, so we take $S_\alpha(m) = \frac{1}{2}\lambda|m|$ without loss of generality.

In what follows, we write $V(t,m)$ and $b^-(t)$ as $V(t,m; c, \varrho_0, B, \sigma)$ and  $b^-(t; c, \varrho_0, B, \sigma)$ to make explicit the dependence of these quantities on the parameters $c, \varrho_0, B, \sigma$. The value of $\lambda$ is held fixed.

\begin{lem}\label{lemo2fss}
Under Assumptions 1-3, for any $\beta > 0$,
    \begin{align*}
    V(0,m; c\beta, \varrho_0, B, \sigma) &= \sup_\tau \beta^{-1}\mathbb{E}\left[\beta B\mathds{1}[n_\tau \geq 0] + \frac{\lambda}{2}|n_\tau| - c\tau\right], \textrm{ where} \\
    n_t &:= \beta m + \int_0^t \frac{\sigma^{-1}}{\varrho_0^{-1}+t'\beta^{-2}\sigma^{-2}}dW(t').
    \end{align*}
\end{lem}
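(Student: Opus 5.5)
Starting from the definition of the value function at $t=0$ with cost $c\beta$,
\[
V(0,m;c\beta,\varrho_0,B,\sigma)=\sup_\tau \mathbb{E}\Bigl[B\mathds{1}[m_\tau\ge 0]+\tfrac{\lambda}{2}|m_\tau|-c\beta\tau\Bigr],\qquad m_t=m+\int_0^t \upsilon_{t'}\,dW(t'),
\]
with $\upsilon_t=\sigma^{-1}/(\sigma^{-2}t+\varrho_0^{-1})$ as in Theorem \ref{thm:Sampling_strategy}(iii) (taking $\alpha=1/2$, which Lemma \ref{lem:alpha_equivalent_bounds} permits), we multiply the objective by $\beta$ and write $\beta^{-1}\cdot\beta(\cdot)$. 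The indicator is scale invariant, since $\mathds{1}[m_\tau\ge0]=\mathds{1}[\beta m_\tau\ge0]$ for $\beta>0$, and $\beta\frac{\lambda}{2}|m_\tau|=\frac{\lambda}{2}|\beta m_\tau|$. The objective therefore equals $\beta^{-1}\mathbb{E}[\beta B\mathds{1}[\beta m_\tau\ge0]+\frac{\lambda}{2}|\beta m_\tau|-c\beta^2\tau]$. The remaining work is to absorb the factor $\beta^2$ on the cost into a deterministic time change.

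We use $s=\beta^2 t$. Set $\tilde n_s:=\beta m_{s/\beta^2}$. Then $\tilde n_0=\beta m$, and
\[
d\tilde n_s=\beta\,\upsilon_{s/\beta^2}\,dW(s/\beta^2).
\]
By Brownian scaling, $\tilde W(s):=\beta W(s/\beta^2)$ is a standard Brownian motion, so $dW(s/\beta^2)=\beta^{-1}d\tilde W(s)$ and
\[
d\tilde n_s=\upsilon_{s/\beta^2}\,d\tilde W(s)=\frac{\sigma^{-1}}{\sigma^{-2}s\beta^{-2}+\varrho_0^{-1}}\,d\tilde W(s).
\]
This is exactly the dynamics of $n_s$ in the statement, so $\tilde n$ and $n$ have the same law.

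Stopping times transfer through the change of variables. If $\tau$ is a stopping time for the filtration of $m$, then $\sigma:=\beta^2\tau$ is a stopping time for the filtration of $\tilde n$, because the two filtrations coincide under the deterministic time change. Randomization via $U$ carries over unchanged. The map is a bijection, and $c\beta^2\tau=c\sigma$ and $\beta m_\tau=\tilde n_\sigma$. Hence
\[
V(0,m;c\beta,\ldots)=\sup_\sigma \beta^{-1}\mathbb{E}\Bigl[\beta B\mathds{1}[\tilde n_\sigma\ge0]+\tfrac{\lambda}{2}|\tilde n_\sigma|-c\sigma\Bigr].
\]
Since this expectation depends only on the law of the pair $(\tilde n,\sigma)$, we may replace $\tilde n$ by $n$, which gives the claim.

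The only delicate step is the bijection of stopping-time classes under the time change, including randomized stopping times and the restriction to $\mathbb{E}[\tau]<\infty$ from Step 3 of the proof of Theorem \ref{thm:Sampling_strategy}. The time change is deterministic and linear, so the restriction is preserved, since $\mathbb{E}[\sigma]=\beta^2\mathbb{E}[\tau]$. Optimal stopping values also depend only on the law of the underlying process, which we can justify by noting that optimal stopping times may be taken Markovian, i.e. in the class $\widetilde{\mathcal T}$ used in the proof of Lemma \ref{lem:lemsupp}. I expect this step to need care rather than real difficulty, and the rest is routine scaling.
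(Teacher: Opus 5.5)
Your proof is correct and follows the paper's route. You multiply through by $\beta$ and absorb $\beta$ into the indicator using $\beta>0$. You then remove the $\beta^2$ on the cost via the deterministic time change $s=\beta^2 t$ with Brownian scaling, which is the step the paper delegates to Lemma O.2 of \cite{fudenberg-strack-strzalecki2018} and which you carry out explicitly (the only blemish is notational: calling the rescaled stopping time $\sigma$ clashes with the noise parameter $\sigma$ in the integrand of $n_t$).
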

\begin{proof}
    Recall from the proof setup of Lemma \ref{lem:lemsupp} that we can write $V(0, m; c\beta, \varrho_0, B, \sigma)$ as
    \begin{align*}
        \sup_\tau \mathbb{E}\left[B\mathds{1}\left[m + \int_0^{\tau} \upsilon_{t'}\ dW(t')\geq 0\right] + \frac{\lambda}{2}\left|m + \int_0^{\tau} \upsilon_{t'}\ dW(t')\right| - c\beta\tau\right],
    \end{align*}
    where
    \[
        \upsilon_{t} = \frac{\sigma^{-1}}{\sigma^{-2}t + \varrho_0^{-1}}. 
    \]
    Manipulating the expression above gives
    \begin{align*}
       &  V(0,m; c\beta, \varrho_0, B, \sigma) \\
       & = \beta^{-1}\sup_\tau \mathbb{E}\left[\beta B\mathds{1}\left[\beta m + \int_0^{\tau} \beta\upsilon_{t'}\ dW(t')\geq 0\right] + \frac{\lambda}{2}\left|\beta m + \int_0^{\tau} \beta\upsilon_{t'}\ dW(t')\right| - c\beta^2\tau\right].
    \end{align*}
    Because $\beta > 0$, we can introduce $\beta$ into the indicator function without issue. Having made this adjustment for $B\mathds{1}[\cdot]$, the rest of the proof follows exactly as in Lemma O.2 of \cite{fudenberg-strack-strzalecki2018}.
\end{proof}

\begin{lem}\label{lem:boundaryscaling}
    Under Assumptions 1-3, $b^-(t; c, \varrho_0, B)$ satisfies:
    \begin{enumerate}[(i)]
        \item $b^-(t;c,\varrho_0,B) = b^-(0;c,\varrho_t^*, B)$, where $\varrho_t^*$ is the posterior variance of $\mu_1 - \mu_0$.
        \item $b^-(0;c,\beta^2 \varrho_0,B) = \beta b^-(0;c\beta^{-3}, \varrho_0, \beta^{-1}B)$.
        \item $b^-(0;c\beta, \varrho_0, B) \leq \beta^{-1} b^-(0; c, \varrho_0, \beta B)$ for $\beta > 1$.
    \end{enumerate}
\end{lem}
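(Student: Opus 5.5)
Each part comes down to comparing value functions: first reduce to $\alpha=1/2$ via Lemma \ref{lem:alpha_equivalent_bounds}, so that the stopping payoff below $0$ is $\frac{\lambda}{2}|m|$. Throughout, $b^-(\cdot)$ is determined by the sign of $V(t,m)-\frac{\lambda}{2}|m|$ for $m<0$, and Lemma \ref{lem:lemsupp}(ii) makes this quantity increasing in $m$. It therefore suffices to compare value functions pointwise at $m<0$ across the two parameter configurations.

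For part (i), I will use Lemma \ref{lem:lemsupp}(vi), which gives $V(t,m;\varrho_0)=V(0,m;\varrho_t^*)$ for every $m$. The parameters $c,B,\lambda$ are unchanged, so the stopping sets $\{m<0:\frac{\lambda}{2}|m|\ge V\}$ coincide at every $m$. Taking suprema then gives $b^-(t;c,\varrho_0,B)=b^-(0;c,\varrho_t^*,B)$.

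For part (ii), I will use Lemma \ref{lemo2fss}, applied in reverse, to rescale the posterior mean process. With prior variance $\beta^2\varrho_0$, the volatility is $\upsilon_t=\sigma^{-1}/(\sigma^{-2}t+\beta^{-2}\varrho_0^{-1})$. Dividing the process by $\beta$ gives a process started at $m/\beta$ with volatility $\sigma^{-1}\beta^{-1}/(\sigma^{-2}t+\beta^{-2}\varrho_0^{-1})$. A time change $t=\beta^2 s$ turns this into $\beta\cdot\sigma^{-1}/(\sigma^{-2}s+\varrho_0^{-1})$ in $ds$ units. Once the Brownian scaling factor $\sqrt{\beta^2}$ is absorbed, this is exactly the original prior-$\varrho_0$ process.

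Under this change, the cost $c\tau=c\beta^2 s$ and the terminal payoff becomes $B\mathds{1}[\cdot]+\frac{\lambda}{2}\beta|\tilde m|$. Dividing the whole objective by $\beta$ yields $\beta^{-1}B\mathds{1}[\cdot]+\frac{\lambda}{2}|\tilde m|-c\beta s$. I will need to track the exponent carefully here: the claimed $c\beta^{-3}$ suggests that the intended cost normalization comes from the proof of Lemma O.2 in \cite{fudenberg-strack-strzalecki2018}, and I will follow that calculation, which is where the $\beta^{-1}$ in front of $B$ appears. The outcome is $V(0,m;c,\beta^2\varrho_0,B)=\beta V(0,m/\beta;c\beta^{-3},\varrho_0,\beta^{-1}B)$. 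Comparing with the stopping payoff $\frac{\lambda}{2}|m|=\beta\cdot\frac{\lambda}{2}|m/\beta|$, the stopping sets scale by $\beta$, which is (ii).

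For part (iii), I will again use Lemma \ref{lemo2fss}. It gives $V(0,m;c\beta,\varrho_0,B)=\beta^{-1}\sup_\tau\mathbb{E}[\beta B\mathds{1}[n_\tau\ge0]+\frac{\lambda}{2}|n_\tau|-c\tau]$, where $n$ starts at $\beta m$ with volatility $\sigma^{-1}/(\varrho_0^{-1}+t\beta^{-2}\sigma^{-2})$. For $\beta>1$ this volatility dominates the original one $\upsilon_t$. In terms of the posterior-variance time change used in Lemma \ref{lem:lemsupp}(v), $n$ is the original process run with a cheaper clock, so any stopping rule for the original process can be replicated at lower cost.

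This yields $V(0,m;c\beta,\varrho_0,B)\ge\beta^{-1}V(0,\beta m;c,\varrho_0,\beta B)$. Direction matters here. A larger continuation value relative to stopping means continuation is preferred, and I want $b^-(0;c\beta)\le\beta^{-1}b^-(0;c,\beta B)$. But a larger cost should shrink the continuation region, so the inequality I actually need is the reverse, $V(0,m;c\beta,\ldots)\le\beta^{-1}V(0,\beta m;c,\varrho_0,\beta B)$. That inequality should follow because $n$'s volatility exceeds $\upsilon$, which makes the $n$-problem more valuable: the original problem is a time-changed, slowed-down version of the $n$-problem, and waiting costs it more.

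Given that inequality, evaluate at $m=\beta^{-1}b^-(0;c,\varrho_0,\beta B)$. There the right-hand side equals $\beta^{-1}\frac{\lambda}{2}|\beta m|=\frac{\lambda}{2}|m|$, so stopping is optimal in the $c\beta$ problem, and $b^-(0;c\beta)\ge m$. Since that is the wrong direction, I will have to recheck the signs against FSS Lemma O.3. \textbf{The main obstacle is exactly this:} getting the volatility comparison and the direction of the inequality right, including handling the indicator term $\beta B$, which is unaffected by the scaling of $n$ because $\beta>0$.
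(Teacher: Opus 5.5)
Your proposal does not prove part (iii). It ends at an unresolved sign conflict, and that conflict comes from discarding the correct inequality. The first bound you derive, $V(0,m;c\beta,\varrho_0,B)\ge\beta^{-1}V(0,\beta m;c,\varrho_0,\beta B)$, is exactly the one the paper's proof uses. Your justification for it is also right. By Lemma~\ref{lemo2fss} the left side is the stopping problem for $n$, whose volatility dominates $\upsilon_t$ when $\beta>1$. So any stopping rule for the original process (started at $\beta m$, with benefit $\beta B$ and cost $c$) can be replicated by stopping $n$ at the same quadratic-variation level, sooner and hence more cheaply.

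The ``reverse'' inequality you then switch to is unjustified. The argument you give for it (the $n$-problem is more valuable) actually supports the first inequality, because the $n$-problem \emph{is} the left-hand side. Nor does the first inequality point the wrong way: a larger value function means a smaller stopping set, and therefore a more negative $b^-$. Concretely, suppose $m<0$ is in the stopping set of the $(c\beta,B)$ problem. Then $\frac{\lambda}{2}|m|\ge V(0,m;c\beta,\varrho_0,B)\ge\beta^{-1}V(0,\beta m;c,\varrho_0,\beta B)$, i.e.\ $\frac{\lambda}{2}|\beta m|\ge V(0,\beta m;c,\varrho_0,\beta B)$. By the definition of $b^-$ as a supremum, $\beta m\le b^-(0;c,\varrho_0,\beta B)$, and taking the supremum over such $m$ gives (iii).

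Testing at $m=\beta^{-1}b^-(0;c,\varrho_0,\beta B)$, as you did, is the wrong place; the inequality must be applied at points of the $c\beta$-problem's stopping set. There is also no clash with the intuition that higher cost shrinks the continuation region. Part (iii) only says $|b^-(0;c\beta,\varrho_0,B)|\ge\beta^{-1}|b^-(0;c,\varrho_0,\beta B)|$. That compares with a problem rescaled by $\beta$ in both state and benefit, and the bound already carries the shrinkage factor $\beta^{-1}$.

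Parts (i) and (ii) follow the same route as the paper, which simply cites Lemma~3 of \cite{fudenberg-strack-strzalecki2018}. The identity you state, $V(0,m;c,\beta^2\varrho_0,B)=\beta V(0,m/\beta;c\beta^{-3},\varrho_0,\beta^{-1}B)$, is correct. Your own derivation of it, however, inverts the time change, which is why the exponent on $c$ did not come out. After dividing by $\beta$ the volatility is $\beta^{-1}\sigma^{-1}/(\sigma^{-2}t+\beta^{-2}\varrho_0^{-1})=\beta\sigma^{-1}/(\beta^{2}\sigma^{-2}t+\varrho_0^{-1})$, so the right substitution is $s=\beta^2 t$, not $t=\beta^2 s$. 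Then $dW(t)\overset{d}{=}\beta^{-1}d\tilde W(s)$ cancels the factor $\beta$, and the cost becomes $ct=c\beta^{-2}s$. Dividing the objective by $\beta$ gives exactly $c\beta^{-3}s$ alongside $\beta^{-1}B\mathds{1}[\cdot]$, so there is no need to defer to the FSS calculation.
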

\begin{proof}
 Part (i) follows from \citet[Lemma 3]{fudenberg-strack-strzalecki2018}. Part (ii) also follows by the same argument as in the proof of \citet[Lemma 3]{fudenberg-strack-strzalecki2018} after some straightforward adjustments. 

It remains to prove Part (iii). By Lemma \ref{lemo2fss}, $V(0,m; c\beta, \varrho_0, B)$ equals
    \begin{align*}
        \sup_\tau \beta^{-1}\mathbb{E}\left[\beta B\mathds{1}\left[\beta m + \int_0^t \frac{\sigma^{-1}}{\varrho_0^{-1} + t'\beta^{-2}\sigma^{-2}}dW(t') \geq 0\right] + \frac{\lambda}{2} \left|\beta m + \int_0^t \frac{\sigma^{-1}}{\varrho_0^{-1} + t'\beta^{-2}\sigma^{-2}}dW(t')\right| - c\tau\right].
    \end{align*}
Since $\beta > 1$, this implies    
\begin{align*}
        V(0, m; c\beta, \varrho_0, B) \geq \sup_\tau \beta^{-1}\mathbb{E}\left[\beta B\mathds{1}\left[\beta m + \int_0^t \frac{\sigma^{-1}}{\varrho_0^{-1} + t'\sigma^{-2}}dW(t') \geq 0\right]\right. \\ + \left.\frac{\lambda}{2} \left|\beta m + \int_0^t \frac{\sigma^{-1}}{\varrho_0^{-1} + t'\sigma^{-2}}dW(t')\right| - c\tau\right].
    \end{align*}
    By definition, the right hand side is equal to $\beta^{-1}V(0, \beta m; c, \varrho_0, \beta B)$. As a result,
\begin{align*}
        b^-(0; c\beta, \varrho_0, B) &:= \sup_{m < 0} \left\{\frac{\lambda}{2}|m|\geq V(0, m; c\beta, \varrho_0, B)\right\} \\
        &\leq \sup_{m < 0} \left\{\frac{\lambda}{2}|m| \geq \beta^{-1}V(0, \beta m; c, \varrho_0, \beta B)\right\} \\ 
        &= \beta^{-1}b^-(0; c, \varrho_0, \beta B).
\end{align*}        

\end{proof}



\begin{lem}\label{lem:Lip_continuity}
    Under Assumptions 1-3, $b^-(t)$ is Lipschitz continuous. 
\end{lem}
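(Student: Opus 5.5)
The plan follows \citet[Lemma 4]{fudenberg-strack-strzalecki2018}, using Lemma \ref{lem:boundaryscaling} to turn the time dependence of $b^-$ into a dependence on the prior variance, and then a scaling argument. By Lemma \ref{lem:alpha_equivalent_bounds} we work with $\alpha=1/2$. Fix $0\le t<t'$. By Lemma \ref{lem:boundaryscaling}(i), $b^-(t)=b^-(0;c,\varrho_t^*,B)$ and $b^-(t')=b^-(0;c,\varrho_{t'}^*,B)$. Write $\varrho_t^*=\beta^2\varrho_{t'}^*$ with
\[
\beta^2=\frac{\varrho_t^*}{\varrho_{t'}^*}=\frac{\sigma^2\varrho_0^{-1}+t'}{\sigma^2\varrho_0^{-1}+t}=1+\frac{t'-t}{\sigma^2\varrho_0^{-1}+t}\ge 1 .
\]
Then Lemma \ref{lem:boundaryscaling}(ii) gives $b^-(t)=\beta\, b^-(0;c\beta^{-3},\varrho_{t'}^*,\beta^{-1}B)$.

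Next we remove the rescaled cost and benefit so as to compare with $b^-(t')$. Monotonicity in the cost is obtained as in Lemma \ref{lem:lemsupp}: a smaller cost raises $V$ and so pushes the lower boundary outward. Combined with the fact that $b^-$ is decreasing in $B$ (Theorem \ref{thm:optimal-stopping-rule}(vi)), this yields
\[
b^-(0;c\beta^{-3},\varrho,\beta^{-1}B)\ \ge\ b^-(0;c\beta^{-3},\varrho,B)\ \ge\ \beta^{-3}\, b^-(0;c,\varrho,\beta^3 B)\cdot(\text{correction}).
\]
The second step uses Lemma \ref{lem:boundaryscaling}(iii) with $\beta^3>1$, after substituting $c\mapsto c\beta^{-3}$. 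It then remains to control $b^-(0;c,\varrho,\beta^3B)$ against $b^-(0;c,\varrho,B)=b^-(t')$. Here the dependence on $B$ must be shown to be Lipschitz, or at least bounded by a factor $1+O(\beta-1)$.

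This $B$-dependence is the step I expect to be the main obstacle, because the scaling now moves $B$ in the direction that worsens the bound. I would first try to avoid it by choosing the scaling so that $B$ is unchanged. Lemma \ref{lemo2fss} shows that multiplying the posterior mean by $\beta$ while keeping $\beta B\mathds{1}[\cdot]$ amounts to rescaling $B$ by $\beta$. Alternatively, one can bound $V(0,m;\cdot,\beta^{3}B)-V(0,m;\cdot,B)\le (\beta^3-1)B$ and use the strict slope of $\frac{\lambda}{2}|m|-V$ near $b^-$, which comes from Lemma \ref{lem:lemsupp}(ii) together with the explicit Jensen gain, to convert a value gap into a boundary gap. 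Either route should give
\[
0\le b^-(t')-b^-(t)\le K(\beta-1)\le K'(t'-t),
\]
with $K'$ depending on $\sigma^2\varrho_0^{-1}$, the uniform bound $|b^-(0)|<\infty$ from Theorem \ref{thm:optimal-stopping-rule}(i), and $B/\lambda$. The lower inequality $0\le b^-(t')-b^-(t)$ is immediate from the monotonicity in Theorem \ref{thm:optimal-stopping-rule}(ii). Since $\beta-1\le (t'-t)/(2(\sigma^2\varrho_0^{-1}+t))$, the constant is uniform in $t$, which gives Lipschitz continuity.
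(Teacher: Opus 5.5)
Your architecture, Lemma~\ref{lem:boundaryscaling}(i)--(iii) combined with the monotonicity of $b^-$ in $B$ from Theorem~\ref{thm:optimal-stopping-rule}(vi), is the same as the paper's. However, the chain you display does not close, and the obstacle you single out comes from misapplying Lemma~\ref{lem:boundaryscaling}(iii). With $c\mapsto c\beta^{-3}$ and factor $\beta^3>1$, part (iii) reads $b^-(0;c,\varrho,B')\le\beta^{-3}\,b^-(0;c\beta^{-3},\varrho,\beta^{3}B')$. That is, it bounds $b^-(0;c\beta^{-3},\varrho,\cdot)$ from below by $\beta^{3}\,b^-(0;c,\varrho,\beta^{-3}\,\cdot)$. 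You wrote $\beta^{-3}\,b^-(0;c,\varrho,\beta^{3}\,\cdot)$, which inverts both the prefactor and the rescaling of $B$, and you left a ``correction'' factor unspecified. Taking $B'=\beta^{-4}B$ in the correct inequality gives
\[
b^-(t)=\beta\,b^-(0;c\beta^{-3},\varrho_{t'}^*,\beta^{-1}B)\ \ge\ \beta^{4}\,b^-(0;c,\varrho_{t'}^*,\beta^{-4}B)\ \ge\ \beta^{4}\,b^-(t'),
\]
where the last step uses only that $b^-$ is decreasing in $B$ and $\beta^{-4}B\le B$. So the benefit moves in the favourable direction, and no Lipschitz control in $B$ is needed. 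Rewriting this as $b^-(t')\le\beta^{-4}b^-(t)$ yields $0\le b^-(t')-b^-(t)\le(1-\beta^{-4})|b^-(0)|\le 2\sigma^{-2}\varrho_0|b^-(0)|\,(t'-t)$. This is exactly the paper's estimate $b^-(\epsilon)\le\gamma_\epsilon^4\,b^-(0)$ with $\gamma_\epsilon=\beta^{-1}$. The paper simply runs the chain from the later time, so the cost is scaled up and (iii) applies directly with multiplier $\gamma_\epsilon^{-3}>1$, with no substitution.

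The fall-back routes you sketch would not have rescued the argument as written. The route through Lemma~\ref{lemo2fss} is not carried out. The slope route fails: Lemma~\ref{lem:lemsupp}(ii) gives only weak monotonicity of $V-\lambda S_\alpha$, and at $b^-(t)$ smooth fit makes $\partial_m\bigl(V-\tfrac{\lambda}{2}|m|\bigr)$ vanish. The gap is therefore of second order at the boundary, and a value perturbation of size $(\beta^3-1)B$ only controls the boundary shift to order $\sqrt{\beta-1}$. That gives H\"older-$1/2$ continuity, not Lipschitz. Note also that plain monotonicity in the cost moves $|b^-(0;c\beta^{-3},\cdot)|$ in the unfavourable direction. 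This is why the quantitative scaling in (iii), applied the right way round, is the step that makes the proof work.
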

\begin{proof}
Let $\gamma_\epsilon = (1 + \epsilon\sigma^{-2}\varrho_0)^{-1/2}\leq 1$. Note that $\gamma^2_\epsilon\varrho_0$ is the posterior variance at time $\epsilon$. We apply each item of Lemma \ref{lem:boundaryscaling}, along with the fact that $b^-(\cdot)$ is weakly decreasing in $B$---see Theorem \ref{thm:optimal-stopping-rule}(vi)---to get
\begin{align*}
        b^-(\epsilon; c,\varrho_0, B) 
         &= b^-(0; c, \gamma_\epsilon^2\varrho_0, B) \\
         &= \gamma_\epsilon b^-(0; c\gamma^{-3}_\epsilon, \varrho_0, \gamma_{\epsilon}^{-1}B)  \\
         &\leq \gamma^4_\epsilon b^-(0;c,\varrho_0, \gamma_\epsilon^{-4} B) \le \gamma^4_\epsilon b^-(0;c,\varrho_0, B).
\end{align*}
In addition, Theorem \ref{thm:optimal-stopping-rule}(ii) shows that $b^-(t;\cdot)$ is weakly increasing in $t$. As a result, we have
\begin{align*}
             0 \le b^-(\epsilon; c, \varrho_0, B) - b^-(0;c,\varrho_0,B) &\le (\gamma_\epsilon^4 - 1)b^-(0; c,\varrho_0, B) \\
             &\le -2\epsilon\sigma^{-2}\varrho_0 b^-(0; c,\varrho_0, B),
 \end{align*}
where the last step employs the elementary inequality $1 - (1/(1+x)^2) \le 2x$ for any $x \ge 0$. By applying Lemma \ref{lem:boundaryscaling}(i), we can generalize the preceding result to obtain Lipschitz continuity for every $t$:
\begin{align*}
    0 \le b^-(t+\epsilon; c, \varrho_0, B) - b^-(t; c,\varrho_0, B) &= b^-(\epsilon;c,\rho_t^*,B) - b^-(\epsilon;c,\rho_t^*,B)\\
    & \le -2\epsilon\sigma^{-2}\varrho_t^* b^-(0; c, \varrho_t^*, B)  \\
    &= -2\epsilon\sigma^{-2}\varrho_t^{*}b^-(t; c, \varrho_0, B) \\
    &\le -2\epsilon\sigma^{-2}\varrho_0 b^-(0; c, \varrho_0, B),
\end{align*}
where the last step follows from the facts $\varrho_t^* \le \varrho_0$ and $b^-(t;\cdot)$ is weakly increasing. Thus, the Lipschitz constant for $b^-(t;\cdot)$ is $2\sigma^{-2}\varrho_0 \vert b^-(0;c,\varrho_0,B)\vert < \infty$.
\end{proof}

\subsection{Convergence of optimal Bayes risks}

The proofs of Theorem \ref{thm:asymptotic_upper_bound} and \ref{thm:optimal-strategy-discrete-time} make use of the following lemma. Recall that the likelihood ratios $\varphi_n(\bm{h};\tau_n),\varphi(\bm{h};\tau)$ are defined in the proof of Theorem \ref{thm:limit_experiment}. 

\begin{lem}\label{lem:convergence_optimal_Bayes}
    Suppose the likelihood ratios induced by a sequence of experimental strategies, $\bm{d}_n$, converge to those induced by $\bm{d}$, in the sense that 
    $$
     \varphi_{n}(\bm{h};\tau_{n}) \xrightarrow [\mathbb{P}_{n,0}]{d} \varphi(\bm{h};\tau)\ \textrm{for each } \bm{h}.
    $$
    Let $\delta^*_n$ represent Alice's Bayes-optimal strategy relative to $\bm{d}_n$. Then, under Assumptions \ref{asm:qmd}-\ref{asm:asm-on-mu} and a Gaussian prior $\Gamma_0$,
    $$
    \lim_{n \to \infty}\int W^A_n((\bm{d}_n^*,\delta_n^*), \bm{h}) d\Gamma_0(\bm{h}) = \int W^A((\bm{d}^*,\delta^*_{\bm{d}}), \bm{h}) d\Gamma_0(\bm{h}), 
    $$
    where $\delta_{\bm{d}}$ is a Bayes-optimal strategy by Alice given $\bm{d}$ in the limit experiment. 
\end{lem}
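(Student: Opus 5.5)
The plan is to reduce Alice's welfare to a functional of the posterior (Bayes) risk of a binary decision and then exploit the convergence of the likelihood-ratio processes together with the Gaussian prior. By Assumption \ref{asm:asm-on-mu}, and as in the proof of Theorem \ref{thm:limit_experiment},
\[
W_n^A((\bm{d}_n,\delta_n),\bm{h}) = \mathbb{E}_{n,\bm{h}}[\delta_n]\mu(\bm{h}) - (1-\alpha)\mu(\bm{h}) + o(1),
\]
uniformly after truncation at level $M$ in the sense of (\ref{eq:welfare_truncation_2}). Integrating against $\Gamma_0$ and changing measure to $\mathbb{P}_{n,0}$ through the likelihood ratio at $\tau_n$ (exactly as in (\ref{eq:weak_convergence_3})), the integrated welfare becomes
\[
\mathbb{E}_{n,0}\Big[\delta_n \int \mu(\bm{h}) e^{\varphi_n(\bm{h};\tau_n)} d\Gamma_0(\bm{h})\Big] + \text{const}.
\]
Alice's Bayes-optimal rule is therefore $\delta_n^* = \mathds{1}\{G_n \ge 0\}$, where
\[
G_n := \int \mu(\bm{h}) e^{\varphi_n(\bm{h};\tau_n)} d\Gamma_0(\bm{h}),
\]
and her optimal welfare equals $\mathbb{E}_{n,0}[\max\{G_n,0\}] - (1-\alpha)\int\mu\,d\Gamma_0 + o(1)$. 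The analogous formula holds in the limit experiment, with $G := \int \mu(\bm{h}) e^{\varphi(\bm{h};\tau)} d\Gamma_0(\bm{h})$. A key point is that the value depends only on the law of the posterior-mean numerator $G_n$, and not on tie-breaking, since $\max\{\cdot,0\}$ is continuous.

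The steps, in order, are as follows. First, perform the truncation: replace $\mu$ by a bounded version and $\Gamma_0$ by its restriction to a large compact set $K$, with an error below $\eta$ uniformly in $n$ by Gaussian tails, following (\ref{eq:welfare_truncation}). Second, establish convergence in distribution of $G_n^K := \int_K \mu_M(\bm{h}) e^{\varphi_n(\bm{h};\tau_n)}d\Gamma_0$. Because $\varphi_n(\bm{h};\tau_n)$ is, by (\ref{eq:SLAN}), asymptotically the quadratic form $\sum_a[h_a^\intercal I_a^{1/2}X_{n,a}(\tau_n) - \tfrac{q_{n,a}(\tau_n)}{2}h_a^\intercal I_a h_a]$ up to $o_{\mathbb{P}_{n,0}}(1)$ uniformly on compact $\bm{h}$-sets, $G_n^K$ is a continuous functional of the finite-dimensional statistic $(X_{n,a}(\tau_n), q_{n,a}(\tau_n))_a$. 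Pointwise convergence for each $\bm{h}$ must be upgraded to joint convergence of this statistic. I would obtain this by noting that the hypothesis for each $\bm{h}$ pins down the joint limit law of the sufficient statistic (its Laplace transforms), so that the continuous mapping theorem gives $G_n^K \to G^K$ in distribution. Third, show uniform integrability of $\max\{G_n^K,0\}$ under $\mathbb{P}_{n,0}$. This follows from $|G_n^K| \le M\int e^{\varphi_n}d\Gamma_0$ and the fact that $\mathbb{E}_{n,0}[e^{\varphi_n(\bm{h};\tau_n)}]\le 1$ with convergence to $1 = E[e^{\varphi(\bm{h};\tau)}]$ (the Novikov argument in Step 2 of the proof of Theorem \ref{thm:limit_experiment}); Scheffé's lemma then yields uniform integrability. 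Together these give $\mathbb{E}_{n,0}[\max\{G_n^K,0\}] \to E[\max\{G^K,0\}]$. Finally, let $\eta\to 0$ and undo the truncations in the limit experiment using (\ref{eq:welfare_truncation_3}).

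I expect the main obstacle to be the second step, namely passing from convergence of $\varphi_n(\bm{h};\tau_n)$ for each fixed $\bm{h}$ to convergence of the integral over $\bm{h}$. Pointwise-in-$\bm{h}$ marginal convergence does not by itself control the integral. My first attempt would be to exploit the explicit LAN quadratic structure, which reduces the problem to finitely many statistics whose joint convergence follows from (\ref{eq:convergence_of_experiments}) along the relevant subsequence. If that fails, I would fall back on a Riemann-sum approximation of the integral over the compact set $K$, using the Cramér–Wold device at finitely many points $\bm{h}_j$ together with equicontinuity in $\bm{h}$ of the quadratic form.
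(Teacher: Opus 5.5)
Your route differs from the paper's, and it is sound in outline.

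\textbf{The paper's route.} It truncates $u$ at $M$ and replaces $\Gamma_0$ by a finitely supported prior on a set $H$. It splits $\mathbb{P}_{n,\bm h}$ into $e^{\varphi_n(\bm h;\tau_n)}d\mathbb{P}_{n,0}$ plus a singular part whose mass $1-\mathbb{E}_{n,0}[e^{\varphi_n(\bm h;\tau_n)}]$ vanishes. It then applies the continuous mapping theorem to the convex function $\phi$ of the finitely many likelihood ratios $\{\varphi_n(\bm h;\tau_n)\}_{\bm h\in H}$, getting uniform integrability from domination by their sum. Finally it returns to $\Gamma_0$ by a total-variation approximation.

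\textbf{Your route.} You keep the continuous prior, restricted to a compact $K$, and write the Bayes value as $\mathbb{E}_{n,0}[\max\{G_n,0\}]$ plus a constant. You then use the LAN expansion uniformly over $\bm h\in K$ to make $G_n^K$ a continuous functional of $(X_{n,a}(\tau_n),q_{n,a}(\tau_n))_a$, up to $o_{\mathbb{P}_{n,0}}(1)$.

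\textbf{What each buys.} The paper never needs the LAN form beyond its hypothesis. You need \eqref{eq:SLAN} uniformly over $\bm h\in K$; this is standard under qmd (LAN along convergent sequences $\bm h_n\to\bm h$), but it is not what \eqref{eq:SLAN} states. In return, your compact truncation costs only $O(\Gamma_0(K^c))$ uniformly in $n$, since $\mathbb{E}_{n,0}[e^{\varphi_n}]\le 1$. The paper's last step, by contrast, does not work as written. A nondegenerate Gaussian prior has no atoms, so it is at total-variation distance $1$ from every finitely supported measure. The finite-support reduction would therefore need an extra argument, e.g.\ equicontinuity of $\bm h\mapsto\mathbb{P}_{n,\bm h}$ on $\mathcal{F}_{\tau_n}^{\bm q_n}$ in total variation, on compacts and uniformly in $n$.

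Three points to tighten.

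\textbf{First: joint convergence in Step 2.} Do not rely on the claim that the laws of $\varphi(\bm h;\tau)$, taken one $\bm h$ at a time, pin down the joint law of the sufficient statistic. Set $A_a:=I_a^{1/2}X_a(\tau)$. The per-$\bm h$ laws only give the joint characteristic function of $(A_1,A_0,q_1(\tau),q_0(\tau))$ at points $(c_1,c_0,\kappa\,c_1^\intercal I_1c_1,\kappa\,c_0^\intercal I_0c_0)$. This is a codimension-one set on which the quadratic coefficients are tied to the linear ones. It suffices when $\tau$ is deterministic, but you give no argument that it identifies the law in general.
\begin{itemize}
\item Cram\'er--Wold at finitely many $\bm h_j$ fails for the same reason: it needs convergence of $\sum_j c_j\varphi_n(\bm h_j;\tau_n)$, which the hypothesis does not supply.
\item What works is your other suggestion: take a subsequence along which $(X_{n,a}(\cdot),q_{n,a}(\cdot),\tau_n)$ converge jointly. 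This is arranged in \eqref{eq:convergence_of_experiments}--\eqref{eq:convergence_of_stopping_times}, or via continuous mapping in Theorem \ref{thm:optimal-strategy-discrete-time}.
\item That is stronger than the lemma's stated hypothesis. The paper's proof uses the same input tacitly, since its continuous-mapping step requires the vector $(\varphi_n(\bm h;\tau_n))_{\bm h\in H}$ to converge jointly.
\end{itemize}

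\textbf{Second: the change of measure.} Changing measure to $\mathbb{P}_{n,0}$ is not exact for finite $n$; \eqref{eq:weak_convergence_3} is only an inequality. Use the Lebesgue decomposition as the paper does. The singular parts then contribute at most $M\int\|\sigma_{n,\bm h}\|\,d\Gamma_0(\bm h)$, which tends to $0$ by dominated convergence.

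\textbf{Third: uniform integrability.} Uniform integrability of $\int_K e^{\varphi_n}\,d\Gamma_0$ does not follow from the per-$\bm h$ statement. Apply the criterion ``nonnegative, convergent in law, with convergent means'' to the mixture itself. Its $\mathbb{P}_{n,0}$-mean tends to $\Gamma_0(K)$, which is the mean of its limit.
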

\begin{proof}
Since $\delta_k^*$ is Bayes optimal relative to $\bm{d}_k^*$ in the finite sample experiment,
$$
\int W^A_n((\bm{d}_n^*,\delta_n^*), \bm{h}) d\Gamma_0(\bm{h}) = \int \sup_{\delta \in [0,1]} \mathbb{E}_{n,\bm{h}} \left[W^A_n((\bm{d}_n^*,\delta), \bm{h})\big \vert \mathcal{F}_{\tau_n}^{\bm{q}_n} \right] d\Gamma_0(\bm{h}). 
$$
Similarly, 
$$
\int W^A((\bm{d}^*,\delta^*_{\bm{d}}), \bm{h}) d\Gamma_0(\bm{h}) = \int \sup_{\delta \in [0,1]} \mathbb{E}_{\bm{h}} \left[W^A((\bm{d}^*,\delta), \bm{h})\big \vert \mathcal{F}_{\tau}^{\bm{q}} \right] d\Gamma_0(\bm{h}). 
$$
With the above in place, the proof proceeds in the following steps. 
\smallskip
\paragraph{\it{Step 1 (Truncating the utility function)}}
Using (\ref{eq:welfare_truncation_2}) and (\ref{eq:welfare_truncation_3}), and recalling the definitions of $W_{n,M}^A(\cdot), W_M^A(\cdot)$ from the proof of Theorem \ref{thm:asymptotic_upper_bound}, we observe that
$$
    \begin{aligned}
       & \lim_{n \to \infty} \int \sup_{\delta \in [0,1]} \mathbb{E}_{n,\bm{h}} \left[W^A_n((\bm{d}_n^*,\delta), \bm{h})\big \vert \mathcal{F}_{\tau_n}^{\bm{q}_n} \right] d\Gamma_0(\bm{h}) \\
       &= \lim_{n \to \infty} \int \sup_{\delta \in [0,1]} \mathbb{E}_{n,\bm{h}} \left[W^A_{n, M}((\bm{d}_n^*,\delta), \bm{h})\big \vert \mathcal{F}_{\tau_n}^{\bm{q}_n} \right] d\Gamma_0(\bm{h}) + o(\eta),
    \end{aligned}
$$
and similarly,
$$
    \begin{aligned}
       &\int \sup_{\delta \in [0,1]} \mathbb{E}_{\bm{h}} \left[W^A_{M}((\bm{d},\delta), \bm{h})\big \vert \mathcal{F}_{\tau}^{\bm{q}} \right] d\Gamma_0(\bm{h}) \\
       &= \int \sup_{\delta \in [0,1]} \mathbb{E}_{\bm{h}} \left[W^A((\bm{d},\delta), \bm{h})\big \vert \mathcal{F}_{\tau}^{\bm{q}} \right] d\Gamma_0(\bm{h}) + o(\eta).
    \end{aligned}
$$
Because $\eta > 0$ is arbitrary, it is therefore enough to establish that
\begin{equation}\label{eq:Bayes_welfare_convg_1}
\begin{aligned}
& \lim_{n \to \infty} \int \sup_{\delta \in [0,1]} \mathbb{E}_{n,\bm{h}} \left[W^A_{n, M}((\bm{d}_n^*,\delta), \bm{h})\big \vert \mathcal{F}_{\tau_n}^{\bm{q}_n} \right] d\Gamma_0(\bm{h}) \\
&= \int \sup_{\delta \in [0,1]} \mathbb{E}_{\bm{h}} \left[W^A_{M}((\bm{d},\delta), \bm{h})\big \vert \mathcal{F}_{\tau}^{\bm{q}} \right] d\Gamma_0(\bm{h}).
\end{aligned}
\end{equation}

By Assumption \ref{asm:asm-on-mu},
$$
\begin{aligned}
& \int \sup_{\delta \in [0,1]} \mathbb{E}_{n,\bm{h}} \left[W^A_{n, M}((\bm{d}_n^*,\delta), \bm{h})\big \vert \mathcal{F}_{\tau_n}^{\bm{q}_n} \right] d\Gamma_0(\bm{h}) \\
& = \int \sup_{\delta \in [0,1]} u_M((\sqrt{n}\mu_n(\bm{h}),\delta;\alpha)) \, d\mathbb{P}_{n,\bm{h}} d\Gamma_0(\bm{h}) \\
& = \int \sup_{\delta \in [0,1]} u_M(\mu(\bm{h}),\delta;\alpha) \, d\mathbb{P}_{n,\bm{h}} d\Gamma_0(\bm{h}) + \epsilon_n \int \vert \bm{h}\vert^2 d\Gamma_0(\bm{h}).
\end{aligned} 
$$
Since $\epsilon_n \to 0$ and, for Gaussian $\Gamma_0$, we have $\int \vert \bm{h}\vert^2 d\Gamma_0(\bm{h}) < \infty$, it is in turn sufficient for proving (\ref{eq:Bayes_welfare_convg_1}) to show that
\begin{equation}\label{eq:Bayes_welfare_convg_2}
\begin{aligned}
\bar{W}_{n,M}^A := \lim_{n\to \infty} \int \sup_{\delta \in [0,1]} u_M(\mu(\bm{h}),\delta;\alpha) \, d\mathbb{P}_{n,\bm{h}} d\Gamma_0(\bm{h}) \\
= \int \sup_{\delta \in [0,1]} u_M(\mu(\bm{h}),\delta;\alpha) \, d\mathbb{P}_{\bm{h}} d\Gamma_0(\bm{h}) := \bar{W}_M^A,
\end{aligned}
\end{equation}
where $\mathbb{P}_{n,\bm{h}}$ and $\mathbb{P}_{\bm{h}}$ are understood as the restrictions of these probability measures to the $\sigma$-algebras $\mathcal{F}_{\tau_n}^{\bm{q}_n}$ and $\mathcal{F}_\tau^{\bm{q}}$, respectively.
\smallskip
\paragraph{\it{Step 2 (Proving \ref{eq:Bayes_welfare_convg_2} for finitely supported $\Gamma_0$)}}

We start by approximating $\Gamma_0$ with a prior $\tilde{\Gamma}_0$ whose support is finite, and the set of support points is denoted by $H$. 

For finite $n$, the measure $\mathbb{P}_{n,\bm{h}}$ is not necessarily absolutely continuous with respect to $\mathbb{P}_{n,0}$. But by the Lebesgue decomposition theorem, we can uniquely express $\mathbb{P}_{n,\bm{h}}$ as the sum of an absolutely continuous component and a singular component:
\begin{equation}
    d\mathbb{P}_{n,\bm{h}} = \varphi(\bm{h};\tau_n) d\mathbb{P}_{n,0} + d\sigma_{n,h},
\end{equation}
where $\sigma_{n,h}$ is a measure that is singular with respect to $\mathbb{P}_{n,0}$.

For any strategy $\delta_n$, we can decompose:
\begin{equation}\label{eq:pf:Bayes_conv:0}
\begin{aligned}
    & \int u_M(\mu(\bm{h}),\delta_n;\alpha)  d\mathbb{P}_{n,\bm{h}} d\tilde{\Gamma}_0(\bm{h})  \\
    & = \int \sum_{\bm{h} \in H} u_M(\mu(\bm{h}),\delta_n;\alpha)  \varphi_n(\bm{h};\tau_n) \tilde{\Gamma}_0(\bm{h}) d\mathbb{P}_{n,0} + \int \sum_{h \in H} u_M(\mu(\bm{h}),\delta_n;\alpha) \tilde{\Gamma}_0(\bm{h}) d\sigma_{n,\bm{h}}. 
\end{aligned}
\end{equation}

We first address the second term in (\ref{eq:pf:Bayes_conv:0}). The total mass of the singular part is given by $\|\sigma_{n,\bm{h}}\| = 1 - \mathbb{E}_{n,0}[\varphi_n(\bm{h};\tau_n)]$. Weak convergence,  $\varphi_{n}(\bm{h};\tau_{n}) \xrightarrow [\mathbb{P}_{n,0}]{d} \varphi(\bm{h};\tau)$, together with uniform integrability of $\varphi_n(\bm{h};\tau_n)$ with respect to $\mathbb{P}_{n,0}$ implies  $\limsup_{n \to \infty} \mathbb{E}_{n,0}[\varphi_n(\bm{h};\tau_n)] = \mathbb{E}_{0}[\varphi(\bm{h};\tau)] = 1$. Consequently, $\limsup_{n \to \infty} \|\sigma_{n,\bm{h}}\| = 0$. Since $u_M(\cdot)$ is bounded by $M$, the welfare contribution from the singular part is bounded by $M \sum_{\bm{h}} \Gamma_0(\bm{h}) \|\sigma_{n,\bm{h}}\| \to 0$. Thus, the second term in (\ref{eq:pf:Bayes_conv:0}) is negligible, and we have
\begin{equation} \label{eq:pf:Bayes_conv:1}
\bar{W}_{n,M}^A = \int \sup_{\delta \in [0,1]} \sum_{\bm{h} \in H} \left[ u_M(\mu(\bm{h}),\delta;\alpha)  \varphi_n(\bm{h};\tau_n) \tilde{\Gamma}_0(\bm{h}) \right]d\mathbb{P}_{n,0} + o(1).
\end{equation}

Define $\phi: \mathbb{R}^{|H|} \to \mathbb{R}$ as:
\begin{equation}
    \phi(\{\xi_{\bm{h}}\}_{\bm{h} \in H}) = \sup_{\delta \in [0,1]} \left( \sum_{\bm{h} \in H} u_M(\mu(\bm{h}),\delta;\alpha) \xi_{\bm{h}} \tilde{\Gamma}_0(\bm{h})  \right).
\end{equation}
Observe that $\bar{W}_M^A = \mathbb{E}_0[\phi(\varphi(\bm{h},\tau)]$. Furthermore, in view of (\ref{eq:pf:Bayes_conv:1}), $\bar{W}_{n,M}^A = \mathbb{E}_{n,0}[\phi(\varphi_n(\bm{h},\tau_n)] + o(1)$. Hence, the claim follows if we show 
\begin{equation} \label{eq:pf:Bayes_conv:2}
\lim_{n \to \infty} \mathbb{E}_{n,0}[\phi(\varphi_n(\bm{h},\tau_n)] = \mathbb{E}_0[\phi(\varphi(\bm{h},\tau)].
\end{equation}

The function $\phi(\cdot)$ is the supremum of a family of linear functions, making it convex and continuous. By the continuous mapping theorem, weak convergence of the likelihood ratios $\varphi_{n}(\bm{h};\tau_{n}) \xrightarrow [\mathbb{P}_{n,0}]{d} \varphi(\bm{h};\tau)$ implies
\begin{equation}
    \phi(\varphi_n(\bm{h},\tau_n) \xrightarrow [\mathbb{P}_{n,0}]{d} \phi(\varphi(\bm{h},\tau).
\end{equation}

To establish (\ref{eq:pf:Bayes_conv:2}), we require the sequence $\phi(\varphi_n(\bm{h},\tau_n)$ to be uniformly integrable. Since the loss is bounded by $M$, we have the inequality:
\begin{equation}
    |\phi(\varphi_n(\bm{h},\tau_n)| \leq M \sum_{\bm{h} \in H} \tilde{\Gamma}_0(\bm{h}) \varphi_{n}(\bm{h};\tau_n).
\end{equation}
The right-hand side is a linear combination of likelihood ratios, $\varphi_{n}(\bm{h};\tau_n)$, which are uniformly integrable with respect to $\mathbb{P}_{n,0}$. Since $\phi(\varphi_n(\bm{h},\tau_n)$ is dominated by a sum of uniformly integrable variables, it is itself uniformly integrable.
\smallskip
\paragraph{\it{Step 3 (Proving \ref{eq:Bayes_welfare_convg_2} for Gaussian $\Gamma_0$)}}

We now argue that approximating $\Gamma_0$ with a finitely supported $\tilde{\Gamma}_0$ leads to negligible difference in welfare. Indeed, it is always possible to choose $\tilde{\Gamma}_0$ such that $\lVert \Gamma_0 -\tilde{\Gamma}_0 \rVert_{\textrm{TV}} \le \eta$ for any $\eta > 0$. Since $u_M(\cdot)$ is bounded by $M < \infty$, this implies 
$$
\left \vert \int \sup_{\delta \in [0,1]} u_M(\mu(\bm{h}),\delta;\alpha)  d\mathbb{P}_{n,\bm{h}} d\Gamma_0(\bm{h}) - \int \sup_{\delta \in [0,1]} u_M(\mu(\bm{h}),\delta;\alpha)  d\mathbb{P}_{n,\bm{h}} d\tilde{\Gamma}_0(\bm{h})\right \vert < M\eta,
$$
and 
$$
\left \vert \int \sup_{\delta \in [0,1]} u_M(\mu(\bm{h}),\delta;\alpha)  d\mathbb{P}_{\bm{h}} d\Gamma_0(\bm{h}) - \int \sup_{\delta \in [0,1]} u_M(\mu(\bm{h}),\delta;\alpha)  d\mathbb{P}_{\bm{h}} d\tilde{\Gamma}_0(\bm{h})\right \vert < M\eta.
$$
As $\eta$ can be chosen arbitrarily small, this proves the desired claim. 
\end{proof}

\subsection{Equivalence of limit experiments}

In this section, we establish a formal connection between the Gaussian diffusion limit experiment from Section \ref{subsec:limit_approximations} and the experiment described in Section \ref{Sec:Incremental_learning}. 

\begin{lem}\label{lem:equivalence_of_limit_experiments}
  Consider the Gaussian diffusion limit experiment described in Section \ref{subsec:limit_approximations}. Assume that the prior $\Gamma_0$ on $\bm{h}$ is Gaussian and decomposes into a prior $p_0$ on $(\dot{\mu}_1^\intercal h_1, \dot{\mu}_0^\intercal h_0) \equiv (\mu_1, \mu_0)$ that satisfies Assumption \ref{asm-1}(iv), together with an independent prior $\tilde{\Gamma}_0$ on the remaining components of $\bm{h}$. Under these conditions, the optimal sampling rules and stopping times in the two experiments are identical once we identify $\sigma_a^2$ with $\dot{\mu}_a^\intercal I_a^{-1}\dot{\mu}_a$. Moreover, Bob's welfare resulting from these strategies is also the same in both experiments.
\end{lem}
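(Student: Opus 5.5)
The plan is to show that, after a reduction by sufficiency, the Gaussian diffusion limit experiment is literally the incremental-learning experiment of Section \ref{Sec:Incremental_learning}. Fix an allocation process $\bm{q}$ and define the scalar processes $z_a(\gamma) := \dot{\mu}_a^\intercal I_a^{-1/2} Z_a(\gamma)$. From (\ref{eq:limit_experiment_process}),
\[
z_a(\gamma) = (\dot{\mu}_a^\intercal h_a)\gamma + \dot{\mu}_a^\intercal I_a^{-1/2}\bar{W}_a(\gamma),
\]
and the noise term is a scalar Brownian motion with variance $\dot{\mu}_a^\intercal I_a^{-1}\dot{\mu}_a\,\gamma = \sigma_a^2\gamma$. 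The two noises are independent across $a$. So $z_a$ has exactly the law of the signal process of Section \ref{subsec:Incremental learning}, with $\mu_a = \dot{\mu}_a^\intercal h_a$.

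Next I will show that the $z_a$ paths are sufficient for $(\mu_1,\mu_0)$ under the product prior $\Gamma_0 = p_0\times\tilde{\Gamma}_0$. For each $a$, decompose $I_a^{1/2}h_a$ into its component along $v_a := I_a^{-1/2}\dot{\mu}_a$ and the orthogonal complement. Equivalently, reparametrize $h_a$ linearly so that $\mu_a$ is one coordinate and the remaining coordinates are the nuisance components, which are a priori independent of $\mu_a$.

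Project $Z_a$ onto $v_a/|v_a|$ and onto an orthonormal basis of $v_a^\perp$. The orthogonal projections are Brownian motions with drifts that depend only on the nuisance coordinates. Their noises are independent of the noise in the $v_a$ direction, because these are orthogonal projections of a standard $d$-dimensional Brownian motion.

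I will then write the likelihood ratio (\ref{eq:LR_process_limit_experiment}) at time $t$. It factors into a term depending on $(\mu_a, z_a(q_a(\cdot)))$ times a term depending only on the nuisance coordinates and the orthogonal projections. This step needs care: the allocation $q_a(t)$ enters both factors, so I must condition on the allocation path rather than treat it as fixed. Combined with prior independence, this factorization gives that the posterior of $(\mu_1,\mu_0)$ given $\mathcal{F}_t^{\bm{q}}$ equals its posterior given the $z$-paths together with $U$ and the allocation.

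The main obstacle is that Bob's strategies in the limit experiment may depend on the orthogonal projections, which are extra information not available in Section \ref{Sec:Incremental_learning}. I will handle this by treating those projections as additional exogenous randomization. Conditional on the $z$-paths, their law depends only on the nuisance parameters, and they are independent of $\mu$ and of the $z$-noise. Because the payoffs $W^A$ and $W^B$ depend on $\bm{h}$ only through $\mu(\bm{h})$, this dependence does not affect the payoffs.

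Concretely, I will argue as follows. Any strategy in the limit experiment induces, after integrating the orthogonal projections against their marginal law under $\tilde{\Gamma}_0$, a randomized strategy in the Section \ref{Sec:Incremental_learning} experiment. That induced strategy has the same joint law of $(\mu,\tau,m_\tau)$, where the uniform $U$ supplies the randomization. Conversely, every Section \ref{Sec:Incremental_learning} strategy is feasible in the limit experiment.

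It follows that the two feasible sets of joint laws of $(\mu,\tau,\delta^*)$ coincide. Hence the values of Bob's constrained problem (\ref{eq:Bobs_primal_limit}) and of (\ref{eq:constrainedtwotreatments}) agree, and so do Alice's welfare levels. Theorem \ref{thm:Sampling_strategy} and the stopping characterization preceding Theorem \ref{thm:optimal-stopping-rule} then transfer directly: the Neyman allocation with $\sigma_a^2 = \dot{\mu}_a^\intercal I_a^{-1}\dot{\mu}_a$, together with the boundaries $b^\pm$, is optimal in both experiments and delivers the same welfare for Bob.
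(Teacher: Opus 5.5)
Your argument is correct, but it follows a different route from the paper. The paper works inside the $d$-dimensional limit experiment in two stages. First it reuses the argument behind Theorem 5 of Liang, Mu and Syrgkanis to show that the allocation minimizing the posterior variance of $\mu(\bm{h})$ is dynamically optimal, and computes it to be the Neyman allocation. Then, with $\bm{q}^*$ fixed, it splits $Z_a$ into $z_a$ and an independent orthogonal part and asserts an immersion identity, $\mathbb{E}[\mu(\bm{h})\vert\bar{\mathcal{F}}_t^{\bm{q}^*}]=\mathbb{E}[\mu(\bm{h})\vert\mathcal{F}_t^{\bm{q}^*}]$, to restrict to stopping times adapted to the scalar filtration.

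You instead prove a global equivalence. The additive separation of the log-likelihood ratio in $\mu_a$ and the nuisance coordinates derives that identity for every allocation, not just $\bm{q}^*$. Absorbing the entire orthogonal paths into the randomization $U$ then shows that the two experiments generate exactly the same set of joint laws of $(\mu,\tau,\delta^*)$; this works because those paths are independent of $(\mu,z_1,z_0)$ under the product prior.

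What your route buys:
\begin{itemize}
\item allocation and stopping are handled at once;
\item the constrained primal values and Alice's welfare coincide immediately, with no need to rerun the Liang--Mu--Syrgkanis argument or the duality of Theorem \ref{thm:Sampling_strategy} in the larger experiment;
\item Theorems \ref{thm:Sampling_strategy} and \ref{thm:optimal-stopping-rule} then transfer verbatim.
\end{itemize}
The paper's route is shorter because it leans on an off-the-shelf multi-signal result.

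Your explicit choice of nuisance coordinates, namely $I_a^{1/2}h_a$ with its $v_a$-component removed, is exactly what makes the drifts of the orthogonal projections free of $\mu_a$. The paper's immersion claim tacitly relies on the same reading of ``remaining components of $\bm{h}$.'' Under any other linear complement, the orthogonal projections would carry information about $\mu_a$ and the reduction would break down.

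Finally, you do not need to ``condition on the allocation path.'' Since $q_a(t)$ is an $\bm{h}$-free, $\mathcal{F}_t^{\bm{q}}$-measurable statistic, the separation of the log-likelihood is all that is required.
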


\begin{proof}   
    Note that Alice and Bob's payoffs depend on $\bm{h}$ only through the linear functional $\mu(\bm{h}) := \dot{\mu}_1^\intercal h_1 - \dot{\mu}_0^\intercal h_0$. Given the Gaussian prior, we can invoke an argument analogous to that in the proof of \citet[Theorem 5]{liang-mu-syrgkanis-ecta-2022} to establish that the sampling strategy which minimizes the posterior variance of $\mu(\bm{h}) := \dot{\mu}_1^\intercal h_1 - \dot{\mu}_0^\intercal h_0$ uniformly at all times is dynamically optimal. A straightforward calculation of the variance-minimizing allocation rule under our prior assumptions then reveals that it coincides with the Neyman allocation described in the statement of this lemma.

   We now prove that the optimal stopping times coincide as well. By standard properties of Gaussian processes, for each $a$ we can decompose the signal process $Z_a(\cdot)$ into a scalar component $z_a(\cdot) := \dot{\mu}_a^\intercal I_a^{-1}Z_a(\cdot)$ and an orthogonal $(d-1)$-dimensional component $\tilde{Z}_a(\cdot)$ that is independent of $z_a(\cdot)$. Let $\bar{\mathcal{G}}_{\gamma_1, \gamma_0}$ denote the natural filtration generated by the sample paths of $Z_a(\cdot)$ on the interval $[0,\gamma_a]$, augmented with an exogenous randomization $U$, and define $\bar{\mathcal{F}}_t^{\bm{q}^*} = \bar{\mathcal{G}}_{q_1^*(t), q_0^*(t)}$, where $\bm{q}^*$ is the Neyman allocation. Likewise, let $\mathcal{F}_t^{\bm{q}^*} := \mathcal{G}_{q_1^*(t), q_0^*(t)}$, where $\mathcal{G}_{\gamma_1, \gamma_0}$ is the natural filtration generated by the sample paths of $z_a(\cdot)$ on $[0,\gamma_a]$, together with the same exogenous randomization $U$. The payoffs of Alice and Bob depend on $\bm{h}$ only via the scalar functional $\mu(\bm{h}) := \dot{\mu}_1^\intercal h_1 - \dot{\mu}_0^\intercal h_0$, and by the immersion property of these filtrations we have $\mathbb{E}[\mu(\bm{h})|\bar{\mathcal{F}}_t^{\bm{q}^*}] = \mathbb{E}[\mu(\bm{h})|\mathcal{F}_t^{\bm{q}^*}]$ almost surely for all $t$. Therefore, by standard arguments from optimal stopping theory, it is without loss of generality to restrict attention to stopping times that are $\mathcal{F}_t^{\bm{q}^*}$-adapted.

   Because the optimal sampling and stopping rules are identical, and the payoff depends on $\bm{h}$ solely through $\mu(\bm{h}) := \dot{\mu}_1^\intercal h_1 - \dot{\mu}_0^\intercal h_0$, it follows directly that Bob's welfare is the same in both experiments.
\end{proof}

\section{Numerical Methods and Additional Numerical Exercises}\label{sec:numerical_methods}

\subsection{Computing the optimal boundary}
To compute $b^+(t), b^-(t)$, we employ a time-change argument as in Lemma O.5 of \cite{fudenberg-strack-strzalecki2018}.  Let 
$$
    \psi(t) := \langle m \rangle(t) \coloneq\int_0^t \upsilon_{t^\prime}dt = \frac{\varrho_0t}{\sigma^2\varrho_0^{-1}+t}
$$
denote the quadratic variation of $m_t$, and set $\varsigma(\rho) = \psi ^{-1}(\rho) = \frac{\rho\sigma^2\varrho_0^{-1}}{\varrho_0 - \rho}$. Then, by the Dambis-Dubins-Schwartz theorem, $W(\rho) \coloneq m_{\varsigma(\rho)} - m_0$ is a standard Brownian motion for $\rho \in [0, \varrho_0)$. As a result, setting $\alpha = \frac{1}{2}$, the value function can be written as 
\[
V(t,m) = \sup_{\rho \geq \psi(t)} \mathbb{E}\left[B\mathds{1}\{W(\rho) \geq0 \} + \frac{1}{2}\lambda|W(\rho)| - c(\varsigma(\rho) - t)|W(\psi(t)) = m \right],
\]
where $\rho$ is a stopping strategy adapted to the filtration generated by $W(\cdot)$.

By introducing a time change from $t$ to $\rho$, we can express the state variables as $(\rho, m)$ and subsequently simplify the problem. We first discretize the state space $[0, \varrho_0) \times [-\bar{m}, \bar{m}]$ using a grid with step sizes $\Delta_{\rho}$ and $\Delta_m$, respectively. In most practical settings, $\bar{m}$ can be chosen sufficiently large so that it does not interfere with computing the optimal boundaries at time $0$. Next, we approximate the Brownian motion using a binomial scheme: specifically, $W(\rho + \Delta_\rho)$ takes one of the two values $W(\rho) - \Delta_m$ or $W(\rho) + \Delta_m$, each with probability $1/2$. The approximate value function $V(\rho, m)$ is then obtained by backward induction starting from the terminal condition $V(\varrho_0, \cdot) = 0$, via the following recursion:
\begin{align*}
V(\rho, m) = \max&\left\{B\mathds{1}\{m \geq 0\} + \frac{1}{2}\lambda |m|,\right. \\&\ \  \left. \frac{1}{2}V(\rho + \Delta_\rho, m + \Delta_m) + \frac{1}{2}V(\rho + \Delta_\rho, m - \Delta_m) - \varsigma'(\rho)\Delta_\rho\right\}.
\end{align*}
\subsection{Determining the Lagrange multiplier}
Given our parameters and a welfare threshold $V_0$, we apply a binary search to identify the Lagrange multiplier $\lambda$ that yields social welfare equal to $V_0$. The procedure is as follows: first, for a pair of candidate values $\lambda_l < \lambda_h$, we compute the stopping boundary for $\lambda_c = (\lambda_l + \lambda_h)/2$  in terms of the time-change variable $\rho$. Next, we simulate Brownian motion 1 million times to obtain an empirical distribution of stopping values. Using the inverse relation $t = \varsigma(\rho)$, we then recover the implied distribution of stopping times. From this distribution, we calculate $\mathbb{E}[S_\alpha(m_\tau)]$ and repeat the search with updated bound $\lambda_h = \lambda_c$ if this welfare exceeds $V_0$ and $\lambda_l = \lambda_c$ if it is less than $V_0$.

\subsection{Comparison with RCT}
We want to find the expected duration of the stopping rule that achieves the same welfare as an RCT that is always terminated at $t=1$. Let $m_0$ denote the prior mean of $\mu_1 - \mu_0$, and define $\nu^2 := \varrho_0/(1 + \sigma^2\varrho_0^{-1})$. Some straightforward calculations show that the distribution of the posterior mean at the end of the experiment is given by $m_1 \sim N\left(m_0, \nu^2\right)$. Consequently,
\[
V^*_{0} = \mathbb{E}[S_\alpha(m_1)] = m_0\Phi\left(\frac{m_0}{\nu}\right) + \nu\psi\left(-\frac{m_0}{\nu}\right) - (1-\alpha)m_0.
\]
If $m_0 = 0$, the welfare constraint is the same across all choices of $\alpha \in [0,1]$. As a result, $V_0^*$ is independent of $\alpha$, and without loss of generality, we can calibrate $\lambda$ to achieve $\mathbb{E}[S_\alpha(m_\tau)] = V^*_0$ under $\alpha = 1$. The resulting optimal stopping time is also independent of $\alpha$ due to Lemma \ref{lem:alpha_equivalent_bounds}.

\subsection{Additional numerical examples}
\label{sec:comparativeplots}
In Figure \ref{fig:bounds_nu0}, we illustrate how the boundaries vary as the prior standard deviation changes, using the baseline value $\nu_0 = 3.12$. In this setting, $V_0$ is computed separately for each choice of $\nu_0$.

Figure \ref{fig:compare_V} presents the distribution of stopping values for different choices of $V_0$. Recall that, under the parameters specified in Section \ref{sec:Calibration}, the expected welfare from an RCT is $V^*_0 \approx 1.1853$. We display distributions of the posterior means for several multiples of this benchmark. The plots show that, as the welfare constraint becomes more stringent, the stopping boundaries expand and substantially fewer experiments terminate at a posterior mean of 0.

\begin{figure}
    \centering
    \includegraphics[width=0.6\linewidth]{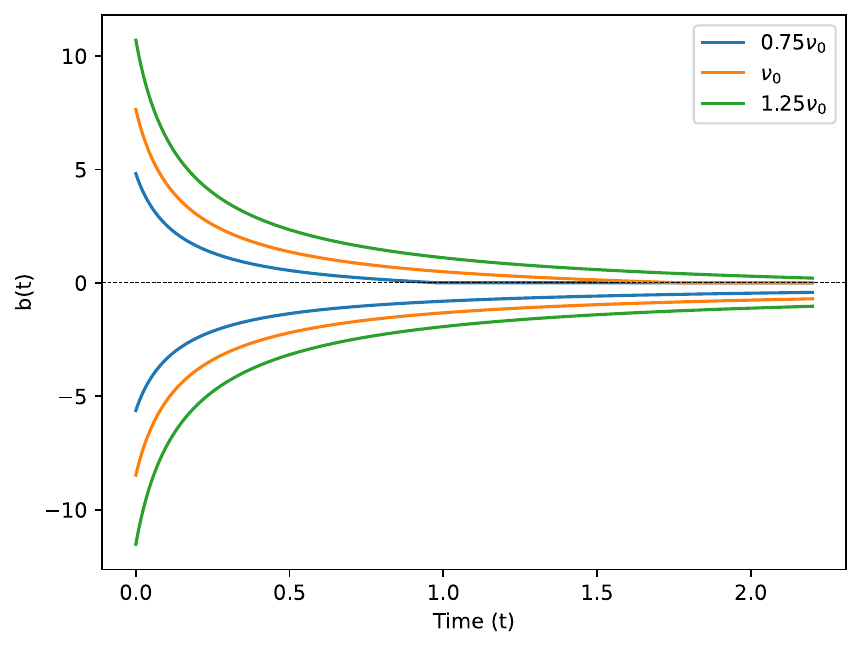}
    \caption{}
    \label{fig:bounds_nu0}
\end{figure}

\begin{figure}[htbp]
    \centering
    \begin{subfigure}[b]{0.45\textwidth}
        \centering
        \includegraphics[width=\textwidth]{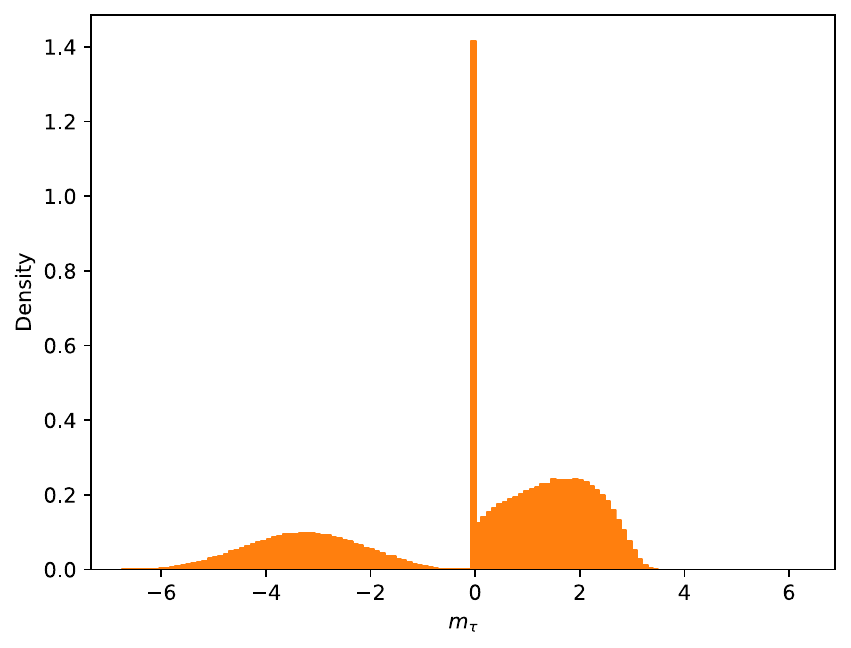}
        \caption{$0.75V^*_0$}
        \label{fig:hist_075V0}
    \end{subfigure}
    \begin{subfigure}[b]{0.45\textwidth}
        \centering
        \includegraphics[width=\textwidth]{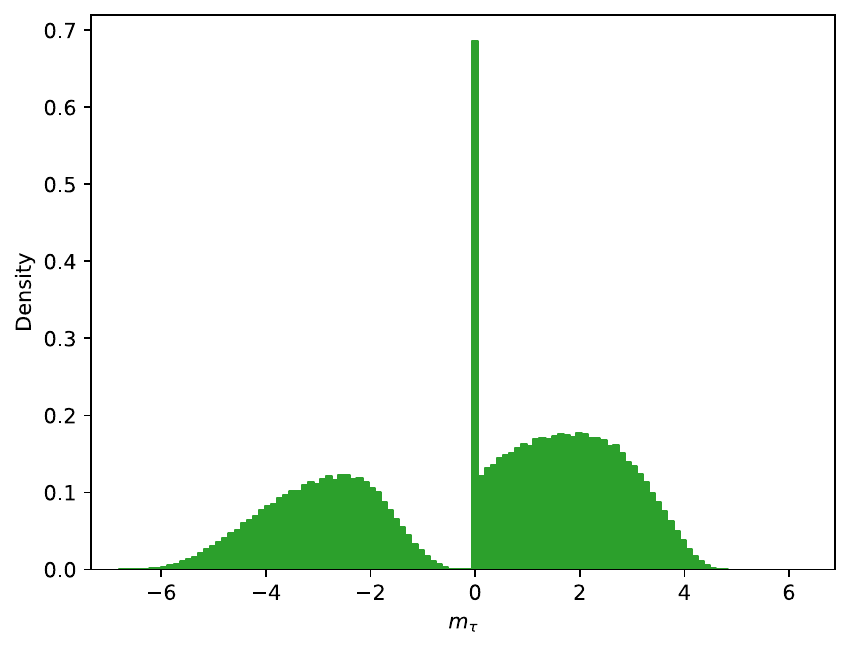}
        \caption{$0.9V^*_0$}
        \label{fig:hist_09V0}
    \end{subfigure}
    
    \vspace{0.5em}
    
    \begin{subfigure}[b]{0.45\textwidth}
        \centering
        \includegraphics[width=\textwidth]{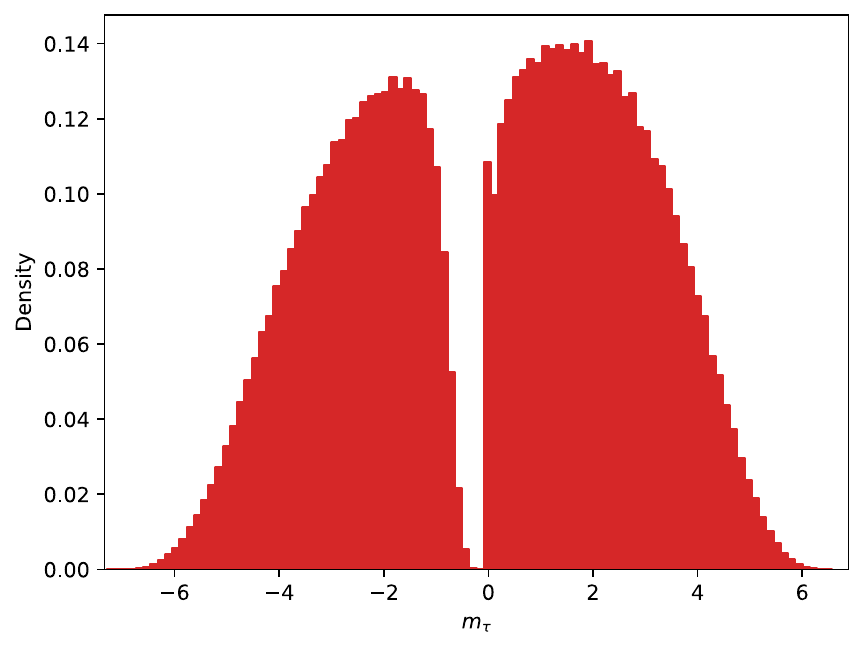}
        \caption{$V^*_0$}
        \label{fig:hist_1V0}
    \end{subfigure}
    \begin{subfigure}[b]{0.45\textwidth}
        \centering
        \includegraphics[width=\textwidth]{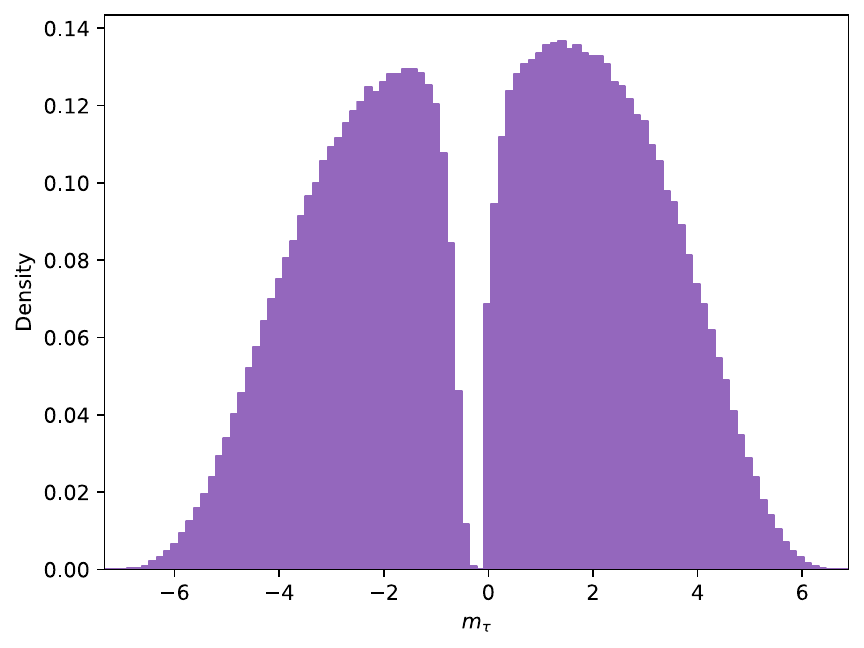}
        \caption{$1.01V^*_0$}
        \label{fig:hist_101V0}
    \end{subfigure}
    
    \caption{Distribution of Posterior Mean $m_\tau$ at Various Values of $V_0$.}
    \label{fig:compare_V}
\end{figure}

\subsection{Alternative calibrations for $B_n$}

In this section, we reproduce the analysis from Section \ref{subsec:Results}, but under alternative calibrations for $B_n$. 

\subsubsection{Approval-based benefit} In the first calibration, we set $B_n = \$802$ million, so that all profits from drug approval are interpreted as purely approval-based benefits. Although this is almost certainly an unrealistic assumption, it is still informative to see how the results change under this setup. With this assumption in place, the limit experiment produces a cost-benefit ratio of $c/B \approx 0.0153$. 

The implied stopping boundaries are shown in Figure \ref{fig:stopboundaries_calibration_1}. Relative to our baseline calibration, the acceptance boundary now declines to 0 much more rapidly. This is intuitive: because the effective cost of experimentation is lower, the pharmaceutical firm has a stronger incentive to keep experimenting when $m_t < 0$, but a weaker incentive to do so once $m_t > 0$. Figures \ref{fig:stoppingdist_calibration_1} and \ref{fig:postmeanstop_calibration_1} display the distributions of $\tau$ and $m_\tau$. Under the present calibration, we obtain $\mathbb{E}[\tau] > 1$, meaning that the average number of observations exceeds that of a conventional RCT. This increase is driven entirely by the additional experimentation when $m_t < 0$. The median stopping time is $0.606$, and 70\% of experiments terminate before $t = 1$, indicating that in most realizations, the number of samples used is below the mean. Figures \ref{fig:welfare_v_meantime_calibration_1} and \ref{fig:welfare_v_medtime_calibration_1} report the expected and median stopping times across a range of welfare levels. As in the baseline calibration, the relationship between $V_0$ and the expected sample size $\mathbb{E}[\tau]$ remains highly non-linear. 

\begin{figure}
    \centering
    \includegraphics[width=0.6\linewidth]{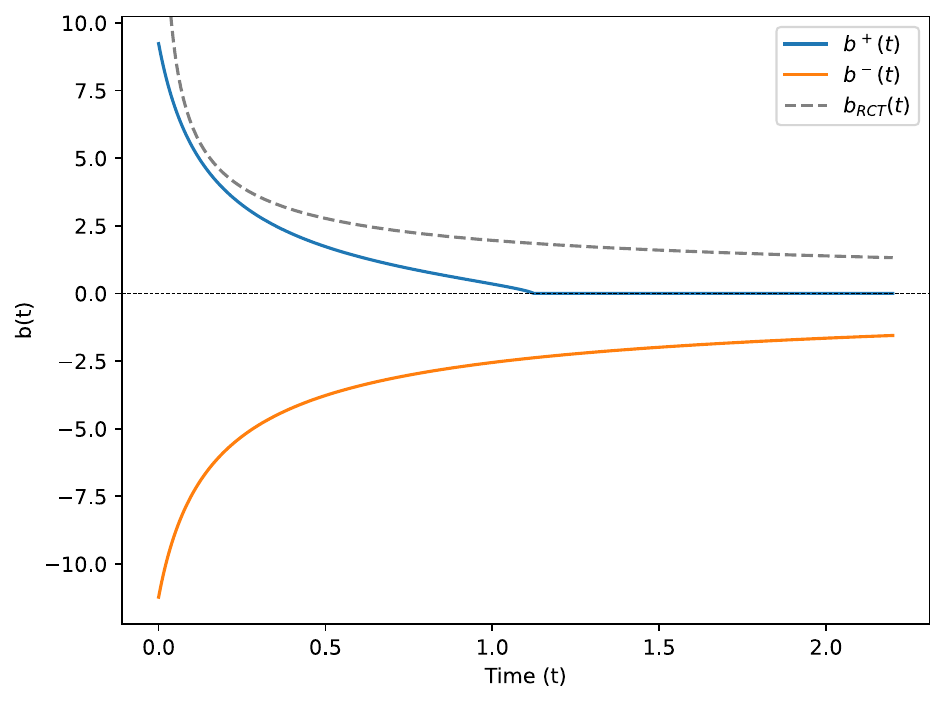}
    \caption{Stopping Boundaries $b^+(t), b^-(t)$ when $B_n =$ \$802 million.}
\label{fig:stopboundaries_calibration_1}
\end{figure}

\begin{figure}[h]
    \centering
    \begin{subfigure}[b]{0.45\linewidth}
        \centering
        \includegraphics[width=\linewidth]{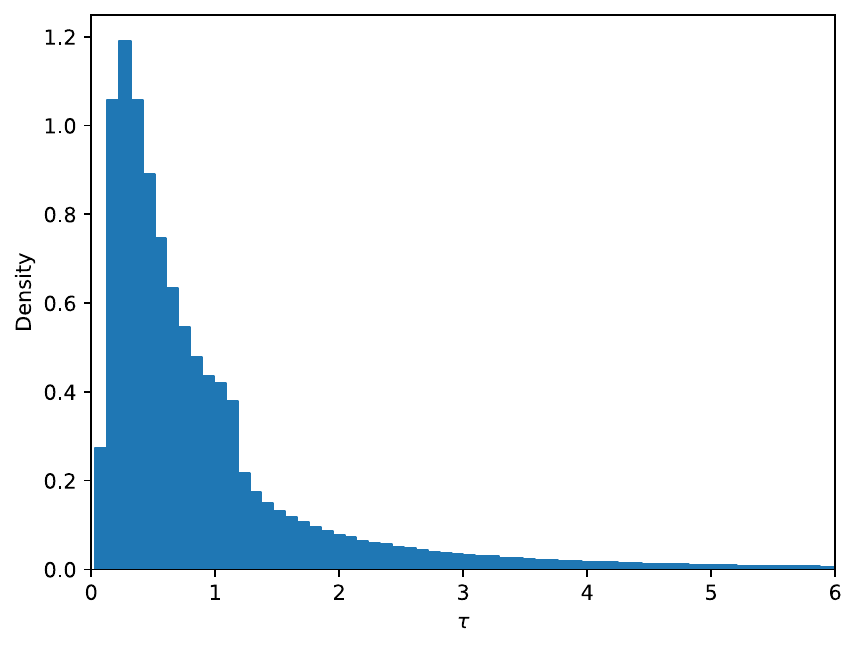}
        \caption{Distribution of $\tau$}
        \label{fig:stoppingdist_calibration_1}
    \end{subfigure}%
    \quad \quad
    \begin{subfigure}[b]{0.45\linewidth}
        \centering
        \includegraphics[width=\linewidth]{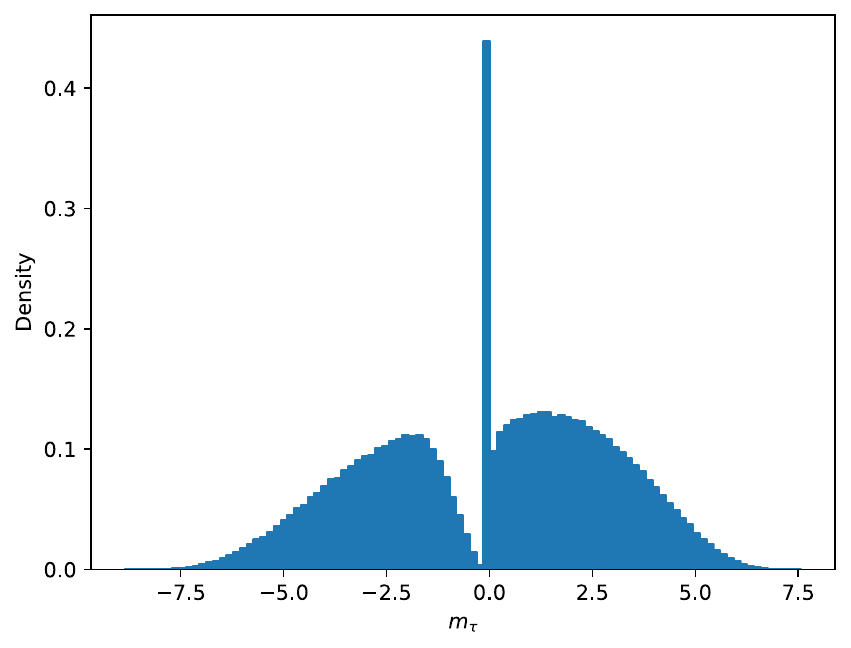}
        \caption{Distribution of $m_\tau$}
        \label{fig:postmeanstop_calibration_1}
    \end{subfigure}
    \caption{Distributions of $\tau$ and $m_\tau$ when $B_n =$ \$802 million.}
    \label{fig:stopping_and_postmean_calibration_1}
\end{figure}

\begin{figure}[h]
    \begin{subfigure}[b]{0.5\linewidth}
        \centering
        \includegraphics[width=\linewidth]{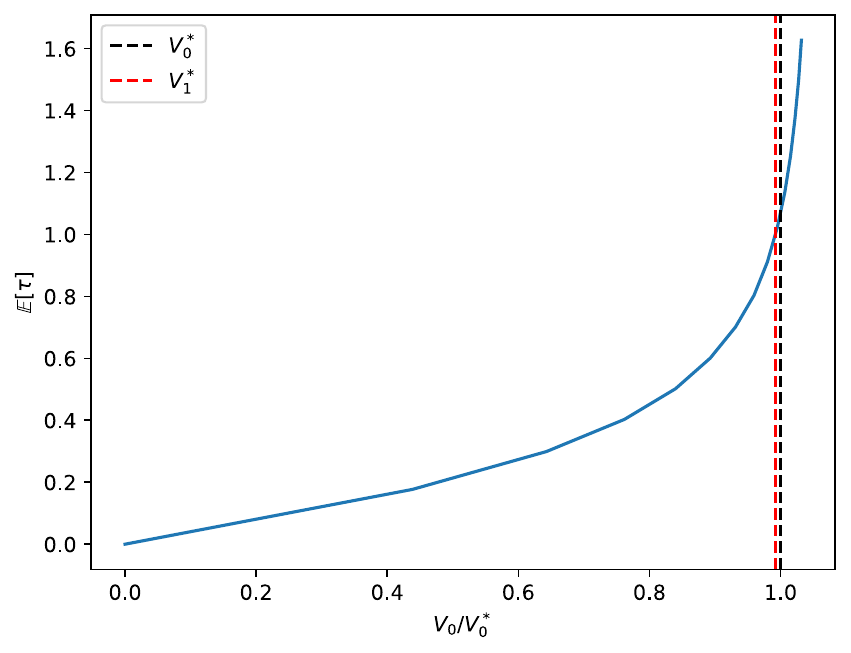}
        \caption{$\mathbb{E}[\tau^*]$ vs. $V_0/V^*_0$}
        \label{fig:welfare_v_meantime_calibration_1}
    \end{subfigure}%
    \begin{subfigure}[b]{0.5\linewidth}
        \centering
        \includegraphics[width=\linewidth]{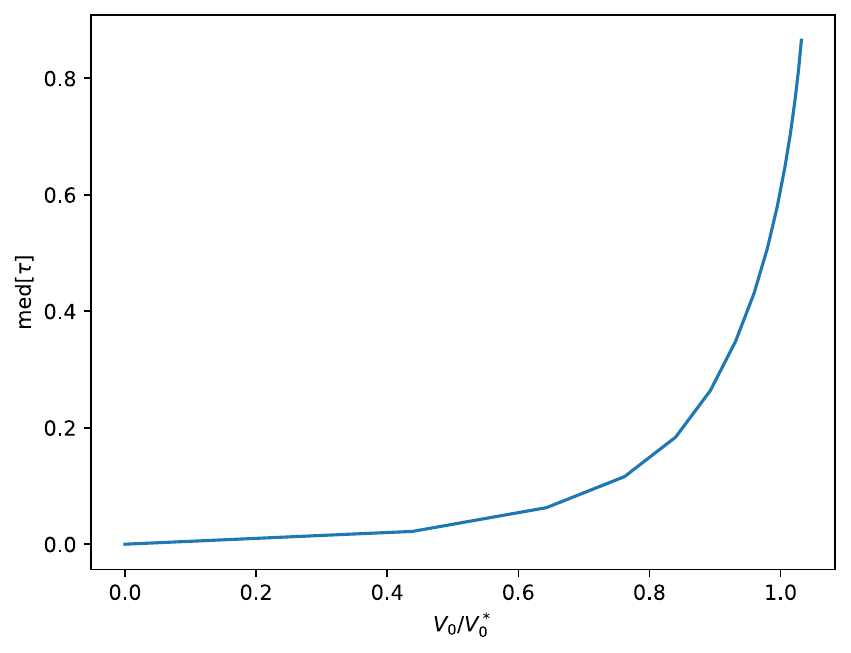}
        \caption{$\text{med}[\tau^*]$ vs. $V_0/V_0^*$}
        \label{fig:welfare_v_medtime_calibration_1}
    \end{subfigure}
    \caption{The effects of changing $V_0$ when $B_n =$ \$802 million.}
\end{figure}

\subsubsection{Welfare-based benefit} At the opposite extreme, we set Bob's approval benefit to $B = 0$, so that his incentives are fully aligned with Alice's. Under this specification, the implied asymptotic cost is $c \approx 7.980$. When $B = 0$, we return to the framework of \cite{fudenberg-strack-strzalecki2018}, but with the stopping boundary calibrated to deliver an average welfare of $V^*_0$. This boundary is plotted in Figure \ref{fig:stopboundaries_calibration_2}. As in \cite{fudenberg-strack-strzalecki2018}, the optimal stopping boundaries are symmetric and remain strictly separated from 0. Figures \ref{fig:stoppingdist_calibration_2} and \ref{fig:postmeanstop_calibration_2} display the distribution of stopping times and the posterior means at stopping, while Figures \ref{fig:welfare_v_meantime_calibration_2} and \ref{fig:welfare_v_medtime_calibration_2} plot the mean and median stopping times across a range of welfare constraint values. On average, the experiment concludes with 56\% fewer observations than the RCT, and 90\% of simulations terminate before $t = 1$. The experiment uses fewer samples than in the calibrations with $B > 0$. The reason is that the additional experimentation in the negative region more than offsets the earlier stopping in the positive region.

\begin{figure}
    \centering
    \includegraphics[width=0.6\linewidth]{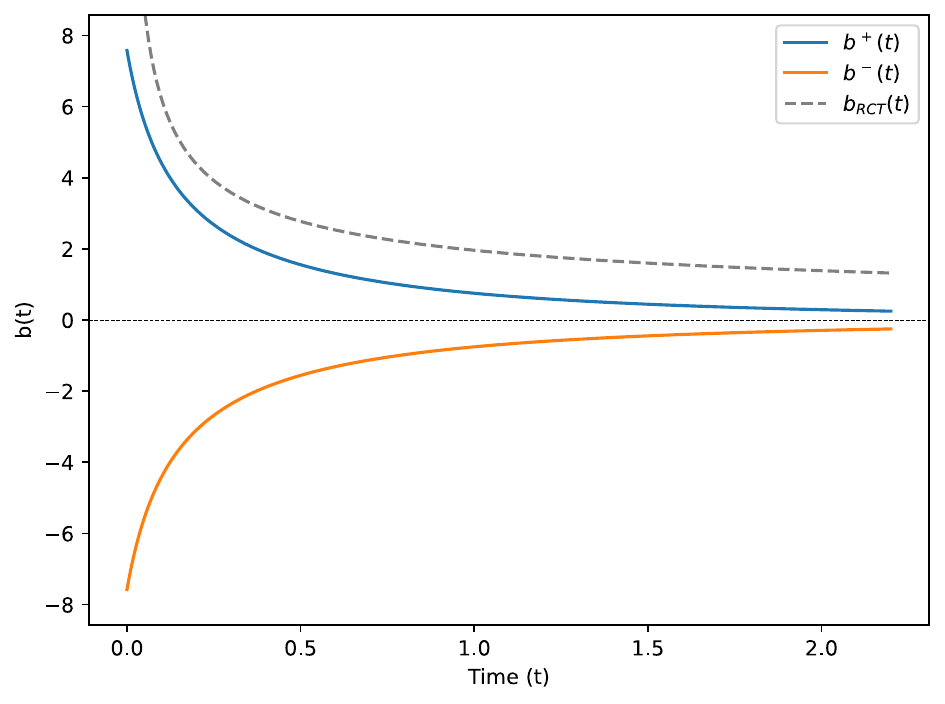}
    \caption{Stopping Boundaries $b^+(t), b^-(t)$ when $B_n =$ \$0.}
\label{fig:stopboundaries_calibration_2}
\end{figure}

\begin{figure}[h]
    \centering
    \begin{subfigure}[b]{0.45\linewidth}
        \centering
        \includegraphics[width=\linewidth]{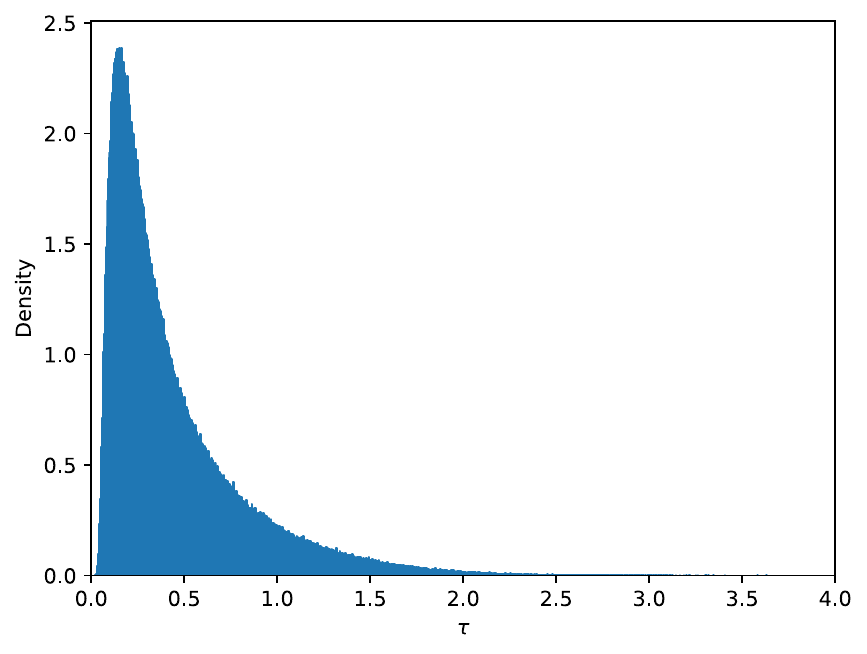}
        \caption{Distribution of $\tau$}
        \label{fig:stoppingdist_calibration_2}
    \end{subfigure}%
    \quad \quad
    \begin{subfigure}[b]{0.45\linewidth}
        \centering
        \includegraphics[width=\linewidth]{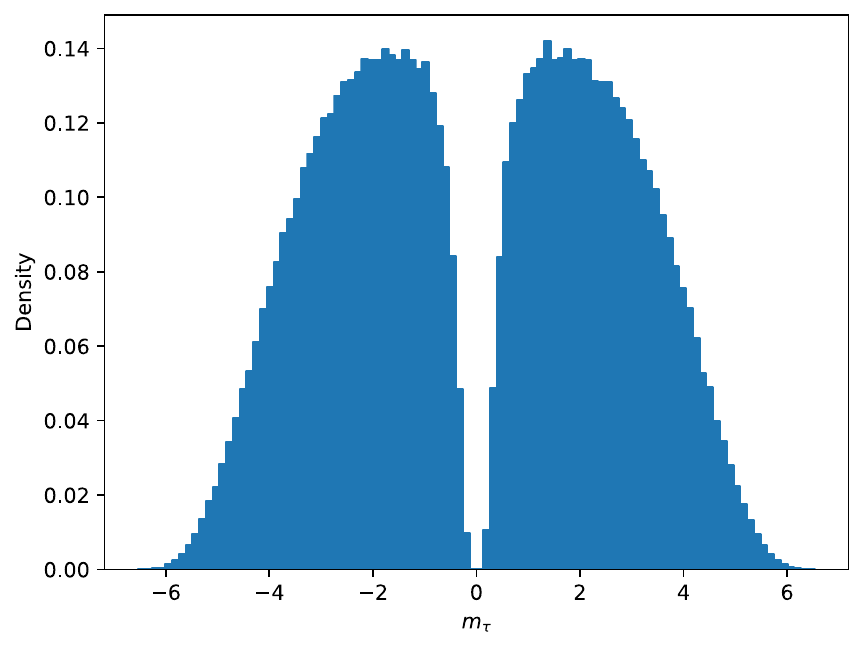}
        \caption{Distribution of $m_\tau$}
        \label{fig:postmeanstop_calibration_2}
    \end{subfigure}
    \caption{Distributions of $\tau$ and $m_\tau$ when $B_n =$ \$0.}
    \label{fig:stopping_and_postmean_calibration_2}
\end{figure}

\begin{figure}[h]
    \begin{subfigure}[b]{0.5\linewidth}
        \centering
        \includegraphics[width=\linewidth]{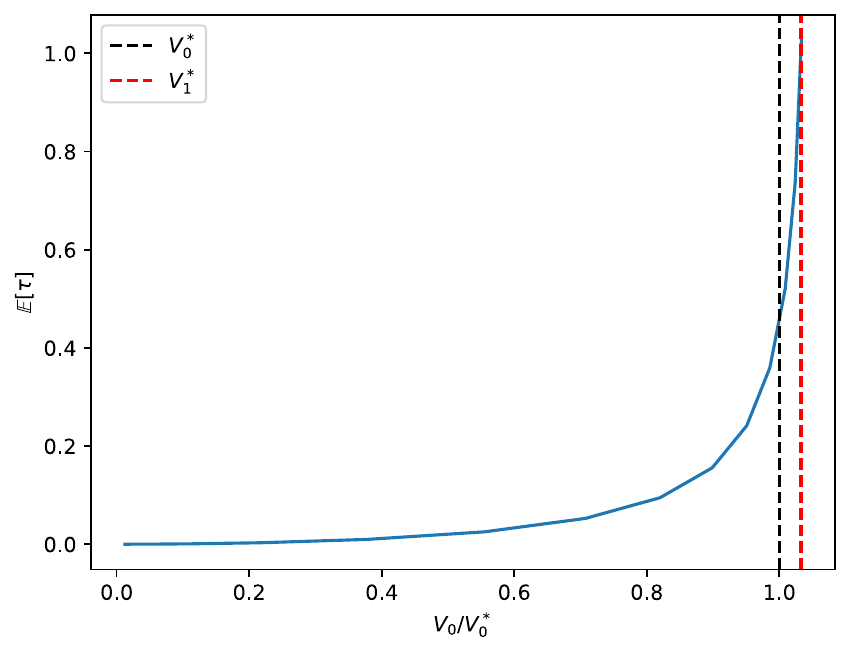}
        \caption{$\mathbb{E}[\tau^*]$ vs. $V_0/V^*_0$}
        \label{fig:welfare_v_meantime_calibration_2}
    \end{subfigure}%
    \begin{subfigure}[b]{0.5\linewidth}
        \centering
        \includegraphics[width=\linewidth]{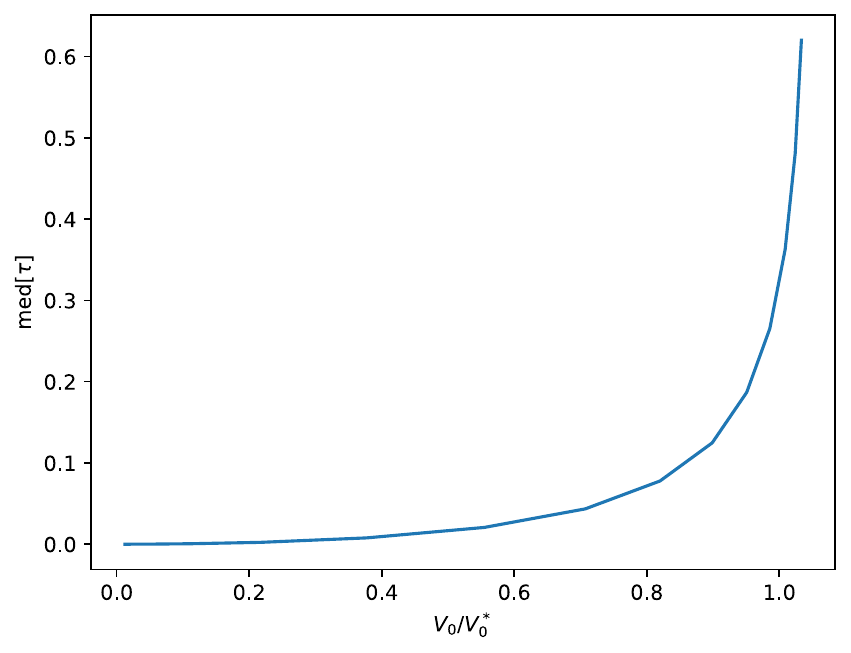}
        \caption{$\text{med}[\tau^*]$ vs. $V_0/V_0^*$}
        \label{fig:welfare_v_medtime_calibration_2}
    \end{subfigure}
    \caption{The effects of changing $V_0$ when $B_n =$ 0.}
\end{figure}

\section{Optimal Sampling Strategies under General Covariance Structures}\label{sec:general_cov_structures}

Although Assumption \ref{asm-1}(iv) enables a substantial simplification of the sampling strategy, many of our findings remain valid even without this condition. In this section, we examine the optimal sampling strategy and the optimal stopping rule under more general covariance structures.

Let $\mathcal{N}(\bm{\tilde\mu}_0, \tilde\Sigma)$ be the prior over the transformed treatment means $\bm{\tilde{\mu}} \coloneq (\mu_1/\sigma_1, -\mu_0/\sigma_0)$, and define 
\[
    \tilde{\Sigma}_{11} := \Sigma_{11}/\sigma_1^2,\ \tilde{\Sigma}_{00} :=  \Sigma_{00}/\sigma_0^2,\ \tilde{\Sigma}_{01} := \Sigma_{01}/(\sigma_0\sigma_1).
\]
As in \cite{liang-mu-syrgkanis-ecta-2022}, the prior variance $\tilde\Sigma$ is required to satisfy $\sigma_1(\tilde\Sigma_{11} + \tilde\Sigma_{10}) + \sigma_0(\tilde\Sigma_{01} + \tilde\Sigma_{00}) \geq 0$.

Let $\text{cov}_1 = \sigma_1(\tilde\Sigma_{11} + \tilde\Sigma_{01})$ and $\text{cov}_0 = \sigma_0(\tilde\Sigma_{00} + \tilde\Sigma_{10})$. Assume without loss of generality that $\text{cov}_1 \ge \text{cov}_0$, and set
\[
    t^* \coloneq \frac{\text{cov}_1 - \text{cov}_0}{\sigma_0\text{det}(\tilde\Sigma)}.
\]
We can characterize the optimal sampling strategy and the resulting posterior variance by employing similar arguments as in \citet[Lemma 11]{liang-mu-syrgkanis-ecta-2022}.

\begin{thm}\label{thm:Sampling-strategy-general-covariance}
    Assume that Assumptions 1(i)-(iii) hold, and that the Gaussian prior satisfies $\textup{cov}_1 + \textup{cov}_0 \geq 0$, along with $\textup{cov}_1 \geq \textup{cov}_0$. Then:
    \begin{enumerate}[(i)]
        \item Theorem \ref{thm:Sampling_strategy} continues to hold, with the optimal sampling strategy now being
            \[
            q^*_1(t) = \begin{dcases}
                1,\ &\text{if } t \leq t^*, \\
                \frac{\sigma_1}{\sigma_0 + \sigma_1},\ &\text{if }t > t^*.
            \end{dcases}
             \]
        Under $\bm{q}^*$, the posterior variance, $\varrho_t^*$, of $\mu_1 - \mu_0$ can be expressed as
            \begin{align*}
                \varrho_t^* \coloneq 
                \begin{dcases}
                \frac{\sigma_1^2\tilde\Sigma_{11} + \sigma_0^2\tilde\Sigma_{00} + 2\sigma_0\sigma_1\tilde\Sigma_{01} + \sigma_0^2\textup{det}(\tilde\Sigma)t}{1 + \tilde\Sigma_{11}t}\ &\text{if }t\leq t^*; \\
                \frac{(\sigma_0 + \sigma_1)^2\textup{det}(\tilde\Sigma)}{\tilde\Sigma_{11} + \tilde\Sigma_{00} - 2\tilde\Sigma_{01} + \textup{det}(\tilde\Sigma)t}\ &\text{if }t > t^*,
                \end{dcases}
            \end{align*}
        and the posterior mean $m_t$ of $\mu_1 - \mu_0$ evolves as:
        \[
             m_t := \mathbb{E}[\mu_1-\mu_0|\mathcal{F}_t] = \mu_1^0 - \mu_0^0 + \int_0^t \sqrt{-\frac{d\varrho_s^*}{ds}}\ dW(s),
        \] 
        where $W(\cdot)$ represents standard Brownian motion. 
    \item If, furthermore, Assumptions 2 and 3 apply, then all the results in Theorem \ref{thm:optimal-stopping-rule} also continue to hold.
\end{enumerate}    
\end{thm}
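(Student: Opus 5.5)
The proof will reuse the three-step architecture of the proof of Theorem \ref{thm:Sampling_strategy}; Assumption \ref{asm-1}(iv) enters there only in identifying the variance-minimizing allocation. The first task is to show that the allocation $\bm{q}^*$ in the statement minimizes the posterior variance of $\mu_1-\mu_0$ at every $t$, uniformly over all allocation processes $\bm{q}$.

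For a deterministic allocation $(q_1,q_0)$, Gaussian conjugacy gives the posterior precision of $\bm{\tilde\mu}$ as $\tilde\Sigma^{-1}+\mathrm{diag}(q_1,q_0)$. The posterior variance of $\mu_1-\mu_0=\sigma_1\tilde\mu_1+\sigma_0\tilde\mu_0$ is then the quadratic form $v(q_1,q_0)=w^\intercal(\tilde\Sigma^{-1}+\mathrm{diag}(q_1,q_0))^{-1}w$ with $w=(\sigma_1,\sigma_0)^\intercal$. This form is convex in $(q_1,q_0)$, since matrix inversion is operator convex.

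Minimizing $v$ over the simplex $q_1+q_0=t$ is a two-variable KKT problem, following \citet[Lemma 11]{liang-mu-syrgkanis-ecta-2022}. The marginal variance reduction from sampling arm $a$ is proportional to the squared posterior covariance between $\tilde\mu_a$ and $\mu_1-\mu_0$, i.e.\ to $\mathrm{cov}_a^2$ at $t=0$. When $\mathrm{cov}_1\ge \mathrm{cov}_0$ and $\mathrm{cov}_1+\mathrm{cov}_0\ge0$, arm $1$ is strictly better at first. These two posterior covariances equalize exactly at $t^*$, and from then on the first-order condition keeps them equal, which holds only under increments in the Neyman ratio $\sigma_1:\sigma_0$.

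Substituting back into $v$ gives the two displayed expressions for $\varrho_t^*$. The required checks are continuity at $t^*$, agreement with the single-arm Kalman update for $t\le t^*$, and agreement with the Neyman formula for $t>t^*$. Because the path $t\mapsto \bm{q}^*(t)$ is monotone and each $\bm{q}^*(t)$ is pointwise optimal, the greedy path attains $\min v$ for every $t$ simultaneously.

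The step I expect to be the main obstacle is extending this minimum to adaptive allocations. The argument is to condition on $\mathcal{F}_t^{\bm{q}}$. Under a Gaussian prior with Gaussian signals, the posterior covariance is a deterministic function of the realized $(q_1(t),q_0(t))$ alone, whatever the path of assignments, so $\varrho_t\ge v(q_1(t),q_0(t))\ge \varrho_t^*$ pathwise. This is precisely the inequality $\varrho_t\ge\varrho_t^*$ that Step 1 of the proof of Theorem \ref{thm:Sampling_strategy} used.

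From there, Step 1 of that proof goes through verbatim. The Dambis--Dubins--Schwarz representation $B^*_{\varrho_0-\varrho_t^*}=m_t$ holds, and the time change $T^*(\cdot)$ is still strictly increasing because $\varrho_t^*$ is strictly decreasing. Hence the domination argument shows that $\bm{q}^*$ solves the dual for every $\lambda$. The representation $m_t=m_0+\int_0^t\sqrt{-d\varrho^*_s/ds}\,dW(s)$ follows from standard filtering: $m_t$ is a continuous Gaussian martingale whose quadratic variation is $\varrho_0-\varrho_t^*$.

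Steps 2 and 3 of the proof of Theorem \ref{thm:Sampling_strategy}, which give the Sion minimax argument on $\mathcal{T}_M$ and the reduction to bounded expected stopping times, used only Assumption \ref{asm-1}(i)--(iii). They therefore transfer unchanged and yield the zero duality gap.

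For part (ii), I would time-change by $\rho=\varrho_0-\varrho_t^*$ so that $m$ becomes a standard Brownian motion in $\rho$. The cost then becomes $c\,T^*(\rho)$ for a new increasing convex-type clock, and each ingredient of Lemma \ref{lem:lemsupp} and Theorem \ref{thm:optimal-stopping-rule} needs to be rechecked under it:
\begin{itemize}
\item Lemma \ref{lem:alpha_equivalent_bounds}, the monotonicity and comparative statics in $m$, $\lambda$ and $B$, and the reflection argument in part (iii) use only the martingale property of $m_t$ and the form of the payoffs, so they carry over directly.
\item Monotonicity in $t$ and the time-shift property (Lemma \ref{lem:lemsupp}(v),(vi)) hold because, after $t^*$, the continuation problem from $(t,m)$ is again a Neyman problem with prior variance $\varrho_t^*$. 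Before $t^*$, I would instead use that $-d\varrho^*_s/ds$ is decreasing in $s$, so later information is costlier per unit of variance.
\item The Lipschitz bound in Lemma \ref{lem:Lip_continuity} is the delicate point. I would apply it only on $[t^*,\infty)$ via the time shift, and on $[0,t^*]$ argue directly from the Lipschitz dependence of $\varrho_t^*$ on $t$.
\end{itemize}
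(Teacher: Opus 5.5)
Your plan follows the paper's route. You take the uniformly variance-minimizing path from \citet{liang-mu-syrgkanis-ecta-2022}, rerun Steps 1--3 of the proof of Theorem \ref{thm:Sampling_strategy}, and for part (ii) time-change and reuse Theorem \ref{thm:optimal-stopping-rule}, with monotonicity of $V$ in $t$ as the new ingredient. Your pathwise reduction $\varrho_t=v(q_1(t),q_0(t))\ge\varrho^*_t$ is exactly what Step 1 needs.

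In two places you are more careful than the paper. First, the argument of \citet[Lemma 2(vi)]{fudenberg-strack-strzalecki2018} needs the clock $\varsigma$ to be convex, i.e.\ $\upsilon_t^2:=-d\varrho^*_t/dt$ decreasing. The paper only notes that $\varsigma$ is strictly increasing and continuous, which does not suffice by itself. Here $\upsilon_t^2=\mathrm{cov}_1^2/(1+\tilde\Sigma_{11}t)^2$ for $t\le t^*$ and $(\sigma_0+\sigma_1)^2\det(\tilde\Sigma)^2/(\tilde\Sigma_{11}+\tilde\Sigma_{00}-2\tilde\Sigma_{01}+\det(\tilde\Sigma)t)^2$ for $t>t^*$. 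These agree at $t^*$ because the two posterior covariances coincide there. So $\upsilon$ is continuous and decreasing on all of $[0,\infty)$, one argument covers every pair $t<t'$ (including pairs straddling $t^*$), and your split at $t^*$ is unnecessary. Second, your KKT computation forces $\mathrm{cov}_a$ to mean $\mathrm{Cov}(\tilde\mu_a,\mu_1-\mu_0)$, i.e.\ $\sigma_1\tilde\Sigma_{11}+\sigma_0\tilde\Sigma_{01}$ and $\sigma_0\tilde\Sigma_{00}+\sigma_1\tilde\Sigma_{01}$. With these, $t^*$, $\varrho^*_t$ and $\varsigma$ come out as displayed. They differ from the printed $\sigma_1(\tilde\Sigma_{11}+\tilde\Sigma_{01})$ and $\sigma_0(\tilde\Sigma_{00}+\tilde\Sigma_{10})$ unless $\sigma_1=\sigma_0$ (only the sums agree), so state the identification explicitly.

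The step that is not yet a proof, in your plan and silently in the paper's, is Lipschitz continuity of $b^-$ on $[0,t^*]$. This is Theorem \ref{thm:optimal-stopping-rule}(iv), and part (iii) also uses it to locate the time at which $b^-=-B/\lambda$.

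\begin{itemize}
\item \emph{Why the existing argument fails.} The proof of Lemma \ref{lem:Lip_continuity} uses Lemma \ref{lem:boundaryscaling}(i)--(ii) to identify the time-$\epsilon$ problem with a rescaled copy of the time-$0$ problem. That exact self-similarity of the Neyman clock fails before $t^*$, since the switch time does not rescale with the shift.
\item \emph{Why your fix is not enough.} Lipschitz dependence of $\varrho^*_t$ on $t$ is an additive statement about the clock. It does not give the multiplicative control of the boundary that is needed.
\end{itemize}

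What works is a log-Lipschitz bound on the volatility. On each of $[0,t^*]$ and $[t^*,\infty)$, $\upsilon_t$ has the form $a/(b+t)$ with $b>0$. Hence $K:=\sup_t|d\log\upsilon_t/dt|<\infty$ and $\upsilon_{t+\epsilon+s}\ge\beta\,\upsilon_{t+s}$ with $\beta=e^{-K\epsilon}$. Write $V(m;\upsilon,c,B)$ for the time-$0$ value under volatility path $\upsilon$, and combine three facts:
\begin{enumerate}
\item pointwise larger volatility weakly raises the value, by the time change;
\item the exact scaling $V(m;\beta\upsilon,c,B)=\beta V(m/\beta;\upsilon,c/\beta,B/\beta)$;
\item the value inequality in the proof of Lemma \ref{lem:boundaryscaling}(iii), which uses only that the volatility is decreasing.
\end{enumerate}
Together these give $V(t+\epsilon,m;B)\ge\beta^2V(t,m/\beta^2;B/\beta^2)$. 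By Theorem \ref{thm:optimal-stopping-rule}(vi) it follows that $b^-(t+\epsilon;B)\le\beta^2b^-(t;B/\beta^2)\le\beta^2b^-(t;B)$. Hence $0\le b^-(t+\epsilon)-b^-(t)\le 2K\epsilon\,|b^-(0)|$ on all of $[0,\infty)$. In the Neyman case $K=\sigma^{-2}\varrho_0$, which recovers the paper's constant.
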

\begin{proof}
By construction (see \citealt{liang-mu-syrgkanis-ecta-2022}, Theorem \ref{thm:Sampling_strategy}), the proposed sampling strategy induces the lowest posterior variance of any strategy at every time $t$. Part (i) of this result then follows by applying the same arguments as in the proof of our Theorem \ref{thm:Sampling_strategy}.

The modification of the optimal sampling strategy influences the optimal stopping rule solely via its impact on the quadratic variation of $m_t$. In fact, the majority of the conclusions in Theorem \ref{thm:optimal-stopping-rule} do not depend on the specific form of $m_t$. The only step that requires an extra argument is proving that $b^+(\cdot)$ and $|b^-(\cdot)|$ are decreasing in $t$. As in the proof of Theorem \ref{thm:optimal-stopping-rule}(ii), this would in turn follow once we establish that the value function $V(t,m)$ is decreasing in $t$.

To this end, let $\varsigma(\rho)$ be such that
\[
    \rho = \varrho_0 - \varrho_{\varsigma(\rho)}^*.
\]
Using part (i) of this theorem, we obtain
\[
    \varsigma(\rho) = \begin{dcases}
        \frac{\rho}{\text{cov}_1^2 - \tilde\Sigma_{11}\rho}, &\text{if }t \leq t^*,\\
        \frac{(\sigma_0 + \sigma_1)^2}{\sigma_0^2\tilde\Sigma_{00} + \sigma_1^2\tilde\Sigma_{11} + 2\sigma_0\sigma_1\tilde\Sigma_{01} - \rho} - \frac{\tilde\Sigma_{00} + \tilde\Sigma_{11} - 2\tilde\Sigma_{01}}{\tilde\Sigma_{00}\tilde\Sigma_{11} - \tilde\Sigma^2_{01}},\ &\text{if }t > t^{*}.
    \end{dcases}
\]
It is easily verified that $\varsigma(\rho)$ is strictly increasing and continuous, so we can apply the same argument as in the proof of  \citet[Lemma 2(vi)]{fudenberg-strack-strzalecki2018} to show that $V(t,m)$ is decreasing in $t$.
\end{proof}
 
\end{document}